\theoremstyle{definition}
\newtheorem{defn}{\protect\definitionname}
\theoremstyle{plain}
\newtheorem{lem}{\protect\lemmaname}
\theoremstyle{plain}
\newtheorem{prop}{\protect\propositionname}
\theoremstyle{plain}
\newtheorem{thm}{\protect\theoremname}
\theoremstyle{plain}
\newtheorem{assumption}{\protect\assumptionname}
\theoremstyle{plain}
\newtheorem{cor}{\protect\corollaryname}
\theoremstyle{definition}
 \newtheorem{example}{\protect\examplename}
\title{Data-Driven Persuasion}
\author{Maxwell Rosenthal\thanks{Georgia Institute of Technology.  Email: \href{mailto:rosenthal@gatech.edu}{rosenthal@gatech.edu.}}}
\date{July 14, 2025}
\providecommand{\assumptionname}{Assumption}
\providecommand{\corollaryname}{Corollary}
\providecommand{\definitionname}{Definition}
\providecommand{\examplename}{Example}
\providecommand{\lemmaname}{Lemma}
\providecommand{\propositionname}{Proposition}
\providecommand{\theoremname}{Theorem}
\begin{document}
\selectlanguage{american}%
\maketitle

\begin{abstract}
This paper develops a data-driven approach to Bayesian persuasion.  The receiver is privately informed about the prior distribution of the state of the world, the sender knows the receiver's preferences but does not know the distribution of the state variable, and the sender's payoffs depend on the receiver's action but not on the state.
Prior to interacting with the receiver, the sender observes the distribution of actions taken by a population of decision makers who share the receiver's preferences in best response to an unobserved distribution of messages generated by an unknown and potentially heterogeneous signal.  The sender views any prior that rationalizes this data as plausible and seeks a signal that maximizes her worst-case payoff against the set of all such distributions.  We show positively that the two-state many-action problem has a saddle point and negatively that the two-action many-state problem does not. In the former case, we identify adversarial priors and optimal signals. In the latter, we characterize the set of robustly optimal Blackwell experiments.

\end{abstract}

\medskip
\noindent
\textit{JEL classification}: D81, D82, D86

\medskip
\noindent \textit{Keywords}:  persuasion, robustness, data-driven mechanism design
\selectlanguage{english}%

\section{Introduction}

A prosecutor wishes to secure a conviction, the defendant is either
innocent or guilty, and the judge is indifferent between type one
and type two errors. As in the original formulation of both the example
and the general persuasion problem (\citet*{kamenica2011bayesian}),
the judge is experienced and knows the prior probability of guilt.
Unlike in the standard formulation, the prosecutor is inexperienced
and does not. Instead, she knows only that the court convicts $40\%$
of defendants.

What can the prosecutor infer about the unconditional probability
of guilt? At one extreme, each of the court's convictions might have
been won by a zealous district attorney who acted to maximize the
unconditional probability of conviction. In that case, because half
of the convicted defendants and none of the acquitted defendants were
guilty, the prior probability of guilt is $20\%$. At the other extreme,
each conviction might be due to a reluctant district attorney who
sought acquittals for as many defendants as possible. In that case,
all of the convicted defendants and half of the acquitted defendants
were guilty, and the prior probability of guilt is $70$\%.

What should the prosecutor do with this information? In light of her
preference for convictions, the robustly optimal communication strategy
is to recommend the court convict every guilty defendant and $25\%$
of innocent defendants, as if she were certain that each previous
case was brought by a zealous district attorney. This signal is guaranteed
to recover a $40\%$ conviction rate and will yield convictions more
frequently if a portion of the court's earlier decisions were in response
to fair-minded prosecutions.

This paper proposes a model of robust data-driven persuasion in which
the receiver knows the distribution of the state of the world but
the sender does not. There are finitely many actions, finitely many
states, and the sender's payoff depends on the former but not on the
latter. Prior to interacting with the receiver, the sender observes
the distribution of actions taken by a population of decision makers
in best response to a distribution of messages generated by an unknown
signal. Both the messages generated by these signals and the signals
themselves are unobserved and heterogeneous, the decision makers'
preferences are known and homogeneous, and the receiver is a representative
member of the population. With full control over what is communicated
to the receiver, the sender seeks a signal that maximizes her worst
case payoff against the set of prior distributions that rationalize
the action data.

Our first task is to identify the set of priors that are consistent
with the sender's observations. In the language of the motivating
example, our notion of rationalization is general enough to allow
for situations in which the court sometimes hears zealous arguments,
sometimes hears reluctant arguments, and sometimes hears arguments
that are neither; the court fields recommendations that are more varied
than simply ``acquit'' or ``convict''; and an on-the-fence court
convicts some defendants and acquits others. While this flexibility
might in principle complicate our analysis, we show to the contrary
that a given prior rationalizes the action data if and only if it
does so via a fixed signal in which each of the district attorneys
in the prosecutor's data recommended convictions and acquittals with
homogeneous frequencies and in an incentive compatible manner. With
that reduction in hand, we use the supporting hyperplane theorem to
characterize the set of rationalizing posteriors and Bayes plausibility
(\citet{kamenica2011bayesian}) to combine rationalizing posteriors
into rationalizing priors.

We initiate our analysis of the sender's problem itself with a formal
analysis of the two-action two-state benchmark problem. As in the
motivating example, the lowest\footnote{In the two-action two-state problem, we call the state in which the
receiver prefers the sender's disfavored action the \emph{low state
}and the state in which he prefers the sender's favored action the
\emph{high state}. We order priors by the probability they assign
to the latter.} prior that rationalizes the data and the optimal signal for the counterfactual
problem in which the sender knows the state of the world is distributed
according to that prior form a saddle point of the sender's problem.
Because improvements to the prior probability of the high state increase
the frequency with which the signal recommends the sender's preferred
action and also relax the incentive compatibility constraint associated
with that recommendation, their effect is to unambiguously improve
the sender's payoff. The worst-case problem is as if the sender knew
the state of the world were distributed according to the lowest rationalizing
prior.

Our positive result for the two-state two-action problem problem extends
readily to the two-state many-action problem as long as the sender's
payoffs are increasing in the receiver's action.\footnote{In the two-state problem, we label the states \emph{low }and \emph{high
}arbitrarily and enumerate the receiver's actions in ascending order
according to their payoff in the high state. The substance of our
monotonicity assumption lies in requiring that the sender's utility
is increasing in one of the two potential labelings of the receiver's
actions.} The lowest prior in the identified set is adversarial; the signal
designed for that prior is robustly optimal; and improvements to the
prior probability of the high state (i) strictly increase the probability
with which the optimal signal sends the higher of its two messages
and (ii) weakly increase the receiver's best response to both of those
messages. By monotonicity of the sender's preferences, increasing
the prior improves the sender's payoff.

In contrast to the two-state many-action problem, our positive result
for the two-by-two problem does not extend to the two-action many-state
problem under any non-trivial specification of sender and receiver
preferences. Unlike in the two-state problem, the set of rationalizing
priors is multidimensional and its lower boundary contains a multiplicity
of distributions that are not ordered in any relevant sense. While
the optimal signal in the known-prior counterfactual recovers the
action distribution observed in the data for each one of the boundary
priors, any signal that is optimal in one of those counterfactuals
is suboptimal in all of the others. No signal is guaranteed to perform
as well as the unobserved signal that generated the data and no prior
is adversarial.

The non-existence of an adversarial prior in the two-action many-state
problem implies that exact knowledge of the true distribution of the
state variable is of guaranteed value to the sender. In fact, less
information suffices. Formally, we (i) call a Blackwell experiment
\emph{reliable }if its output is guaranteed to induce a saddle point
in the sender's problem and (ii) completely characterize the set of
reliable experiments as those that are \emph{ordered }in terms of
the one-dimensional structure that their outcomes impose on the set
of rationalizing priors. We interpret reliability as a robust optimality
criterion and provide a formal justification for that interpretation
in the body; conversely, ordered experiments are those that satisfy
a pair of restrictions to the null space of the stochastic matrix
that represents the experiment.

Blackwell experiments play a different role in our environment than
they do in their traditional setting, wherein the decision maker receives
a single message and uses their knowledge of its conditional distribution
to update their prior beliefs about the state of the world. In our
model, the sender instead observes the entire distribution of messages
and uses her knowledge of their conditional distribution to infer
restrictions on the prior distribution of the state variable. Within
that framework, the simplest class of experiments that satisfy our
order criterion are those that pin down the probabilities associated
with each of the states in which the receiver strictly prefers the
sender's disfavored action. More generally, while we do characterize
the reliability of \emph{simple }experiments that (i) partition the
state space into cells and (ii) reveal the probabilities of each cell
in the partition, reliable experiments need not be simple.

The paper is organized as follows. In order to fix ideas, we lay out
the model in Section \ref{Section: model} before reviewing the literature
in Section \ref{Section: literature review}. In Section \ref{SECTION: identification}
we study the identification of the sender's problem. Next, in Sections
\ref{Section: two by two}--\ref{Section: two actions and many states}
we analyze the two-action two-state benchmark problem, the two-state
many-action problem, and the two-action many-state problem, respectively.
We conclude in Section \ref{Section: conclusions} and present technical
material in the Appendix that follows.

\section{\label{Section: model}Model}

Given a finite set $X$, we write $\Delta(X)$ for the set of probability
distributions on $X$ and $\delta(x)$ for the degenerate distribution
at $x$. We interpret elements of $\Delta(X)$ variously as measures
or as real vectors, give the real numbers and their products the Euclidean
topology, and write $\varepsilon$ to indicate an arbitrarily small
strictly positive real number. Finally, given a matrix $M_{m\times n}$,
we write $M_{i}\equiv(M_{1i},...,M_{mi})$ for its columns.

\subsection{Action, states, and payoffs}

Actions are elements $a$ of finite set $A$, states are elements
$\omega$ of finite set $\Omega$, and there are at least two actions
and at least two states. The receiver's payoffs are described by utility
function $u:A\times\Omega\to\mathbb{R}$, the sender's payoffs are
described by state-independent and injective utility function $v:A\to\mathbb{R}$,
and the state of the world is distributed according to prior distribution
$\mu\in\Delta(\Omega)$. The map $\phi:A\to\mathbb{R}^{\vert\Omega\vert}$
defined by $a\mapsto(u(a\vert\omega))_{\omega\in\Omega}$ is injective.

\subsection{Signals}

A \emph{signal} is a pair $(\pi,S)$, where $S$ is a finite set of
\emph{messages} and the map $\pi:\Omega\to\Delta(S)$ assigns a distribution
of messages $\pi(\cdot\vert\omega)$ to each state $\omega$. Given
a prior $\mu\in\Delta(\Omega)$ and a signal $(\pi,S)$, we write
\begin{align*}
\mu^{s}(\omega) & \equiv\frac{\pi(s\vert\omega)\mu(\omega)}{\sum_{\omega'}\pi(s\vert\omega')\mu(\omega')}
\end{align*}
for the posterior induced by message $s$ given prior $\mu\in\Delta(\Omega)$.
Given posterior $\mu^{s}\in\Delta(\Omega)$, we write
\begin{align*}
a^{*}(\mu^{s}) & \equiv\underset{a\in A}{\arg\max}\;\sum_{\omega}u(a\vert\omega)\mu^{s}(\omega), & \widehat{a}(\mu^{s}) & \equiv\underset{a\in a^{*}(\mu^{s})}{\arg\max}\;v(a)
\end{align*}
for the set of receiver-optimal actions and the sender's preferred
element of that set, respectively, noting that $\widehat{a}$ is single-valued
per our hypothesis that the sender's utility $v$ is injective. Ties
are broken in favor of the sender and her expected utility from signal
$(\pi,S)$ and prior $\mu$ is thus
\[
V((\pi,S)\vert\mu)\equiv\sum_{s}\sum_{\omega}v(\widehat{a}(\mu^{s}))\pi(s\vert\omega)\mu(\omega).
\]
If $(\pi,S)$ is a \emph{direct }signal with $S\subset A$, we sometimes
interpret the condition $a\in a^{*}(\mu^{a})$ as an \emph{incentive
compatibility }constraint. More generally, in places where the identity
of the signal $(\pi,S)$ should be understood, we write $S^{a}(\mu)\equiv\{s\in S\vert a=\widehat{a}(\mu^{s})\}$
for the set of messages under which the receiver takes action $a$
given prior $\mu$. 

\subsection{Rationalizing priors}

The prior distribution of the state of the world is known to the receiver
but not to the sender. Instead, the sender --- who does know the
receiver's utility $u$ --- observes the distribution of best responses
by a population of receivers to some heterogeneous and unobserved
signal. She entertains the possibility that the prior might be any
that rationalize the data.
\begin{defn}
Prior $\mu_{}$ \emph{homogeneously rationalizes }action distribution
$\alpha\in\Delta(A)$ if there exists a signal $(\pi,S)$ and a choice
rule $C:S\to\Delta(A)$ such that (i) $a\in a^{*}(\mu^{s})$ for all
$a,s$ with $a\in\text{supp}(C(s))$ and (ii) $\sum_{\omega}\sum_{s}C(a\vert s)\pi(s\vert\omega)\mu(\omega)=\alpha(a)$
for all $a\in\text{supp}(\alpha)$.
\end{defn}
Prior $\mu$ homogeneously rationalizes action distribution $\alpha$
if there exists a fixed signal $(\pi,S)$ and a tiebreaking rule $C$
such that $\alpha$ is the distribution of best responses to $(\pi,S)$
when the state of the world is distributed according to $\mu$ and
ties are broken according to $C$.
\begin{defn}
Prior $\mu$ \emph{rationalizes }action distribution $\alpha\in\Delta(A)$
if there exists a finite index $g=1,...,G$, a distribution $\gamma\in\Delta(\{1,...,G\})$,
and a collection of action distributions $\alpha^{1},...,\alpha^{G}\in\Delta(A)$
such that (i) $\mu$ homogeneously rationalizes $\alpha^{g}$ for
each $g$ and (ii) $\alpha=\sum_{g}\gamma^{g}\alpha^{g}$.
\end{defn}
More generally, $\mu$ rationalizes $\alpha$ if there exists a distribution
$\gamma$ of signals $(\pi^{g},S^{g})$ such that $\alpha$ is the
average distribution of best responses to $(\pi^{g},S^{g})$ when
the state of the world is distributed according to $\mu$. We write
$\mathcal{P}(\alpha)$ for the set of all such priors.

\subsection{The sender's problem}

The \emph{sender's problem }
\[
\max_{(\pi,S)}\;\min_{\mu\in\mathcal{P}(\alpha)}\;V((\pi,S)\vert\mu)
\]
is to maximize her \emph{guarantee }$\min_{\mu\in\mathcal{P}(\alpha)}\;V((\pi,S)\vert\mu)$
against the set of priors that rationalize the data $\alpha$. Our
approach to analyzing this problem is to identify \emph{saddle points
}$((\pi^{*},S^{*}),\mu^{*})$ defined by the inequalities
\begin{align*}
\forall(\pi,S)\;V((\pi^{*},S^{*})\vert\mu^{*}) & \geq V((\pi,S)\vert\mu^{*}), & \forall\mu\in\mathcal{P}(\alpha)\;V((\pi^{*},S^{*})\vert\mu^ {}) & \geq V(\pi^{*},S^{*})\vert\mu^{*}).
\end{align*}
If the sender's problem has saddle point $((\pi^{*},S^{*}),\mu^{*})$
then $(\pi^{*},S^{*})$ is \emph{robustly optimal} and its guarantee
is the same as in the counterfactual environment in which the state
is known to be distributed according to the \emph{adversarial} prior
$\mu^{*}$. Conversely, if the sender's problem does not have a saddle
point then the value of the sender's problem is lower than the value
of the counterfactual problem in which the sender knows that the prior
is $\mu$, for any $\mu$ that rationalizes the data. Accordingly,
knowledge of the distribution of the state of the world is of guaranteed
value to the sender if and only if her problem lacks a saddle point.

\section{\label{Section: literature review}Literature review}

This paper contributes to a number of partially overlapping segments
of the extensive literature on persuasion and information design. 

\paragraph{Robust persuasion}

We share our emphasis on uncertainty and worst-case solution concepts
in persuasion problems with a handful of recent studies. \citet*{dworczak2022robust}
study an environment in which there are common prior beliefs about
the distribution of the state of the world and the sender is concerned
about the presence of unknown background signals. In contrast, our
sender has complete control over what is communicated to the receiver.

Elsewhere, \citet{kosterina2022persuasion} studies a problem in which
the sender and the receiver have different beliefs about the distribution
of the state of the world and the sender is uncertain about the receiver's
beliefs. We differ in two important ways. First, while the uncertainty
set in their model is abstract, the uncertainty set in our model it
is derived from choice data. Second, while the sender in both models
is uncertain about the receiver's beliefs, the sender in our model
believes the receiver knows the true distribution of the state of
the world. Accordingly, while our sender is uncertain about both the
distribution of messages sent to the receiver and the receiver's response
to those messages, their sender is only uncertain about the latter.

Less closely related are two studies in which a regret-minimizing
sender is unaware of the receiver's preferences (\citet*{castiglioni2020online,babichenko2022regret})
and another in which a regret-minimizing sender gradually learns about
the state variable via direct observation of its realizations (\citet*{zu2024learning}).

\paragraph{Counterfactuals in games with unknown information structures}

A number of recent studies conduct counterfactual analyses in environments
with unobserved information structures. First, \citet*{bergemann2022counterfactuals}
consider a setting in which an analyst observes the behavior of a
group of agents playing a game but does not observe either the distribution
of the state variable or the information held by the decision makers.
Aside from our focus on individual rather than group behavior and
our use of worst-case rather than Bayesian solution concepts, in our
model the sender again has full control over the receiver's information
while in theirs the information structure in the observed environment
persists in the counterfactual.

Second, \citet*{syrgkanis2017inference} study counterfactual predictions
in auctions with unknown information structures. Aside from differences
in setting, we focus our counterfactual analysis on worst-case optimal
mechanisms, while the authors take their counterfactual mechanisms
as given and bound their performance. 

Third, \citet*{magnolfi2023discrete}, develop methods for estimation
and counterfactual analysis in games with weak assumptions about information
and apply those methods to a study of real-world supermarket entry.

\paragraph{Inference about unknown information in decision problems}

Our analysis of the identification of the sender's model is related
to a pair of recent papers that study the identification of information
in economic settings. First, \citet*{doval2025revealed} study the
identification of a single-agent decision problem in a setting much
like our own. Identification is a preliminary component of our analysis,
rather than its central focus, and our results on that matter are
included primarily for the sake of completion.

Second, \citet*{libgober2025wisdom} studies an environment in which
the analyst knows the posterior beliefs of a group of agents but does
not know either the prior distribution of the state world nor the
information structure that induced the update from prior to posterior.
Because inference about those two objects is the primary objective
of that paper, we are differentiated not only by the number of agents
but also most importantly by our emphasis on the design of robustly
optimal counterfactual information structures.

\section{\label{SECTION: identification}Identification}

Before proceeding to our analysis of the sender's problem, it will
be useful to develop an understanding of the identification problem
that precedes it. As we have noted in the introduction, our definition
of rationalization allows for the action data to be generated in response
to heterogeneous signals, signals that send more messages than actions,
and complex tiebreaking rules. Our first step is to develop an equivalence
with a simpler definition.
\begin{lem}
\label{Lemma: straightforward rat}The following three statements
are equivalent:
\begin{enumerate}
\item[(i)] $\mu$ rationalizes $\alpha$;
\item[(ii)] $\mu$ homogeneously rationalizes $\alpha$; 
\item[(iii)] there exists a signal $(\pi,S)$ such that $S\subset A$, $a\in a^{*}(\mu^{a})$
for all $a\in S$, and $\sum_{\omega}\pi(a\vert\omega)\mu(\omega)=\alpha(a)$.
\end{enumerate}
\end{lem}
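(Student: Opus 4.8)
The plan is to close the cycle of implications (iii) $\Rightarrow$ (ii) $\Rightarrow$ (i) $\Rightarrow$ (iii); only the last link carries any weight. For (iii) $\Rightarrow$ (ii) I would take the direct signal from (iii) together with the degenerate choice rule $C(s)=\delta(s)$; then $\mathrm{supp}(C(s))=\{s\}$, so the obedience clause of homogeneous rationalization reads $s\in a^*(\mu^s)$, which is exactly the hypothesis $a\in a^*(\mu^a)$ of (iii), while the accounting clause $\sum_\omega\sum_s C(a\vert s)\pi(s\vert\omega)\mu(\omega)=\alpha(a)$ collapses to $\sum_\omega\pi(a\vert\omega)\mu(\omega)=\alpha(a)$. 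The step (ii) $\Rightarrow$ (i) is the special case $G=1$ of the definition. So the real content is that the most permissive notion, (i) --- heterogeneous signals, more messages than actions, stochastic tiebreaking --- can always be reproduced by a single obedient direct signal.

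For (i) $\Rightarrow$ (iii) I would pool everything in one pass. Given a rationalization via signals $(\pi^g,S^g)$, choice rules $C^g$, and weights $\gamma^g$, I would define a direct signal on $A$ by
\[
\pi'(a\vert\omega)\equiv\sum_g\gamma^g\sum_{s\in S^g}C^g(a\vert s)\,\pi^g(s\vert\omega).
\]
Because $\sum_aC^g(a\vert s)=1$, this is a genuine signal ($\sum_a\pi'(a\vert\omega)=\sum_g\gamma^g=1$), and the same cancellation yields $\sum_\omega\pi'(a\vert\omega)\mu(\omega)=\sum_g\gamma^g\alpha^g(a)=\alpha(a)$, the accounting clause of (iii); actions outside $\mathrm{supp}(\alpha)$ receive total mass zero (since all summands are nonnegative and the totals agree), so those messages can be dropped.

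The one point that needs verifying --- and what I expect to be the crux --- is that pooling does not break incentive compatibility, i.e. that $a\in a^*(\mu'^a)$ for each $a\in\mathrm{supp}(\alpha)$, where $\mu'^a$ is the posterior that $\pi'$ attaches to recommendation $a$. The mechanism that rescues the argument is that this posterior is a \emph{convex combination} of the original posteriors feeding into it. Writing $p^g_s\equiv\sum_\omega\pi^g(s\vert\omega)\mu(\omega)$ and $\mu^{g,s}$ for the posterior induced by $s$ under $(\pi^g,S^g)$, a direct computation gives
\[
\mu'^a=\sum_{g,s}\lambda^a_{g,s}\,\mu^{g,s},\qquad\lambda^a_{g,s}\equiv\frac{\gamma^gC^g(a\vert s)\,p^g_s}{\alpha(a)}\ge0,\quad\sum_{g,s}\lambda^a_{g,s}=1,
\]
with the weights vanishing whenever $p^g_s=0$, so no ill-defined posterior enters the average. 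Since the receiver's expected utility is linear in his belief and every $\mu^{g,s}$ carrying positive weight satisfies $a\in a^*(\mu^{g,s})$ by condition (i) of homogeneous rationalization, the optimality inequalities $\sum_\omega u(a\vert\omega)\mu^{g,s}(\omega)\ge\sum_\omega u(a'\vert\omega)\mu^{g,s}(\omega)$ hold termwise for every $a'$ and are preserved by the convex average, giving $a\in a^*(\mu'^a)$.

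In short, I expect the obstacle to be conceptual rather than computational: one must recognize that obedience is a family of \emph{linear} inequalities in the posterior and that the pooled posterior lies in the convex hull of posteriors already satisfying them, so the constraint set survives pooling. Everything else is bookkeeping, and finiteness of $A$, $\Omega$, and each $S^g$ keeps the sums finite and removes any measure-theoretic concerns.
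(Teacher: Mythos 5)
Your proposal is correct and uses essentially the same argument as the paper: both hinge on writing each posterior of the pooled direct signal as a convex combination of the original posteriors and invoking convexity of the set $\{\mu\,\vert\,a\in a^{*}(\mu)\}$ to preserve obedience. The only difference is organizational --- the paper aggregates in two stages (first collapsing each choice rule $C^g$ to get (ii)$\Rightarrow$(iii), then mixing over $g$ for (i)$\Rightarrow$(ii)), whereas you collapse both layers in a single pass.
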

Lemma \ref{Lemma: straightforward rat} establishes that both of our
own criteria are equivalent to rationalization via \emph{straightforward
signal} (\citet{kamenica2011bayesian}), and the simplicity of these
objects facilitates a parsimonious characterization of the identified
set via the geometry of the receiver's payoff vectors. Toward that
end, write $\Phi\subset\mathbb{R}^{n}$ for the free disposal convex
hull of the image of $\phi(A)$. For each action $a$, write
\begin{align*}
\mathcal{N}(a) & \equiv\{\eta:\Omega\to\mathbb{R}\vert\sum_{\omega}\eta(\omega)=1,\ensuremath{\eta\cdot\phi(a)\geq\eta\cdot x\text{ for all \ensuremath{x\in\Phi}}\},}\\
\mathcal{N^{>}}(a) & \equiv\{\eta:\Omega\to\mathbb{R}\vert\sum_{\omega}\eta(\omega)=1,\ensuremath{\eta\cdot\phi(a)>\eta\cdot x\text{ for all \ensuremath{x\in\Phi}}\}}
\end{align*}
for the exterior unit normal vectors that delineate the supporting
hyperplanes and strictly supporting hyperplanes to $\Phi$ at $\phi(a)$,
respectively, noting that the latter are a subset of the former. 

We make two comments about these sets. First, because the payoff set
$\Phi$ is unbounded below, the components of $\eta$ are nonnegative
for each normal vector $\eta$ in the unit cone $\mathcal{N}(a)$.
From that point of view, $\mathcal{N}$ is appropriately interpreted
as a correspondence from the set of actions $A$ into the set of probability
distributions $\Delta(\Omega)$. Second, with the first point in hand,
it is apparent that $\mathcal{N}(a)=\{\mu\in\Delta(\Omega)\vert a\in a^{*}(\mu)\}$.

\begin{prop}
\label{Proposition : Identification}Prior distribution $\mu$ rationalizes
action distribution $\alpha$ if and only if there exists a selection
$n$ from $\mathcal{N}$ such that $\mu(\cdot)=\sum_{a\in A}n(\cdot\vert a)\alpha(a).$
\end{prop}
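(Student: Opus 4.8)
The plan is to invoke Lemma \ref{Lemma: straightforward rat} to replace rationalization with its straightforward-signal characterization (iii), and then to recognize the displayed identity as Bayes plausibility for the posteriors induced by such a signal. Throughout I use the second comment following the definition of $\mathcal{N}$, namely that $\mathcal{N}(a)=\{\mu\in\Delta(\Omega)\vert a\in a^{*}(\mu)\}$, so that the requirement $n(\cdot\vert a)\in\mathcal{N}(a)$ is precisely the incentive compatibility constraint that $a$ be a best response under belief $n(\cdot\vert a)$. The two directions then become the two halves of the revelation-principle argument of \citet{kamenica2011bayesian}: posteriors induced by an obedient signal average back to the prior, and any Bayes-plausible family of obedient posteriors is realized by some signal.

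For the forward direction, I would take the straightforward signal $(\pi,S)$ supplied by part (iii) of Lemma \ref{Lemma: straightforward rat}. For each $a\in\text{supp}(\alpha)$ the denominator $\sum_{\omega}\pi(a\vert\omega)\mu(\omega)=\alpha(a)$ is strictly positive, so the posterior $\mu^{a}$ is well defined and the constraint $a\in a^{*}(\mu^{a})$ says exactly that $\mu^{a}\in\mathcal{N}(a)$. Setting $n(\cdot\vert a)\equiv\mu^{a}$ on $\text{supp}(\alpha)$ (and choosing $n(\cdot\vert a)$ to be any element of $\mathcal{N}(a)$ elsewhere) yields a selection, and the identity follows from Bayes plausibility: using $\mu^{a}(\omega)\alpha(a)=\pi(a\vert\omega)\mu(\omega)$ and $\sum_{a}\pi(a\vert\omega)=1$,
\[
\sum_{a}n(\omega\vert a)\alpha(a)=\sum_{a\in\text{supp}(\alpha)}\mu^{a}(\omega)\,\alpha(a)=\sum_{a}\pi(a\vert\omega)\mu(\omega)=\mu(\omega),
\]
where the off-support terms are immaterial because they carry weight $\alpha(a)=0$.

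For the converse I would reverse-engineer the joint law from a given selection $n$ satisfying $\mu(\cdot)=\sum_{a}n(\cdot\vert a)\alpha(a)$, setting $\pi(a\vert\omega)\equiv n(\omega\vert a)\alpha(a)/\mu(\omega)$ at every state with $\mu(\omega)>0$. The decomposition of $\mu$ gives $\sum_{a}\pi(a\vert\omega)=1$, so $\pi$ is a genuine signal with $S\subset A$; its posterior after message $a\in\text{supp}(\alpha)$ computes to $\mu^{a}=n(\cdot\vert a)\in\mathcal{N}(a)$, delivering incentive compatibility, and its message marginal is $\sum_{\omega}\pi(a\vert\omega)\mu(\omega)=\alpha(a)$. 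Condition (iii) of Lemma \ref{Lemma: straightforward rat} then certifies that $\mu$ rationalizes $\alpha$.

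The computations are mechanical once two boundary cases are dispatched, and the care required there is where I expect the only real friction. First, at null states $\mu(\omega)=0$ nonnegativity forces $n(\omega\vert a)\alpha(a)=0$ for every $a$, so such states enter neither the posteriors nor the marginals and $\pi(\cdot\vert\omega)$ may be assigned arbitrarily; I would record this explicitly so the reverse construction is well posed. Second, for actions outside $\text{supp}(\alpha)$ the weight $\alpha(a)=0$ makes the value of $n(\cdot\vert a)$ irrelevant to the identity, so the selection is pinned down only on the support—exactly where $\mathcal{N}(a)$ is automatically nonempty because $a$ is actually recommended—and I would phrase the selection requirement accordingly to avoid spurious issues with strictly dominated actions, for which $\mathcal{N}(a)$ is empty.
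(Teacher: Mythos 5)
Your proposal is correct and follows essentially the same route as the paper's own proof: both directions reduce to straightforward signals via Lemma \ref{Lemma: straightforward rat}, identify the incentive constraint $a\in a^{*}(\mu^{a})$ with membership $\mu^{a}\in\mathcal{N}(a)$, and pass between the selection and the signal through the Bayes-rule identity $\pi(a\vert\omega)=n(\omega\vert a)\alpha(a)/\mu(\omega)$. Your explicit treatment of null states and of actions outside $\text{supp}(\alpha)$ is slightly more careful than the paper's, but it is the same argument.
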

In addition to whatever guidance it might provide on identifying rationalizing
priors, Proposition \ref{Proposition : Identification} is informative
about both when the data is rationalizable at all and when it is rationalizable
by multiple priors. In order to state those results, it will be useful
to adapt some standard language to our context. Action $a\in A$ is
\emph{dominated }if there exists an action distribution $\gamma\in\Delta(A)$
such that $\sum_{b\in A}u(b\vert\omega)\gamma(b)>u(a\vert\omega)$
for all states $\omega$. More generally, action $a\in A$ is \emph{redundant
}if there exists an action distribution $\gamma\in\Delta(A)$ distinct
from $\delta(a)$ such that $\sum_{b\in A}u(b\vert\omega)\gamma(b)\geq u(a\vert\omega)$
for all states $\omega$. Mathematically, action $a$ is redundant
if and only if $\phi(a)$ is not an extreme point of $\Phi$. Economically,
$a$ is redundant if it is weakly dominated or a mixture of other
actions.

\begin{lem}
\label{Lemma: rationalizability existence}First, $\mathcal{N}(a)$
is empty if and only if action $a$ is dominated. Second, $\mathcal{N^{>}}(a)$
is empty if and only if action $a$ is redundant.
\end{lem}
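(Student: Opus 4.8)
The plan is to read both statements off the polyhedral geometry of $\Phi$. Write $n=\vert\Omega\vert$ and note that $\Phi=\mathrm{conv}(\phi(A))-\mathbb{R}^{n}_{\geq0}$ is a closed convex polyhedron whose recession cone is $-\mathbb{R}^{n}_{\geq0}$. The key bookkeeping step, already flagged in the text, is that any nonzero $\eta$ that is bounded above on $\Phi$ must have nonnegative coordinates---otherwise $\eta\cdot x\to+\infty$ along a disposal direction $-e_i$ with $\eta_i<0$---so after dividing by its (necessarily positive) coordinate sum it lies in $\Delta(\Omega)$. Consequently $\mathcal{N}(a)\neq\emptyset$ exactly when $\phi(a)$ admits a supporting hyperplane of $\Phi$, and $\mathcal{N}^{>}(a)\neq\emptyset$ exactly when $\phi(a)$ is an \emph{exposed} point of $\Phi$ (the unique maximizer over $\Phi$ of some linear functional). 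Both parts then become statements about the position of $\phi(a)$ relative to $\Phi$.

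For the first part, since $\phi(a)\in\Phi$ always, the supporting hyperplane theorem gives $\mathcal{N}(a)=\emptyset$ iff $\phi(a)\notin\partial\Phi$, i.e. iff $\phi(a)\in\mathrm{int}\,\Phi$. I would then show that $\phi(a)\in\mathrm{int}\,\Phi$ is equivalent to domination of $a$. Writing $\mathbf 1$ for the all-ones vector: if $a$ is dominated there is $y\in\mathrm{conv}(\phi(A))$ with $y-\phi(a)$ strictly positive in every coordinate, and then a small ball about $\phi(a)$ is coordinatewise below $y$ and hence contained in $\Phi$; conversely if $\phi(a)\in\mathrm{int}\,\Phi$ then $\phi(a)+\varepsilon\mathbf 1\in\Phi$ for small $\varepsilon>0$, so some $y\in\mathrm{conv}(\phi(A))$ dominates $\phi(a)+\varepsilon\mathbf 1\gg\phi(a)$, exhibiting a dominating mixture. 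Combining these equivalences yields $\mathcal{N}(a)=\emptyset$ iff $a$ is dominated.

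For the second part I would use the characterization of redundancy already recorded in the text: $a$ is redundant iff $\phi(a)$ is not an extreme point of $\Phi$. If $\phi(a)$ is not extreme, write $\phi(a)=\lambda x_1+(1-\lambda)x_2$ with $x_1\neq x_2$ in $\Phi$ and $\lambda\in(0,1)$; any $\eta$ supporting $\Phi$ at $\phi(a)$ then forces $\eta\cdot x_1=\eta\cdot x_2=\eta\cdot\phi(a)$, so no supporting functional isolates $\phi(a)$ and $\mathcal{N}^{>}(a)=\emptyset$. For the converse, if $\phi(a)$ is a vertex I would strictly separate the point $\{\phi(a)\}$ from the closed convex set $\mathrm{conv}(\{\phi(b):b\neq a\})-\mathbb{R}^{n}_{\geq0}$, which does not contain $\phi(a)$; the separating $\eta$ is bounded above on that set, hence lies in the nonnegative orthant, and it exposes $\phi(a)$ over all of $\Phi$. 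Finally I would note that any exposing functional is automatically \emph{strictly} positive: if $\eta_i=0$ then the disposal point $\phi(a)-t e_i\in\Phi$ attains the same value $\eta\cdot\phi(a)$, contradicting uniqueness of the maximizer; thus $\eta$ normalizes into $\mathcal{N}^{>}(a)$.

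The main obstacle is the converse direction of the second part---producing a strictly supporting hyperplane at $\phi(a)$---because for a general convex body an extreme point need not be exposed, so the argument genuinely relies on the polyhedrality of $\Phi$ (equivalently, on the finiteness of $A$). The complementary sign bookkeeping, ensuring that the normals land in $\Delta(\Omega)$ and that exposing normals are strictly positive, is the other delicate point, but it is handled uniformly by the fact that the disposal directions $-e_i$ are recession directions of $\Phi$.
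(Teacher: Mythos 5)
Your overall strategy is sound and close in spirit to the paper's: part one is handled exactly as the paper handles it (domination $\iff$ interiority of $\phi(a)$ in $\Phi$ $\iff$ no supporting hyperplane), and for part two you correctly reduce the converse to showing that every extreme point of the polyhedron $\Phi$ is exposed, with the sign bookkeeping for the recession cone $-\mathbb{R}^{n}_{\geq0}$ done properly. The paper obtains the ``extreme $\Rightarrow$ exposed'' step by citing polyhedral theory (Rockafellar Theorem 19.1 together with Bertsimas--Tsitsiklis Theorem 2.3), which produces a functional exposing $\phi(a)$ over all of $\Phi$ directly; your hands-on separation argument is a legitimate substitute that likewise uses finiteness of $A$, but as written it has a gap at exactly that step.

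The gap: strictly separating $\phi(a)$ from $C\equiv\mathrm{conv}(\{\phi(b):b\neq a\})-\mathbb{R}^{n}_{\geq0}$ yields some $\eta\geq 0$ with $\eta\cdot\phi(a)>\eta\cdot y$ for all $y\in C$, but such an $\eta$ need not expose $\phi(a)$ over $\Phi$, because $\Phi$ also contains the rays $\{\phi(a)-te_{i}:t\geq0\}$, which are not in $C$. If $\eta$ has a zero coordinate $\eta_{i}=0$, the points $\phi(a)-te_{i}$ tie with $\phi(a)$ and strict support fails. Concretely, with two states, $\phi(a)=(1,0)$ and $\phi(b)=(0,1)$, the functional $\eta=(1,0)$ strictly separates $\phi(a)$ from $C=\{x:x_{1}\leq0,\;x_{2}\leq1\}$, yet $(1,-t)\in\Phi$ attains $\eta\cdot\phi(a)$ for every $t>0$. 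Your closing observation that any exposing functional is strictly positive presupposes that the functional is exposing, so it cannot close this loop. The fix is easy: the separation from the compact set $\mathrm{conv}(\{\phi(b):b\neq a\})$ holds with a positive gap, so replace $\eta$ by $\eta+\epsilon\mathbf{1}$ for small $\epsilon>0$; this preserves the strict inequality on $\mathrm{conv}(\{\phi(b):b\neq a\})$ and is strictly positive, hence satisfies $\eta\cdot z>0$ for every nonzero $z\geq0$. Those two facts together give strict support of $\Phi$ at $\phi(a)$, and with that patch your argument is complete.
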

While the first part of Lemma \ref{Lemma: rationalizability existence}
falls immediately out of the supporting hyperplane theorem, the second
part makes use of the fact that extreme points in finitely generated
convex sets have strictly supporting hyperplanes. As a matter of convenience,
we assume for the remainder of the paper that none of the receiver's
actions are redundant.

\section{\label{Section: two by two}The two-state two-action benchmark problem}

The two-state two-action problem is a useful benchmark for both the
two-state many-action problem and the two-action many-state problem.
Label the set of actions $A\equiv\{0,1\}$ to satisfy $v(0)<v(1)$
and the set of states $\Omega\equiv\{l,h\}$ to satisfy $\Delta(l)\equiv u(1\vert l)-u(0\vert l)<0<u(1\vert h)-u(0\vert l)\equiv\Delta(h),$
noting that the strict inequalities are consistent with our exclusion
of redundant actions from the receiver's decision problem. Further,
write 
\[
c\equiv\frac{\vert\Delta(l)\vert}{\vert\Delta(l)\vert+\Delta(h)}
\]
 for the quantity that satisfies $(1-c)\Delta(l)+c\Delta(h)=0$ and
note that action $1$ is optimal for the receiver under posterior
belief $\mu^{s}$ if and only if $\mu^{s}(h)\geq c$. Unsurprisingly,
the robustly optimal signal is such that this incentive compatibility
constraint is exactly satisfied under the lowest prior in the identified
set.
\begin{thm}
\label{Theorem: positive 2 states}Suppose there are two states and
two actions. Prior $\mu^{*}$ with $\mu^{*}(l)\equiv1-c\alpha(1),\mu^{*}(h)\equiv c\alpha(1)$
and direct signal $(\pi^{*},S^{*})$ with
\begin{align*}
\pi^{*}(a\vert l) & \equiv\frac{1}{\mu^{*}(l)}\begin{cases}
1-\alpha(1)+\mu^{*}(h) & a=0\\
\alpha(1)-\mu^{*}(h) & a=1,
\end{cases} & \pi^{*}(a\vert h) & \equiv\begin{cases}
0 & a=0\\
1 & a=1
\end{cases}
\end{align*}
are a saddle point of the sender's problem with value $\alpha$.
\end{thm}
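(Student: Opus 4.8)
The plan is to verify the two saddle inequalities in turn, after recording the behavior induced by $(\pi^*,S^*)$ and confirming that $\mu^*$ is itself a rationalizing prior. First I would note that under $(\pi^*,S^*)$ message $0$ always induces posterior $\mu^0(h)=0$ (since $\pi^*(0\vert h)=0$), so action $0$ is taken after message $0$ under every prior, while under $\mu^*$ message $1$ induces posterior $\mu^{*,1}(h)=\mu^*(h)/[\mu^*(h)+\pi^*(1\vert l)\mu^*(l)]=c\alpha(1)/\alpha(1)=c$, leaving the receiver indifferent so that — breaking ties toward the sender — he takes action $1$. Because $(\pi^*,S^*)$ is a straightforward signal whose incentive constraints hold and which satisfies $\sum_\omega \pi^*(a\vert\omega)\mu^*(\omega)=\alpha(a)$, part (iii) of Lemma \ref{Lemma: straightforward rat} yields $\mu^*\in\mathcal P(\alpha)$; the induced action distribution is $\alpha$, so $V((\pi^*,S^*)\vert\mu^*)=\sum_a v(a)\alpha(a)$ is the claimed value.

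For the sender's inequality I would show that $(\pi^*,S^*)$ solves the known-prior problem at $\mu^*$ by a Bayes-plausibility (concavification) bound. Since $v(1)>v(0)$, the payoff of any signal at $\mu^*$ is an increasing affine function of the total probability it places on action $1$; and because any message inducing action $1$ must carry posterior weight at least $c$ on the high state, the martingale property of posteriors caps that total probability at $\mu^*(h)/c=\alpha(1)$. As $(\pi^*,S^*)$ attains this cap, we obtain $V((\pi^*,S^*)\vert\mu^*)\ge V((\pi,S)\vert\mu^*)$ for every signal $(\pi,S)$.

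For the adversary's inequality I would first bound the rationalizing priors from below using Proposition \ref{Proposition : Identification}: writing $\mu(h)=n(h\vert 0)\alpha(0)+n(h\vert 1)\alpha(1)$ with $n(\cdot\vert a)\in\mathcal N(a)$, and noting that in the two-state problem $\mathcal N(0)=\{\mu(h)\le c\}$ while $\mathcal N(1)=\{\mu(h)\ge c\}$, gives $\mu(h)\ge c\alpha(1)=\mu^*(h)$ for every $\mu\in\mathcal P(\alpha)$, with equality at $\mu^*$. Next I would show that for the fixed signal $(\pi^*,S^*)$ the value depends on the prior only through $q\equiv\mu(h)$ and is increasing in it: message $0$ still yields posterior $0$ and action $0$; the posterior after message $1$, namely $q/[q+\pi^*(1\vert l)(1-q)]$, is increasing in $q$ and equals $c$ at $q=c\alpha(1)$, so it is at least $c$ at every rationalizing prior and action $1$ is still taken; and the probability of message $1$, namely $q+\pi^*(1\vert l)(1-q)$, is increasing in $q$. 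Hence $V((\pi^*,S^*)\vert\mu)$ is minimized over $\mathcal P(\alpha)$ at the smallest admissible $q$, which is $\mu^*(h)$, giving the second saddle inequality.

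The substantive step — and the one I expect to be the crux — is the adversary's inequality, where two monotonicities must be combined: as the prior weight on the high state rises, both the frequency of the favorable recommendation and the slack in its incentive constraint move in the sender's favor. The delicate point is the alignment of the two binding conditions: the lowest prior consistent with the data, $c\alpha(1)$, is exactly the prior at which the optimal signal's incentive constraint holds with equality, so the worst case simultaneously minimizes $q$ and preserves the receiver's best response to message $1$. Along the way I would record the product identities $\pi^*(1\vert l)\mu^*(l)=\alpha(1)-\mu^*(h)$ and $\pi^*(0\vert l)\mu^*(l)=1-\alpha(1)$, which confirm that $\pi^*(\cdot\vert l)$ is a genuine distribution and that the unconditional recommendation frequencies are exactly $\alpha$.
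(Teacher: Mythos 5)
Your proof is correct and follows essentially the same route as the paper's: optimality of $(\pi^{*},S^{*})$ in the known-prior problem at $\mu^{*}$ via the concavification/martingale bound, the lower bound $\mu(h)\geq c\alpha(1)$ on the identified set from the identification result, and monotonicity of the fixed signal's value in $\mu(h)$ through the two aligned effects (higher frequency of message $1$ and slack in its incentive constraint). One small remark: your product identity $\pi^{*}(0\vert l)\mu^{*}(l)=1-\alpha(1)$ is the right one --- the displayed $1-\alpha(1)+\mu^{*}(h)$ in the theorem statement appears to be a typo, since otherwise $\pi^{*}(\cdot\vert l)$ would not sum to one.
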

The proof of Theorem \ref{Theorem: positive 2 states} is a formalization
of our discussion of the motivating example. We direct readers interested
in intuition back to the introduction and readers in want of a detailed
argument forward to the Appendix.

\section{\label{Section: two state many action}The two-state problem}

Label the set of states $\Omega\equiv\{l,h\}$ and the set of actions
$A\equiv\{0,...,k\}$. Where convenient, we identify state distributions
$\mu$ with the probability $\mu(h)$ they assign to the high state
$h$.
\begin{assumption}
\label{Assumption monotone}The sender's utility function satisfies
$v(0)<...<v(k)$ and the receiver's utility satisfies $u(0\vert h)<...<u(k\vert h)$. 
\end{assumption}
The rationalizing priors in the two state problem have a closed form
characterization. In light of our exclusion of redundant actions,
the receiver's expected utility functional satisfies a suitable form
of quasiconcavity and the bounds $\mathcal{N}(i)=[\xi(i),\xi(i+1)]$
with
\begin{align}
\xi(0) & \equiv0, & \xi(i) & \equiv\frac{u(i-1\vert l)-u(i\vert l)}{u(i-1\vert l)-u(i\vert l)+u(i\vert h)-u(i-1\vert h)}, & \xi(k+1) & \equiv1\label{display: nalow}
\end{align}
fall immediately out of payoff comparisons between adjacent actions.
In turn, Proposition \ref{Proposition : Identification} yields the
bound
\begin{equation}
\text{\ensuremath{\mu\in\mathcal{P}(\alpha)\iff\underline{\mu}\equiv\sum_{i}\xi(i)\alpha(i)\leq\mu\leq\sum_{i}\xi(i+1)\alpha(i)}\;\ensuremath{\equiv}\;\ensuremath{\overline{\mu}.}}\label{Display: two states many action identified set}
\end{equation}
Define value function $\widehat{v}:[0,1]\to\mathbb{R}$ and its concave
envelope $\widehat{V}:[0,1]\to\mathbb{R}$ by 
\begin{align*}
\widehat{v}(\mu) & \equiv v(\widehat{a}(\mu)), & \widehat{V}(\mu) & \equiv\max\{y\vert(\mu,y)\in\text{convex hull}(\text{graph}(\widehat{v}))\},
\end{align*}
noting that $\widehat{v}$ is nondecreasing per Assumption \ref{Assumption monotone}
and $\widehat{V}$ is well defined because $\widehat{v}$ is upper
semicontinuous. Because $(\underline{\mu},\widehat{V}(\underline{\mu}))$
lies on the boundary of $\text{convex hull}(\text{graph}(\widehat{v}))$,
there exists a pair\footnote{The supporting hyperplane theorem improves Caratheodory's $n+1$ to
$n$ on the boundary.} of posteriors $p,q$ and a constant $\lambda$ such that 
\[
(\underline{\mu},\widehat{V}(\underline{\mu})))=(1-\lambda)(p,\widehat{v}(p))+\lambda(q,\widehat{v}(q)).
\]
Consequently, if signal $(\pi,S)$ induces posterior $p$ with probability
$1-\lambda$ and posterior $q$ with probability $\lambda$ in the
known-prior counterfactual with distribution $\underline{\mu}$ then
$V((\pi,S)\vert\underline{\mu})\geq V((\pi',S')\vert\underline{\mu})$
for all signals $(\pi',S')$. We direct readers interested in further
justification of this assertion to \citet*{kamenica2011bayesian}
Section II.B.
\begin{thm}
\label{Theorem: 2 states 3 actions}Let Assumption \ref{Assumption monotone}
hold, suppose there are two states and three or more actions, and
let $p,q,\lambda$ satisfy $(\underline{\mu},\widehat{V}(\underline{\mu}))=(1-\lambda)(p,\widehat{v}(p))+\lambda(q,\widehat{v}(q))$.
Prior $\mu^{*}$ with $\mu^{*}(l)\equiv1-\underline{\mu},\mu^{*}(h)\equiv\underline{\mu}$
and direct signal $(\pi^{*},S^{*})$ with
\begin{align*}
\pi^{*}(a\vert l) & \equiv\frac{1}{\mu^{*}(l)}\begin{cases}
\mu^{*}(l)-(1-q)\lambda & a=\widehat{a}(p)\\
(1-q)\lambda & a=\widehat{a}(q),
\end{cases} & \pi^{*}(a\vert h) & \equiv\frac{1}{\mu^{*}(h)}\begin{cases}
\mu^{*}(h)-q\lambda & a=\widehat{a}(p)\\
q\lambda & a=\widehat{a}(q)
\end{cases}
\end{align*}
are a saddle point of the sender's problem with value $\widehat{V}(\underline{\mu})$.
\end{thm}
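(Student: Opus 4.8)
The plan is to verify the two defining saddle-point inequalities at $((\pi^*,S^*),\mu^*)$ with $\mu^*=\underline{\mu}$. Write $a_p\equiv\widehat{a}(p)$ and $a_q\equiv\widehat{a}(q)$ for the two messages that $(\pi^*,S^*)$ recommends, and take $p\le\underline{\mu}\le q$ without loss of generality; since $\widehat{v}$ is nondecreasing (Assumption~\ref{Assumption monotone}) this gives $v(a_p)\le v(a_q)$. The first inequality---sender optimality at the fixed prior $\mu^*$---is essentially already in hand: the discussion preceding the theorem shows that any signal inducing posterior $p$ with probability $1-\lambda$ and posterior $q$ with probability $\lambda$ attains the concavified value $\widehat{V}(\underline{\mu})$ and is therefore optimal in the known-prior problem at $\underline{\mu}$. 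So my first task is only to confirm that $(\pi^*,S^*)$ is such a signal.

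First I would check that $(\pi^*,S^*)$ is a well-defined signal and compute the posteriors it induces under $\mu^*$. A direct computation using $\underline{\mu}=(1-\lambda)p+\lambda q$ gives $\mu^*(h)-q\lambda=(1-\lambda)p$ and $\mu^*(l)-(1-q)\lambda=(1-\lambda)(1-p)$, so every entry of $\pi^*$ is nonnegative and its columns sum to one. The total probability of message $a_q$ is $(1-q)\lambda+q\lambda=\lambda$ and its induced posterior is $q\lambda/\lambda=q$; likewise message $a_p$ has probability $1-\lambda$ and posterior $p$. Incentive compatibility holds by the definitions of $a_p,a_q$, and $V((\pi^*,S^*)\mid\mu^*)=(1-\lambda)v(a_p)+\lambda v(a_q)=\widehat{V}(\underline{\mu})$. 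Together with the preceding paragraph this establishes the first saddle inequality and pins down the claimed value.

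The substance of the argument is the second inequality: that $\underline{\mu}$ minimizes $\mu\mapsto V((\pi^*,S^*)\mid\mu)$ over $\mathcal{P}(\alpha)=[\underline{\mu},\overline{\mu}]$. I would prove the stronger statement that this map is nondecreasing in $m\equiv\mu(h)$. Holding the signal fixed, I would first record the likelihood ratios of the two messages, which the computation above reduces to $\tfrac{q(1-\underline{\mu})}{(1-q)\underline{\mu}}$ for $a_q$ and $\tfrac{p(1-\underline{\mu})}{(1-p)\underline{\mu}}$ for $a_p$; since $p\le q$ the former is the larger. Monotone updating then yields two facts: for each message the posterior $\mu^{s}(h)$ is increasing in $m$, and the order $\mu^{a_p}(h)\le\mu^{a_q}(h)$ is preserved at every $m$. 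Because $\mathcal{N}(i)=[\xi(i),\xi(i+1)]$, the receiver's best response $\widehat{a}(\mu^s)$---and hence the per-message payoff $v(\widehat{a}(\mu^s))$---is a nondecreasing step function of the posterior, so it is nondecreasing in $m$; moreover $a_q$ always commands a weakly higher action, and therefore weakly higher sender payoff, than $a_p$. Finally, the total probability $\sum_\omega\pi^*(a_q\mid\omega)\mu(\omega)=\tfrac{q\lambda}{\underline{\mu}}m+\tfrac{(1-q)\lambda}{1-\underline{\mu}}(1-m)$ placed on the higher message is increasing in $m$, since its slope is proportional to $q-\underline{\mu}\ge0$. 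Writing $V((\pi^*,S^*)\mid\mu)$ as the probability-weighted sum of the two per-message payoffs and combining these monotonicities gives that $V$ is nondecreasing in $m$, hence minimized at $\underline{\mu}$, which is the second saddle inequality.

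The main obstacle is the bookkeeping in this last step, because off the prior $\mu^*$ the direct recommendations need no longer be incentive compatible and the receiver's best response to each message can shift. The delicate point is to combine two channels---upward jumps in $\widehat{a}(\mu^s)$ as a posterior crosses a threshold $\xi(i)$, and the reallocation of probability from $a_p$ toward $a_q$---into a single monotone conclusion; the threshold crossings are harmless only because each such jump raises $v(\widehat{a}(\mu^s))$ (as $v$ and $\widehat{a}$ both increase), so the step function moves up rather than down. I would also dispatch the boundary cases $p\in\{0,1\}$, $q\in\{0,1\}$, and $\lambda\in\{0,1\}$ separately, where a message may be degenerate or absent, though these reduce to or are simpler than the generic argument.
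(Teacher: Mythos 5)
Your proposal is correct and takes essentially the same approach as the paper: the first saddle inequality is delegated to the concavification discussion preceding the theorem, and the second is established by combining monotonicity of the per-message posteriors (hence of $\widehat{v}$ evaluated at them) in $\mu(h)$ with monotonicity of the probability placed on the higher message, which are exactly the two ingredients in the paper's chain of inequalities. The differences are cosmetic---you take $p\le q$ where the paper takes $p\ge q$, you make the well-definedness and posterior computations explicit, and you prove monotonicity of $V((\pi^*,S^*)\vert\cdot)$ on all of $[\underline{\mu},\overline{\mu}]$ rather than only comparing each rationalizing prior to $\mu^*$.
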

The proof of Theorem \ref{Theorem: 2 states 3 actions} is a straightforward
generalization of Theorem \ref{Theorem: positive 2 states}, with
a few additional steps. Unlike in the two-by-two problem, increasing
the prior probability of the high state might change the receiver's
best responses to any of the messages sent by the robustly optimal
signal $(\pi^{*},S^{*})$. However, because the sender's utility is
increasing in the receiver's action and the receiver's action is increasing
in the posterior probability of the high state, any such changes in
behavior benefit the sender.
\begin{cor}
\label{Corollary 2 state many action}Let Assumption \ref{Assumption monotone}
hold and suppose there are two states and three or more actions. The
value of the sender's problem $\widehat{V}(\underline{\mu})$ and
the value of the data $\sum_{i}v(i)\alpha(i)$ satisfy $V(\underline{\mu})\geq\sum_{i}v(i)\alpha(i)$,
with equality if and only if there exists a hyperplane $H$ such that
$(\xi(i),\widehat{V}(\xi(i))=(\xi(i),\widehat{v}(\xi(i))\in H$ for
all $i$ in the support of $\alpha$.
\end{cor}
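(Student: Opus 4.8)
The plan is to read both claims off the concave-envelope formula for the value in Theorem~\ref{Theorem: 2 states 3 actions}, interpreting the ``value of the data'' $\sum_i v(i)\alpha(i)$ as the payoff of the Bayes-plausible splitting of $\underline{\mu}$ into the threshold posteriors $\{\xi(i)\}_{i\in\mathrm{supp}(\alpha)}$ with weights $\alpha$. The one preliminary fact I need is that $\widehat{v}(\xi(i))=v(i)$ for every $i$: since $\mathcal{N}(i)=[\xi(i),\xi(i+1)]$, both actions $i-1$ and $i$ are receiver-optimal at the left endpoint $\xi(i)$, and as $v$ is increasing the sender's tie-break selects $i$, giving $\widehat{a}(\xi(i))=i$. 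Hence each point $(\xi(i),v(i))=(\xi(i),\widehat{v}(\xi(i)))$ lies on $\mathrm{graph}(\widehat{v})$.

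For the inequality I would note that
\[
\Bigl(\underline{\mu},\ \textstyle\sum_i v(i)\alpha(i)\Bigr)=\sum_i\alpha(i)\,\bigl(\xi(i),\widehat{v}(\xi(i))\bigr)
\]
is a convex combination of points of $\mathrm{graph}(\widehat{v})$ and so lies in its convex hull; since $\widehat{V}$ is by definition the upper boundary of that hull, $\widehat{V}(\underline{\mu})\ge\sum_i v(i)\alpha(i)$. The equivalent form I will use for the equality analysis comes from concavity of $\widehat{V}$ together with $\widehat{V}\ge\widehat{v}$:
\[
\widehat{V}(\underline{\mu})=\widehat{V}\Bigl(\textstyle\sum_i\alpha(i)\xi(i)\Bigr)\ \ge\ \sum_i\alpha(i)\widehat{V}(\xi(i))\ \ge\ \sum_i\alpha(i)\widehat{v}(\xi(i))=\sum_i v(i)\alpha(i).
\]

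For equality I would take the affine function $H$ whose graph is a supporting hyperplane of the concave function $\widehat{V}$ at $\underline{\mu}$, and read the hyperplane in the statement as precisely this one. In the forward direction, equality forces both inequalities in the displayed chain to bind; binding of the second, with the strictly positive weights $\alpha(i)$ on the support, yields $\widehat{V}(\xi(i))=\widehat{v}(\xi(i))=v(i)$ there, while binding of the first gives $\widehat{V}(\underline{\mu})=\sum_i\alpha(i)\widehat{V}(\xi(i))$, so that $\sum_i\alpha(i)\bigl(H(\xi(i))-\widehat{V}(\xi(i))\bigr)=H(\underline{\mu})-\widehat{V}(\underline{\mu})=0$ and, each summand being nonnegative, every $(\xi(i),\widehat{V}(\xi(i)))$ lies on $H$. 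In the backward direction, if these points lie on the supporting hyperplane $H$ then $\widehat{V}\le H$ gives $\widehat{V}(\underline{\mu})\le H(\underline{\mu})=\sum_i\alpha(i)H(\xi(i))=\sum_i v(i)\alpha(i)$, which with the inequality is equality.

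The point I would be most careful about is the role of $H$ as a \emph{supporting} hyperplane rather than an arbitrary one through the graph points. A line through the points that fails to support $\widehat{V}$ does not suffice: when $\mathrm{supp}(\alpha)$ has only two elements the two points are automatically collinear, yet $\widehat{V}$ may bulge strictly above the chord joining them, so that the inequality is strict even though $\widehat{V}(\xi(i))=\widehat{v}(\xi(i))$ at both points. Insisting that $H$ support $\widehat{V}$ rules this out; equivalently, when there are three or more support points the requirement is that $\widehat{V}$ be affine across them, which follows from the standard fact that a concave function agreeing with a chord at an interior point is affine on the entire interval. The remaining verifications — the tie-break identity above and that $\underline{\mu}$ lies in the convex hull of $\{\xi(i)\}_{i\in\mathrm{supp}(\alpha)}$ — are routine.
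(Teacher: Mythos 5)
Your proof is correct, and on the inequality it follows the paper's own argument: the paper passes through exactly the chain $\widehat{V}(\underline{\mu})=\widehat{V}(\sum_{i}\xi(i)\alpha(i))\geq\sum_{i}\widehat{V}(\xi(i))\alpha(i)\geq\sum_{i}\widehat{v}(\xi(i))\alpha(i)=\sum_{i}v(i)\alpha(i)$, citing Jensen's inequality and $\widehat{V}\geq\widehat{v}$, with the tie-break identity $\widehat{a}(\xi(i))=i$ (which you verify, the paper treats as definitional). The difference lies in the equality characterization, which the paper asserts without proof and you actually establish; and your insistence that $H$ be a \emph{supporting} hyperplane of $\widehat{V}$ (equivalently, of $\text{convex hull}(\text{graph}(\widehat{v}))$) rather than an arbitrary line through the points is not pedantry but precisely what makes the statement true. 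Under the arbitrary-line reading the ``if'' direction fails: take $v(0)=0$, $v(1)=1$, $v(2)=1.1$ with receiver payoffs chosen so that $\xi(1)=0.1$ and $\xi(2)=0.5$ (consistent with Assumption \ref{Assumption monotone} and no redundant actions), and let $\alpha$ put weight $1/2$ on each of actions $0$ and $2$; then $(\xi(0),\widehat{v}(\xi(0)))=(0,0)$ and $(\xi(2),\widehat{v}(\xi(2)))=(0.5,1.1)$ satisfy $\widehat{V}=\widehat{v}$ and are trivially collinear, yet $\widehat{V}(0.1)=1$ bulges above the chord, so $\widehat{V}(\underline{\mu})=\widehat{V}(0.25)=1.0375>0.55=\sum_{i}v(i)\alpha(i)$ and no supporting line passes through both points. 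The same example shows that the paper's follow-up remark --- that with binary support the condition ``reduces to $\widehat{V}(\xi(i))=\widehat{v}(\xi(i))$ on the support'' --- cannot be squared with the supporting-hyperplane reading under which the corollary is true, so your reading is the correct repair rather than a cosmetic choice. Your two-sided argument is complete: in the forward direction the nonnegative summands $H(\xi(i))-\widehat{V}(\xi(i))$ are forced to vanish once both inequalities in the chain bind, and in the backward direction $\widehat{V}\leq H$ together with affinity of $H$ closes the loop; the only unstated detail is existence of a supporting line at $\underline{\mu}$, which is immediate here since $\widehat{V}$ is a piecewise-linear concave function (and the endpoint cases $\underline{\mu}\in\{0,1\}$ make the support a singleton, where the claim is trivial).
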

Corollary \ref{Corollary 2 state many action} follows jointly from
Theorem \ref{Theorem: 2 states 3 actions} and the observation
\begin{align*}
\widehat{V}(\underline{\mu}) & =\widehat{V}(\sum_{i}\xi(i)\alpha(i))\geq\sum_{i}\widehat{V}(\xi(i))\alpha(i)\geq\sum_{i}\widehat{v}(\xi(i))\alpha(i)=\sum_{i}v(i)\alpha(i),
\end{align*}
where the equalities are definitional, the first inequality is Jensen's
inequality, and the second inequality is by $\widehat{V}\geq\widehat{v}$.
The condition for equality reduces to $\widehat{V}(\xi(i))=\widehat{v}(\xi(i))$
on the support of $\alpha$ if $\alpha$ has binary support. Otherwise,
if three or more actions appear in the data then violations of the
linearity criterion occur for generic specifications of sender and
receiver preferences, even if $\widehat{V}$ and $\widehat{v}$ coincide
at the appropriate points. We view the availability of guarantees
better than the data in typical specifications of our problem as a
positive result.
\begin{example}
\label{Example:" improvmeent}Suppose there are two states and three
actions $0,1,2$ with sender utility $v(i)=i$, receiver utility $\phi(0)=(1,-1)$,
$\phi(1)=(0,0)$, $\phi(2)=(-2,1)$, and action distribution $\alpha(i)=1/3$.
\end{example}
In Example \ref{Example:" improvmeent}, $\xi(1)=1/2,\xi(2)=2/3,$
prior $\mu^{*}\equiv\underline{\mu}=7/18$ is adversarial and signal
$(\pi^{*},S^{*})$ with $S^{*}\equiv\{0,2\}$, $\pi^{*}(2\vert l)\equiv7/22,\pi^{*}(2\vert h)\equiv1$
is robustly optimal. The sender's payoff guarantee $V((\pi^{*},S^{*})\vert\mu^{*})=\widehat{V}(\mu^{*})=7/6$
exceeds the expected value of the data $\sum_{a}v(a)\alpha(a)=1$. 

\section{\label{Section: two actions and many states}The two-action problem}

Label the set of actions $A\equiv\{0,1\}$ to satisfy $1\equiv v(1)>v(0)\equiv0$
and identify action distributions $\alpha$ with the probability $\alpha(1)$
they assign to the sender's preferred action. Assume the map $\Delta:\Omega\to\mathbb{R}$
defined by $\Delta(\omega)\equiv u(1\vert\omega)-u(0\vert\omega)$
is injective, label the states $\Omega\equiv\{\omega_{1},...,\omega_{n}\}$
to satisfy $\Delta(\omega_{1})<...<\Delta(\Omega_{n})$, and recall
our exclusion of redundant actions implies $\Delta(\omega_{1})<0<\Delta(\omega_{n})$.
For each state $\omega_{i}$, state distribution $\mu$, and signal
$(\pi,S)$ write $\Delta_{i}\equiv\Delta(\omega_{i})$, $\mu_{i}\equiv\mu(\omega_{i})$,
and $\pi_{i}\equiv\pi(\cdot\vert\omega_{i})$. Finally, write\emph{
}$L\equiv\{i\vert\Delta_{i}<0\}$ for the \emph{low} states, $H\equiv\{i\vert\Delta_{i}\geq0\}$
for the \emph{high} states, and for each pair of states $i\in L,j\in H$
write
\[
c_{ij}\equiv\frac{\vert\Delta_{i}\vert}{\vert\Delta_{i}\vert+\Delta_{j}}
\]
for the quantity that satisfies the indifference condition $(1-c_{ij})\Delta_{i}+c_{ij}\Delta_{j}=0$.
It will be helpful to keep in mind that $0\leq c_{ij}\leq1$, with
$c_{ij}<1$ except where $\Delta_{j}=0$.

Our analysis of the two-action problem is organized as follows. First,
in Section \ref{Section: rev princ} we (i) show the usual revelation
principle for persuasion problems does not hold and (ii) we compensate
for that failure by developing a complete characterization of optimal
signals for the known-prior counterfactual. Next, in Section \ref{Section: two state theorem 3}
we show that the sender's problem does not have a saddle point and
furthermore that there are no signals guaranteed to recover an action
distribution as good or better for the sender than the data. Finally,
with that negative result as motivation, we study robustly optimal
experimentation in Section \ref{Section: optimal experiments}.

\subsection{\label{Section: rev princ}The revelation principle}

Unlike in the two-state problem, the restriction to direct signals
is at least sometimes with loss in the two-action problem.
\begin{example}
\label{Example: rev}There are three states and two actions, the receiver's
utility satisfies $(\Delta_{1},\Delta_{2},\Delta_{3})=(-1,0,1)$,
and the data satisfy $\alpha=1/2$.

Consider Example \ref{Example: rev}. The identified set $\mathcal{P}(\alpha)$
is the set of priors $\mu$ satisfying the system $2\mu_{1}+\mu_{2}\geq1/2,\mu_{2}+2\mu_{3}\geq1/2.$
Suppose the sender uses direct signal $(\pi,S)$. If $\text{supp}(\pi_{1})$
and $\text{supp}(\pi_{2})$ intersect, then her payoff under prior
$(1/2,1/2,0)$ is $0$. Alternatively, if $\text{supp}(\pi_{1})$
and $\text{supp}(\pi_{2})$ are disjoint, then her payoff under the
prior $(3/4,0,1/4)$ is at most $1/4$. The signal with $\pi_{1}(1)\equiv0,\pi_{2}(1)\equiv\pi_{3}(1)\equiv1$
guarantees the sender payoff $1/4$ and is therefore robustly optimal
within the class of direct signals.
\end{example}
Suppose the sender instead uses the indirect signal $(\tilde{\pi},\tilde{S})$
with $\tilde{S}\equiv\{s_{1},s_{2},s_{3}\}$ and $\tilde{\pi}_{1}\equiv(5/6,0,1/6),\tilde{\pi}_{2}\equiv(0,1,0),\tilde{\pi}_{3}\equiv(0,0,1)$.
If $\mu_{1}\leq6\mu_{3}$ then the receiver's best response to messages
$s_{2}$ and $s_{3}$ is $1$, the sender's payoff is $(1/6)\mu_{1}+\mu_{2}+\mu_{3}$,
and the worst prior in this region $(3/4,0,1/4)$ yields payoff $3/8$.
Alternatively, if $\mu_{1}>6\mu_{3}$ then $1$ is a best response
to only $s_{2}$, the sender's payoff is $\mu_{2}$, and the limiting
worst-case prior $(6/10,3/10,1/10)$ in this region yields payoff
$3/10$. Accordingly, the indirect signal $(\tilde{\pi},\tilde{S})$
yields a better guarantee $3/10>1/4$ than the best direct signal. 
\begin{defn}
Signal $(\pi,S)$ is \emph{equivalent }to direct signal $(\pi',S')$
for prior $\mu$ if $\sum_{s\in S^{a}(\mu)}\pi_{i}(s)=\pi_{i}'(a)$
for all actions $a\in A$ and for all states $\omega_{i}\in\text{supp}(\mu)$.
\end{defn}
In light of Example \ref{Example: rev}, it will be useful to have
a complete characterization of the set of optimal signals for the
known-prior counterfactual. Toward that end, for each $i=1,...,n$
and each $z\in[0,1]$ define \emph{cutoff signal} $(\pi^{i,z},S^{i,z})$
by $S^{i,z}\equiv\{0,1\}$ and
\begin{align*}
\pi_{j}^{i,z}(0) & \equiv\begin{cases}
1 & j<i\\
1-z & j=i\\
0 & j>i,
\end{cases} & \pi_{j}^{i,z}(1) & \equiv\begin{cases}
0 & j<i\\
z & j=i\\
1 & j>i.
\end{cases}
\end{align*}
Prior $\mu$ is \emph{eligible if }$\sum_{i}\Delta_{i}\mu_{i}<0$.
If $\mu$ is indeed eligible, define parameters $(i(\mu),z(\mu))$
and value $\Pi(\mu)$ by
\begin{align*}
i(\mu) & \equiv\max\{i\vert\sum_{j\geq i}\Delta_{j}\mu_{j}<0\}, & z(\mu) & \equiv\frac{\sum_{j>i(\mu)}\Delta_{j}\mu_{j}}{\vert\Delta_{i(\mu)}\vert\mu_{i(\mu)}}, & \Pi(\mu)\equiv & \sum_{j>i(\mu)}(1+\frac{\Delta_{j}}{\vert\Delta_{i(\mu)}\vert})\mu_{j}.
\end{align*}
Otherwise, if $\mu$ is ineligible, define $(i(\mu),z(\mu),\Pi(\mu))\equiv(1,1,1)$.
In both cases, we use the shorthand $(\pi^{\mu},S^{\mu})\equiv(\pi^{i(\mu),z(\mu)},S^{i(\mu),z(\mu)})$.

We record a few facts about these signals here. First, the sender's
payoff in the known prior counterfactual satisfies $V((\pi^{\mu},S^{\mu})\vert\mu)=\Pi(\mu)$.
Second, the threshold state $i(\mu)$ belongs to the set of low states
$L$ for all priors $\mu$. Third, if $\mu$ is eligible, then $i(\mu)$
belongs to the support of $\mu$ and $z(\mu)<1$. Fourth, and finally,
in all cases signal $(\pi^{\mu},S^{\mu}$) is optimal for prior $\mu$
and that optimality is unique up to equivalence.
\begin{prop}
\label{Proposition: cutoff equivalence}Signal $(\pi^{*},S^{*})$
satisfies $\forall(\pi,S)\;V((\pi^{*},S^{*})\vert\mu)\geq V((\pi,S)\vert\mu)$
if and only if $(\pi^{*},S^{*})$ is equivalent to $(\pi^{\mu},S^{\mu})$.
\end{prop}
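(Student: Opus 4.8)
The plan is to recast the known-prior problem as a linear program over the mass each state contributes to action $1$, solve that program explicitly, and argue its optimum is unique up to the stated equivalence. Because $v(0)=0$ and $v(1)=1$, the value $V((\pi,S)\vert\mu)$ equals the probability that the receiver takes action $1$. For an arbitrary signal $(\pi,S)$, let $x_i\equiv\sum_{s\in S^{1}(\mu)}\pi_i(s)\mu_i$ denote the mass that state $\omega_i$ routes to messages inducing action $1$. Since $\widehat{a}(\mu^{s})=1$ exactly when $\sum_i\Delta_i\pi_i(s)\mu_i\ge0$ (ties going to the sender), summing over $s\in S^{1}(\mu)$ yields $\sum_i\Delta_i x_i\ge0$, while $0\le x_i\le\mu_i$ and $V((\pi,S)\vert\mu)=\sum_i x_i$. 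Thus every signal induces a feasible point of the program
\[
\mathcal{L}:\quad \max_{x}\ \sum_i x_i\ \text{ s.t. }\ \sum_i\Delta_i x_i\ge0,\ \ 0\le x_i\le\mu_i,
\]
and conversely any feasible $x$ is realized by the direct signal with $\pi_i(1)=x_i/\mu_i$ on $\mathrm{supp}(\mu)$, whose action-$1$ recommendation is incentive compatible by the constraint and whose action-$0$ recommendation is then automatically incentive compatible. Hence the sender's optimal value equals the value of $\mathcal{L}$, and a signal is optimal if and only if its induced vector $x$ solves $\mathcal{L}$.

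First I would observe that this reduction already encodes the equivalence. Writing $x^{*}\equiv(\pi^{\mu}_i(1)\mu_i)_i$ for the vector induced by the cutoff signal, a signal $(\pi^{*},S^{*})$ is equivalent to $(\pi^{\mu},S^{\mu})$ precisely when $\sum_{s\in S^{1}(\mu)}\pi^{*}_i(s)=\pi^{\mu}_i(1)$ for every $\omega_i\in\mathrm{supp}(\mu)$, i.e.\ when its induced $x$ agrees with $x^{*}$ on the support; the companion condition for action $0$ is automatic because the two recommendations partition the messages of positive probability. So it remains only to show that $\mathcal{L}$ has $x^{*}$ as its unique optimizer on $\mathrm{supp}(\mu)$.

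Next I would solve $\mathcal{L}$ and prove uniqueness. For eligible $\mu$, any optimum must set $x_i=\mu_i$ at every high state: if $\Delta_i>0$ then raising a slack $x_i$ strictly improves the objective and relaxes the constraint, and at the at most one (by injectivity of $\Delta$) state with $\Delta_i=0$ raising $x_i$ improves the objective without touching the constraint. Conditional on the high states being full, the low states solve the fractional knapsack $\max\sum_{i\in L}x_i$ subject to $\sum_{i\in L}\vert\Delta_i\vert x_i\le B$ with $B\equiv\sum_{j\in H}\Delta_j\mu_j\ge0$. Injectivity of $\Delta$ makes the value densities $1/\vert\Delta_i\vert$ strictly ordered, so the greedy rule --- fully load the cheapest (largest-index) low states, then the marginal one --- is the unique solution; identifying the marginal index with $i(\mu)$ and the fractional weight with $z(\mu)$ reproduces exactly $x^{*}$ and the value $\Pi(\mu)$. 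The ineligible case is immediate: the constraint is slack at $x=\mu$, the objective is bounded by $\sum_i\mu_i=1$, and equality forces $x_i=\mu_i$ everywhere, matching $(\pi^{1,1},S^{1,1})$.

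I expect the uniqueness step to be the crux. Both that the high states are full in \emph{every} optimum and that the knapsack optimum is unique hinge on the strict ordering of the $\vert\Delta_i\vert$ guaranteed by injectivity; without it one could swap mass between equally cheap low states and destroy uniqueness. The remaining care is at the boundary, where the budget is exactly exhausted by a prefix of low states so that $z(\mu)=0$: there I would verify directly that no feasible reallocation matches the objective, which again reduces to the impossibility of substituting a more expensive low state for a cheaper one without losing mass.
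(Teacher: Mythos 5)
Your proof is correct, but it takes a genuinely different route from the paper's. The paper argues at the level of induced posteriors: it aggregates the messages in $S^{0}(\mu)$ and $S^{1}(\mu)$ into average posteriors $\nu^{0},\nu^{1}$, splits into cases according to whether $\max\text{supp}(\nu^{0})>\min\text{supp}(\nu^{1})$, and in the overlapping case constructs a cutoff signal with the same value whose action-$1$ posterior strictly first-order stochastically dominates $\nu^{1}$, so that the incentive constraint becomes strictly slack and the value can be strictly improved; in the non-overlapping case the signal coincides with a cutoff signal up to equivalence, and matching its value against $(\pi^{\mu},S^{\mu})$ forces equivalence. You instead collapse the problem to a one-constraint linear program in the per-state masses $x_{i}$ routed to action $1$ and solve it as a fractional knapsack, getting uniqueness from the strictly ordered densities $1/\vert\Delta_{i}\vert$ supplied by injectivity of $\Delta$. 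Your reduction buys uniform treatment of the degenerate cases that the paper itemizes separately (ineligible $\mu$; $\text{supp}(\mu)\cap H$ empty or equal to a single state with $\Delta=0$), and it makes the uniqueness mechanism transparent; the paper's argument stays inside the persuasion framework and exhibits explicitly how a non-cutoff signal is improved upon, a construction it reuses in later proofs. One point to tighten: your claim that the action-$0$ recommendation of the realizing direct signal is ``automatically incentive compatible'' is false for arbitrary feasible $x$ --- for ineligible $\mu$ and $x=0$ the residual posterior is $\mu$ itself and induces action $1$. It is true exactly where you need it: for eligible $\mu$ it follows from $\sum_{i}\Delta_{i}(\mu_{i}-x_{i})\leq\sum_{i}\Delta_{i}\mu_{i}<0$, and in the ineligible case your separate argument sends all mass to message $1$, so no action-$0$ message is ever sent. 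Since every signal's value always equals the LP objective at its induced feasible point, the identity between the sender's optimal value and the LP value survives this correction, and the rest of your argument goes through.
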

The characterization of optimality in Proposition \ref{Proposition: cutoff equivalence}
supports our results on the non-existence of saddle points later in
this section.

\subsection{\label{Section: two state theorem 3}Robustly optimal signals}

The basic logic of Theorem \ref{Theorem: positive 2 states} and Theorem
\ref{Theorem: 2 states 3 actions} --- in which the ``lowest''
prior is adversarial and its optimal signal is robustly optimal ---
fails with two actions and three or more states. 
\begin{example}
\label{example: not monotone iin FOSD}There are three states and
two actions, the receiver's utility satisfies $(\Delta_{1},\Delta_{2},\Delta_{3})=(-2,-1,1)$,
and the data satisfy $\alpha=1/2$. 
\end{example}
First, because $\mu\equiv(1/2,1/4,1/4)$ and prior $\nu\equiv(0,3/4,1/4)$
satisfy $\Pi(\mu)=\Pi(\nu)=\alpha$, both priors rationalize the data.
In turn, Proposition \ref{Proposition: cutoff equivalence} implies
cutoff signal $(\pi^{\mu},S^{\mu})$ with $(i(\mu),z(\mu))=(1,0)$
is optimal for the known-prior counterfactual in which states are
distributed according to $\mu$. However, even though $\nu$ first
order stochastically dominates $\mu$, the sender's disfavored action
$0$ is the receiver's unique best response to both messages sent
by $(\pi^{\mu},S^{\mu})$ under that prior. Accordingly, $V((\pi^{\mu},S^{\mu})\vert\nu)=0$
and the signal $(\pi^{\mu},S^{\mu})$ yields no payoff guarantee for
the sender. While there might be other signals that do somewhat better
for the sender, there are no signals that are guaranteed to an action
distribution at least as good as the data.
\begin{thm}
\label{Theorem: negative result three states}Suppose there are two
actions and three or more states. The sender's problem has a saddle
point if and only if $\alpha\in\{0,1\}$.
\end{thm}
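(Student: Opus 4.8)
The plan is to separate the two implications, dispatching the boundary cases by construction and devoting the real work to showing that $0<\alpha<1$ rules out a saddle point. For the easy direction, when $\alpha=0$ Proposition \ref{Proposition : Identification} identifies $\mathcal P(\alpha)$ with $\mathcal N(0)=\{\mu:\sum_i\Delta_i\mu_i\le 0\}$, which contains $\delta(\omega_1)$; every posterior induced from the degenerate prior $\delta(\omega_1)$ is again $\delta(\omega_1)$ and $\Delta_1<0$, so every signal earns $0$ there and $((\pi^*,S^*),\delta(\omega_1))$ is a saddle point for any $(\pi^*,S^*)$. When $\alpha=1$, $\mathcal P(\alpha)=\mathcal N(1)=\{\mu:\sum_i\Delta_i\mu_i\ge 0\}$, and the uninformative signal leaves each posterior equal to the prior, so action $1$ is optimal (ties to the sender) and the payoff equals $1$ at every $\mu\in\mathcal P(\alpha)$; pairing it with any $\mu^*\in\mathcal P(\alpha)$ gives a saddle point.

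For the reverse direction I would argue by contradiction. Suppose $0<\alpha<1$ and $((\pi^*,S^*),\mu^*)$ is a saddle point. First I pin the value: since $\mu^*$ rationalizes $\alpha$, Lemma \ref{Lemma: straightforward rat} supplies a straightforward signal attaining sender payoff $\alpha$ at $\mu^*$, so $\Pi(\mu^*)\ge\alpha$, and the saddle inequalities give $V((\pi^*,S^*)\mid\mu)\ge V((\pi^*,S^*)\mid\mu^*)=\Pi(\mu^*)\ge\alpha$ for all $\mu\in\mathcal P(\alpha)$; thus $(\pi^*,S^*)$ guarantees at least $\alpha$. Next I produce a rich family of boundary priors on which this is exactly binding: for any $\nu^1$ on the indifference hyperplane $\{\sum_i\Delta_i\nu^1_i=0\}$ I verify directly that $\mu_{\nu^1}\equiv(1-\alpha)\delta(\omega_1)+\alpha\nu^1$ lies in $\mathcal P(\alpha)$, is eligible with threshold $i(\mu_{\nu^1})=1$, and satisfies $\Pi(\mu_{\nu^1})=\alpha$. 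For every such prior the bound above forces $V((\pi^*,S^*)\mid\mu_{\nu^1})=\alpha=\Pi(\mu_{\nu^1})$, so $(\pi^*,S^*)$ is optimal there and, by Proposition \ref{Proposition: cutoff equivalence}, equivalent to $(\pi^{\mu_{\nu^1}},S^{\mu_{\nu^1}})$. Taking $\nu^1$ in the relative interior of the hyperplane face (so $\mu_{\nu^1}$ has full support and $\omega_1$ is in the support), equivalence forces $\sum_{s\in S^1(\mu_{\nu^1})}\pi^*_1(s)=z(\mu_{\nu^1})$, where one computes $z(\mu_{\nu^1})=\alpha\nu^1_1/(1-\alpha+\alpha\nu^1_1)$, strictly increasing in $\nu^1_1$.

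The contradiction then comes from perturbing $\nu^1$. Because $n\ge 3$ and $\Delta$ is injective, the space $\{d:\sum_i d_i=0,\ \sum_i\Delta_i d_i=0\}$ has dimension $n-2\ge 1$ and contains a direction $d$ with $d_1\ne 0$ (equivalently, $e_1\notin\mathrm{span}(\mathbf 1,\Delta)$). Fixing an interior base point $\nu^1_0$ and setting $\nu^1_t\equiv\nu^1_0+td$, the priors $\mu_{\nu^1_t}$ remain boundary priors of the above type for small $t$, so $(\pi^*,S^*)$ must reproduce each of their cutoff signals, while $\nu^1_{t,1}$ — and hence $z(\mu_{\nu^1_t})$ — varies strictly. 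On the other hand, for each message $s$ the incentive quantity $\sum_i\Delta_i\pi^*_i(s)\mu_{\nu^1_t,i}$ is affine in $t$, so the action-$1$ set $S^1(\mu_{\nu^1_t})$ changes at only finitely many $t$; on any subinterval where it is constant, $\sum_{s\in S^1}\pi^*_1(s)$ is independent of $t$, contradicting its required equality with the strictly monotone $z(\mu_{\nu^1_t})$.

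The hard part — and the reason one cannot shortcut through the revelation principle, which fails by Example \ref{Example: rev} — is exactly this last step: a single fixed, possibly indirect, signal is forced to mimic the cutoff behavior of a one-parameter continuum of boundary priors that all share threshold $1$ yet demand different mixing weights on $\omega_1$, and it is the affine/piecewise-constant structure of message incentive compatibility that I would use to show this is impossible. The bookkeeping in the second step (checking eligibility, the threshold $i(\mu_{\nu^1})=1$, and $\Pi(\mu_{\nu^1})=\alpha$) is routine but is where both bounds $0<\alpha<1$ are consumed: eligibility needs $(1-\alpha)\Delta_1<0$, and the strict variation of $z$ needs $\alpha>0$.
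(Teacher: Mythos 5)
Your proof is correct, and at the strategic level it matches the paper's: both directions of the easy case are handled the same way, and for $\alpha\in(0,1)$ both you and the paper reduce the problem to showing that no signal can guarantee $\alpha$ (using $\Pi(\mu^*)\ge\alpha$ for any rationalizing $\mu^*$), then derive a contradiction from Proposition \ref{Proposition: cutoff equivalence}: a guarantee of $\alpha$ forces the fixed signal to be equivalent to the cutoff signal of every rationalizing prior whose known-prior value is exactly $\alpha$, and no single signal can mimic incompatible cutoff signals. Where you genuinely differ is in how the incompatible family of priors is produced. The paper splits into two cases: when $\vert L\vert=1$ it takes the segment $\mu^{\lambda}$ mixing the $(1,2)$- and $(1,3)$-indifference posteriors and contradicts the injectivity of $\lambda\mapsto z(\mu^{\lambda})$ against the constancy of $S^{1}(\mu^{\lambda})=\mathrm{supp}(\pi_{2})\cup\mathrm{supp}(\pi_{3})$; when $\vert L\vert>1$ it uses just two priors whose cutoff signals demand $\mathrm{supp}(\pi_{1})\cap\mathrm{supp}(\pi_{n})\neq\emptyset$ and $\mathrm{supp}(\pi_{1})\cap\mathrm{supp}(\pi_{n})=\emptyset$, respectively. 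Your construction---perturbing a full-support point of the indifference hyperplane along a direction $d$ with $\sum_{i}d_{i}=\sum_{i}\Delta_{i}d_{i}=0$ and $d_{1}\neq0$ (which exists because $e_{1}\notin\mathrm{span}(\mathbf{1},\Delta)$ once $n\geq3$), then playing the strictly monotone $z(\mu_{\nu_{t}^{1}})$ against the piecewise-constant set $S^{1}(\mu_{\nu_{t}^{1}})$ (each message's incentive quantity being affine in $t$)---handles both cases at once; indeed both of the paper's prior families are special cases of your $\mu_{\nu^{1}}=(1-\alpha)\delta(\omega_{1})+\alpha\nu^{1}$, and your verification that these priors lie in $\mathcal{P}(\alpha)$ with $i(\mu_{\nu^{1}})=1$, $\Pi(\mu_{\nu^{1}})=\alpha$, and $z(\mu_{\nu^{1}})=\alpha\nu_{1}^{1}/(1-\alpha+\alpha\nu_{1}^{1})$ is accurate. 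What your route buys is the elimination of the case split on $\vert L\vert$ and a reusable local-perturbation argument; what the paper's route buys is fully explicit closed-form witnesses and no need for the dimension-counting or piecewise-constancy steps.
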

The proof of Theorem \ref{Theorem: negative result three states}
extends the failure of the optimal signal $(\pi^{\mu},S^{\mu})$ in
our discussion of the example to the broader class of signals that
are equivalent to $(\pi,S)$. Our approach is to construct a pair
of priors $\mu,\nu$ with the property that (i) the value of the known-prior
counterfactual problem with prior $\mu$ and the value of the known-prior
counterfactual problem with prior $\nu$ are both $\alpha$ and (ii)
the set of solutions to those two problems do not intersect.
\begin{cor}
\label{Corollary: negative two by 3}Suppose there are two actions
and three or more states. There exists a signal $(\pi,S)$ with $\min_{\mu\in\mathcal{P}(\alpha)}V((\pi,S)\vert\mu)\geq\alpha$
if and only if $\alpha\in\{0,1\}$.
\end{cor}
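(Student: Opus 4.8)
The plan is to prove both implications, disposing of the easy direction first and then reducing the hard direction to the two-prior construction already produced inside the proof of Theorem~\ref{Theorem: negative result three states}.

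For the ``if'' direction, suppose $\alpha\in\{0,1\}$. When $\alpha=0$ every signal qualifies: since $v(0)=0$ and $v(1)=1$ are nonnegative, $V((\pi,S)\vert\mu)\geq 0=\alpha$ for all priors, so the guarantee is at least $\alpha$ trivially. When $\alpha=1$ I would first apply Proposition~\ref{Proposition : Identification} with $\alpha$ concentrated on action $1$ to obtain $\mathcal P(\alpha)=\mathcal N(1)=\{\mu\mid 1\in a^{*}(\mu)\}$, i.e. precisely the priors under which the receiver's unaided best response is the sender's favored action. The completely uninformative signal induces the single posterior $\mu$, to which the receiver responds with action $1$ (ties broken for the sender), so $V\equiv 1=\alpha$ on all of $\mathcal P(\alpha)$ and the uninformative signal attains the guarantee. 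In both cases a qualifying signal exists.

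For the ``only if'' direction I would argue the contrapositive. Fix $\alpha\in(0,1)$ and let $\mu,\nu\in\mathcal P(\alpha)$ be the pair of rationalizing priors supplied by the proof of Theorem~\ref{Theorem: negative result three states}, which by construction satisfy $\Pi(\mu)=\Pi(\nu)=\alpha$ and whose sets of optimal signals for the two known-prior counterfactuals do not intersect. Suppose toward a contradiction that some $(\pi,S)$ has $\min_{\mu'\in\mathcal P(\alpha)}V((\pi,S)\vert\mu')\geq\alpha$. Evaluating the guarantee at $\mu$ gives $V((\pi,S)\vert\mu)\geq\alpha$, while the fact that $\Pi(\mu)$ is the value of the known-prior problem forces $V((\pi,S)\vert\mu)\leq\Pi(\mu)=\alpha$. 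Hence $V((\pi,S)\vert\mu)=\Pi(\mu)$, so $(\pi,S)$ is optimal for $\mu$; the identical computation at $\nu$ shows $(\pi,S)$ is optimal for $\nu$. This places $(\pi,S)$ in both solution sets and contradicts their disjointness. Equivalently, by Proposition~\ref{Proposition: cutoff equivalence} the signal $(\pi,S)$ would have to be equivalent to both $(\pi^{\mu},S^{\mu})$ and $(\pi^{\nu},S^{\nu})$, two classes the theorem's construction keeps apart.

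The substance of the argument thus lives entirely in the prior result: the main obstacle is already discharged inside the proof of Theorem~\ref{Theorem: negative result three states}, namely choosing the two priors with known-prior value \emph{exactly} $\alpha$ and with genuinely disjoint optimal-signal sets. The corollary needs only the sharpening that a worst-case guarantee of $\alpha$ cannot merely approach but must exactly attain the common counterfactual value $\alpha$ at both $\mu$ and $\nu$, which upgrades ``no common optimum'' (no saddle point) into ``no uniformly $\alpha$-guaranteeing signal.'' I would close by noting that this is precisely the strengthening flagged in Section~\ref{Section: model}: absent a saddle point the robust value lies strictly below $\Pi(\mu)=\alpha$ at each such boundary prior, so no signal can match the data.
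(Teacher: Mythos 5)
Your proposal is correct and follows essentially the same route as the paper: the ``only if'' direction is exactly the content of the proof of Theorem \ref{Theorem: negative result three states} (which, for $\alpha\in(0,1)$, constructs priors with $\Pi(\mu)=\Pi(\nu)=\alpha$ whose optimal-signal classes are disjoint, so that a guarantee of $\alpha$ would force a signal to be optimal for both), and the ``if'' direction is the trivial $\alpha=0$ case together with the uninformative signal for $\alpha=1$. The only cosmetic difference is that in the $\vert L\vert=1$ case the theorem's construction is a one-parameter family $\mu^{\lambda}$ rather than a literal pair, but any two distinct members of that family play the role of your $\mu$ and $\nu$.
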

Corollary \ref{Corollary: negative two by 3} follows logically from
Theorem \ref{Theorem: negative result three states} and the existence
of priors with $\Pi(\mu)=\alpha$. We direct readers interested in
the construction of such a prior to the theorem's proof.

\subsection{\label{Section: optimal experiments}Robustly optimal experiments}

The negative result in Theorem \ref{Theorem: negative result three states}
suggests that the sender might benefit from learning more about the
distribution of the variable, and that is indeed the case. In this
section of the paper, we characterize the additional information the
sender needs in order to construct a signal that is guaranteed to
work at least as well as the signal that generated the data. 

Formally, suppose that before choosing a signal $(\pi,S)$ the sender
simultaneously observes the action data $\alpha$ and the distribution
$\beta$ of output $\sigma$ generated by Blackwell \emph{experiment
}$A_{m\times n}$, where $A_{ti}\equiv\text{Prob}(\sigma=\sigma_{t}\vert\omega=\omega_{i})$,
$\beta_{t}\equiv\text{Prob}(\sigma=\sigma_{t})$, and $\mathcal{Q}(\beta)\equiv\{x\in\Delta(\Omega)\vert Ax=\beta\}.$\footnote{There are two technicalities to be addressed. First, we restrict attention
to outcome distributions $\beta$ with the property that the matrix
equation $Ax=\beta$ has solutions. We view this restriction as a
misspecification test. Second, every solution to $Ax=\beta$ necessarily
satisfies $\sum_{i}x_{i}=1$. Our approach to the non-negativity constraints
$x_{i}\geq0$ is to begin with fully supported distributions $\mu$
and work with perturbations of the form $\mu+d$ for vectors $d$
that are small in every component.} In keeping with our analysis of the sender's problem, our approach
is to identify experiments that are guaranteed to induce saddle points
in the \emph{experimenter's problem }
\[
\max_{(\pi,S)}\;\min_{\mu\in\mathcal{P}(\alpha)\cap\mathcal{Q}(\beta)}\;V((\pi,S)\vert\mu).
\]

\begin{defn}
Experiment $A$ is \emph{reliable }if for all pairs $(\alpha,\beta)$
there exists a signal $(\pi^{*},S^{*})$ and a prior $\mu^{*}\in\mathcal{P}(\alpha)\cap\mathcal{Q}(\beta)$
such that for all signals $(\pi,S)$ and for all priors $\mu\in\mathcal{P}(\alpha)\cap\mathcal{Q}(\beta)$
\begin{align*}
V((\pi^{*},S^{*})\vert\mu^{*}) & \geq V((\pi,S)\vert\mu^{*}), & V((\pi^{*},S^{*})\vert\mu^ {}) & \geq V((\pi^{*},S^{*})\vert\mu^{*}).
\end{align*}
If experiment $A$ is reliable, then the experimenter's problem has
a saddle point for every valid specification of the action data $\alpha$
and the experimental data $\beta$. Conversely, if $A$ is not reliable,
then there is a specification of the pair $(\alpha,\beta)$ under
which that is not the case.

We organize our analysis of optimal experiments into subsection. First,
we characterize the set of reliable\emph{ }experiments in Section
\ref{Section: reliable experiments}. Next, in Section \ref{Section: Simple experiments}
we specialize to \emph{simple} experiments that partition the state
space. Finally, in Section \ref{Section: sequential experiments}
we consider an alternative timing convention in which the sender chooses
the experiment after observing the action data.
\end{defn}

\subsubsection{\label{Section: reliable experiments}Reliable experiments}

The properties of the set $\mathcal{Q}(\cdot)$ of priors consistent
with the outcome of the sender's experiment are characterized by the
properties of the null space $\mathcal{D}\equiv\{x\vert Ax=0\}$ of
the stochastic matrix $A$. While the distribution of the state of
the world is exactly identified if $\mathcal{D}=\{0\}$, we have more
generally that prior $\nu$ is consistent with experimental data $A\mu$
if and only if $(\nu-\mu)\in\mathcal{D}$. We make repeated use of
the fact that 
\[
\sum_{i}d_{i}=\sum_{i}\sum_{t}A_{ti}d_{i}=\sum_{t}\sum_{i}A_{ti}d_{i}=0
\]
for all $d\in\mathcal{D}$ and further normalize nonzero \emph{directions}
in $\mathcal{D}$ to satisfy
\begin{align*}
\sum_{i\in H}d_{i} & \geq0, & \sum_{i\in H}d_{i}=0 & \implies\sum_{i\in H}\Delta_{i}d_{i}\geq0, & \sum_{i\in H}d_{i}=\sum_{i\in H}\Delta_{i}d_{i}=0 & \implies d_{\max\{i\vert d_{i}\neq0\}}>0.
\end{align*}
Normalizations in hand, the set of reliable experiments has a simple
characterization. 
\begin{defn}
\label{defn: ordered}Experiment $A$ is \emph{ordered }if $\sum_{j\geq i}d_{j}\geq0$
and $\sum_{j\geq i}\Delta_{j}d_{j}\geq0$ for all directions $d\in\mathcal{D}$
and for all states $i=1,...,\max L+1$.
\end{defn}
Consider the first condition $\sum_{j\geq i}d_{j}\geq0$ for all $i=1,...,\max L+1$.
If $d$ satisfies that condition then for every cutoff signal $(\pi,S)$
movements in the positive direction $\mu+d$ increase the probability
with which $(\pi,S)$ recommends the sender's preferred action to
the receiver. Conversely, if $d$ fails the first criterion in Definition
\ref{defn: ordered} then there exists a cutoff signal $(\pi^{i,z},S^{i,z})$
such that movements in the direction $\mu+d$ reduce the probability
that the better message is sent. Accordingly, we call the first criterion
the \emph{frequency criterion. }

Consider instead the second condition $\sum_{j\geq i}\Delta_{j}d_{j}\geq0$
for all $i=1,...,\max L+1$. If $d$ satisfies that condition then
for every prior $\mu$ and every cutoff signal movements in the positive
direction $\mu+d$ relax the incentive compatibility constraint $\sum_{i}\Delta_{i}(\mu+d)^{1}(\omega_{i})\geq0$
for the posterior induced by message $1$. Conversely, if $d$ fails
the condition then there exists a cutoff signal under which movements
in the direction $\mu+d$ tighten that incentive compatibility constraint.
We call the second condition the \emph{incentive compatibility criterion. }

For the purposes of illustration, recall Example \ref{example: not monotone iin FOSD}.
There, movements in the direction $d\equiv\nu-\mu=(-1/2,1/2,0)$ satisfy
the frequency criterion and the optimal signal $(\pi^{\mu},S^{\mu})$
for prior $\mu$ recommends the sender's preferred action with higher
probability under prior $\mu+\varepsilon d$ than under prior $\mu$.
However, because $(\Delta_{1}d_{1},\Delta_{2}d_{2},\Delta_{3}d_{3})=(1,-1/2,0)$,
movements in direction $d$ do not satisfy the frequency criterion
and the receiver's best response to message $1$ is action $0$. Accordingly,
under the interpretation of the sender's problem as a special case
of the experimenter's problem in which $A$ has rank $1$ and the
null space $\mathcal{D}$ of $A$ is the entire set of directions
with $\sum_{i}d_{i}=0$, our order criterion fails.

More generally, if both the frequency criterion and the incentive
compatibility criterion are satisfied the identified set is one dimensional
in the appropriate sense and we recover the positive results of Theorem
\ref{Theorem: positive 2 states}. On the other hand, if one of the
criteria fails, then the problem remains multidimensional and the
sender is in a position to benefit from additional information about
the distribution of the state of the world.

\begin{thm}
\label{Theorem: reliable iff ordered}Suppose there are two actions
and three or more states. Experiment $A$ is reliable if and only
if $A$ is ordered. 
\end{thm}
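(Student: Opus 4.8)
The plan is to reduce the experimenter's problem to the one-dimensional logic of Theorem \ref{Theorem: positive 2 states}, using Proposition \ref{Proposition: cutoff equivalence} to restrict attention to cutoff signals. I would first record the value of a cutoff signal $(\pi^{i,z},S^{i,z})$ under an arbitrary prior $\mu$ in terms of three linear quantities: the probability of the high message $P_1(\mu)\equiv z\mu_i+\sum_{j>i}\mu_j$, the payoff gap after the high message $g_1(\mu)\equiv z\Delta_i\mu_i+\sum_{j>i}\Delta_j\mu_j$, and the gap after the low message $g_0(\mu)\equiv\sum_{j<i}\Delta_j\mu_j+(1-z)\Delta_i\mu_i$. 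The value equals $P_1(\mu)$ when $g_1(\mu)\geq0>g_0(\mu)$, equals $1$ when both gaps are nonnegative, equals $1-P_1(\mu)$ when $g_0(\mu)\geq0>g_1(\mu)$, and equals $0$ otherwise. Abbreviating $F_i(d)\equiv\sum_{j\geq i}d_j$ and $G_i(d)\equiv\sum_{j\geq i}\Delta_j d_j$, the increments along a direction $d\in\mathcal D$ are $P_1(\mu+d)-P_1(\mu)=zF_i(d)+(1-z)F_{i+1}(d)$ and $g_1(\mu+d)-g_1(\mu)=zG_i(d)+(1-z)G_{i+1}(d)$; these identities are what tie the frequency and incentive compatibility criteria to the behavior of cutoff signals.

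For sufficiency I would work through the cone $K\equiv\{d\in\mathcal D\mid F_i(d)\geq0\text{ and }G_i(d)\geq0\text{ for all }i\leq\max L+1\}$. The central structural claim is that orderedness is equivalent to the statement $\mathcal D=K\cup(-K)$, i.e. $K$ totally preorders $\mathcal D$: for any nonzero $e\in\mathcal D$, the normalized representative of the line through $e$ lies in $K$ by Definition \ref{defn: ordered}, so either $e\in K$ or $-e\in K$. Given a valid pair $(\alpha,\beta)$, the feasible set $F\equiv\mathcal P(\alpha)\cap\mathcal Q(\beta)$ is compact, so it has a $K$-minimal element $\mu^*$, which --- because the increment identities make $\Pi(\cdot)$ nondecreasing on $K$ --- also minimizes $\Pi$ over $F$. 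Taking the optimal signal $(\pi^{\mu^*},S^{\mu^*})$ at $\mu^*$, every feasible displacement $\mu-\mu^*$ with $\mu\in F$ lies in $K$; writing $i^*\equiv i(\mu^*)\in L$ (so that both $i^*$ and $i^*+1$ fall in the range of Definition \ref{defn: ordered}) and using $g_1(\mu^*)=0$, the increment identities give $g_1(\mu)\geq0$ and $P_1(\mu)\geq P_1(\mu^*)=\Pi(\mu^*)$. Hence the receiver still takes the sender's preferred action after the high message and does so at least as often, so $V((\pi^{\mu^*},S^{\mu^*})\mid\mu)\geq\Pi(\mu^*)$ for every $\mu\in F$ and $((\pi^{\mu^*},S^{\mu^*}),\mu^*)$ is a saddle point. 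The ineligible case, where $\min_F\Pi=1$ and a single uninformative signal secures the value, is immediate.

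For necessity I would argue the contrapositive through the same characterization: if $A$ is not ordered then $K$ fails to totally order $\mathcal D$, so some direction $d\in\mathcal D$ has $F_i(d)<0$ or $G_i(d)<0$ at an index $i\leq\max L+1$. I would build a fully supported eligible prior $\mu$ whose optimal threshold is $i(\mu)=i$ and set $\nu\equiv\mu+\varepsilon d$ with $\varepsilon$ small enough to keep $\nu$ in the simplex (the perturbation device noted in the experiment footnote manages the nonnegativity constraints). Since $d\in\mathcal D$ we have $A\mu=A\nu\equiv\beta$, and the construction is arranged so that $\Pi(\mu)=\Pi(\nu)=\alpha$, placing both priors in $F$ as joint minimizers of $\Pi$. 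A violation of the frequency criterion makes $(\pi^{\mu},S^{\mu})$ send the high message strictly less often at $\nu$, while a violation of the incentive compatibility criterion flips the receiver's best response to the high message to the sender's disfavored action at $\nu$; either way $V((\pi^{\mu},S^{\mu})\mid\nu)<\alpha$, so $(\pi^{\mu},S^{\mu})$ is not optimal at $\nu$, and a symmetric construction makes $(\pi^{\nu},S^{\nu})$ suboptimal at $\mu$. Since $\min_{\mu\in F}\max_{(\pi,S)}V((\pi,S)\mid\mu)=\min_{\mu\in F}\Pi(\mu)=\alpha$, a saddle point requires a single signal with $V(\cdot\mid\mu')\geq\alpha$ for all $\mu'\in F$; by Proposition \ref{Proposition: cutoff equivalence} such a signal would be optimal at both $\mu$ and $\nu$, which is impossible, so $A$ is not reliable.

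Two steps look the most delicate. The first is the equivalence between orderedness and $\mathcal D=K\cup(-K)$, where the neutral directions (those with all $F_i(d)=G_i(d)=0$) must be handled by the final tie-break in the normalization and shown to leave cutoff values unchanged, and where the existence of a $K$-minimal element of $F$ must be extracted from a merely total preorder on a compact set. The second is the bookkeeping in the necessity construction: choosing $\mu$, the split $z(\mu)$, and the scale $\varepsilon$ so that $\Pi(\mu)=\Pi(\nu)$ holds exactly while the single violated criterion still forces $(\pi^{\mu},S^{\mu})$ strictly below $\alpha$ at $\nu$. This parallels the pair-of-priors argument behind Theorem \ref{Theorem: negative result three states}, and I expect to reuse its mechanism of exhibiting two counterfactual problems with equal value but disjoint solution sets.
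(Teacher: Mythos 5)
Your sufficiency argument is essentially the paper's (Proposition \ref{Experiments positive}) in different packaging: the increment identities tying the two order criteria to the frequency and incentive compatibility of a cutoff signal, a minimizer of the feasible set, and the cutoff signal at that minimizer. Organizing the feasible set by the cone $K$ rather than by $\Pi$ is a workable variant (the paper instead handles the negative direction $\mu^*-d$ case by case, showing it is excluded from $\mathcal{P}(\alpha)$ because it strictly lowers $\Pi$), and the delicate points you flag there --- existence of a $K$-minimal element, neutral directions, $i(\mu^*)\in L$ --- are all resolvable.

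The necessity half has a genuine gap. You set $\nu\equiv\mu+\varepsilon d$ for a violating direction $d$ and assert that ``the construction is arranged so that $\Pi(\mu)=\Pi(\nu)=\alpha$.'' This is not arrangeable: when $i(\mu+\varepsilon d)=i(\mu)=i$, the change $\Pi(\mu+\varepsilon d)-\Pi(\mu)$ is proportional to $\sum_{j>i}d_j+\vert\Delta_i\vert^{-1}\sum_{j>i}\Delta_j d_j$, a quantity determined entirely by $d$ and $i$ and not by your choices of $\mu$, $z$, or $\varepsilon$; it vanishes only for non-generic $d$. All that can be guaranteed (the paper's Lemma \ref{lemma: Rationalizing Q}) is that \emph{at least one} of $\mu\pm\varepsilon d$ stays in $\mathcal{P}(\alpha)$, and you do not control which. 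This is fatal for your ``single violated criterion'' plan: a violation of, say, the frequency criterion along $+d$ defeats $(\pi^{\mu},S^{\mu})$ only at $\mu+\varepsilon d$, and if the feasible perturbation happens to be $\mu-\varepsilon d$ the frequency of the high message \emph{increases} there and the signal survives. The missing ingredient is the paper's Lemma \ref{Lemma: opposite signs}: non-orderedness always yields an index at which the frequency and incentive compatibility sums have strictly \emph{opposite} signs, and $z$ is then chosen so that one perturbation direction kills the signal through a frequency drop and the other through an incentive compatibility failure --- so the signal fails at whichever of $\mu\pm\varepsilon d$ turns out to be feasible. (Two smaller omissions: the violating index may be $\max L+1$, which lies outside the range of attainable thresholds $i(\mu)\in L$ and needs the separate treatment the paper gives it; and in the incentive-compatibility-failure branch one must verify that the messages losing incentive compatibility carry positive probability, so that the value drops strictly below $\alpha$.)
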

We develop the proof of Theorem \ref{Theorem: reliable iff ordered}
in several steps, with the positive result first as a standalone proposition.
\begin{prop}
\label{Experiments positive}Suppose there are two actions and three
or more states. If $A$ is ordered then $A$ is reliable.
\end{prop}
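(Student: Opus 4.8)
The plan is to exhibit an explicit saddle point whose adversarial prior is the ``lowest'' feasible distribution in the sense induced by the normalization of directions, and whose signal is the cutoff signal optimal for that prior. Concretely, I would let $\mu^{*}$ be the lexicographically minimal element of $\mathcal{P}(\alpha)\cap\mathcal{Q}(\beta)$ under the order that declares $d\in\mathcal{D}$ positive exactly when it satisfies the three normalization conditions (minimize $\sum_{i\in H}\mu_{i}$, break ties by $\sum_{i\in H}\Delta_{i}\mu_{i}$, then by the highest-index coordinate), and take the candidate signal to be $(\pi^{\mu^{*}},S^{\mu^{*}})$. Because any two feasible priors differ by an element of the null space $\mathcal{D}$ and $\mu^{*}$ is lex-minimal, every feasible $\mu$ can be written $\mu=\mu^{*}+d$ with $d$ a nonnegative multiple of a direction; this is the property I will exploit. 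One saddle inequality is immediate: Proposition \ref{Proposition: cutoff equivalence} gives $V((\pi^{\mu^{*}},S^{\mu^{*}})\vert\mu^{*})\geq V((\pi,S)\vert\mu^{*})$ for every signal, so $(\pi^{\mu^{*}},S^{\mu^{*}})$ is a best response to $\mu^{*}$.

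The substance is the second inequality, that $\mu^{*}$ is a worst case for the fixed signal. I would treat the eligible case first, where $i^{*}\equiv i(\mu^{*})\in L$, $z^{*}\equiv z(\mu^{*})<1$, and the message-$1$ constraint binds at $\mu^{*}$ so that $z^{*}\Delta_{i^{*}}\mu^{*}_{i^{*}}+\sum_{j>i^{*}}\Delta_{j}\mu^{*}_{j}=0$ and $V((\pi^{\mu^{*}},S^{\mu^{*}})\vert\mu^{*})=z^{*}\mu^{*}_{i^{*}}+\sum_{j>i^{*}}\mu^{*}_{j}=\Pi(\mu^{*})$. For feasible $\mu=\mu^{*}+d$ I would evaluate the same signal using the identities
\begin{align*}
z^{*}d_{i^{*}}+\sum_{j>i^{*}}d_{j} & =z^{*}\sum_{j\geq i^{*}}d_{j}+(1-z^{*})\sum_{j>i^{*}}d_{j},\\
z^{*}\Delta_{i^{*}}d_{i^{*}}+\sum_{j>i^{*}}\Delta_{j}d_{j} & =z^{*}\sum_{j\geq i^{*}}\Delta_{j}d_{j}+(1-z^{*})\sum_{j>i^{*}}\Delta_{j}d_{j}.
\end{align*}
Since $i^{*}\in L$ forces $i^{*}$ and $i^{*}+1$ into $\{1,\dots,\max L+1\}$ and $d$ is a positive direction, orderedness makes every partial sum on the right nonnegative, so both left-hand sides are nonnegative. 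The second identity shows the message-$1$ constraint still holds at $\mu$, and the receiver keeps playing $1$ (ties are irrelevant because the message-$0$ posterior is supported on states with $\Delta_{j}<0$, as all of $1,\dots,i^{*}$ are low); the first shows the probability of message $1$ equals $\Pi(\mu^{*})+z^{*}d_{i^{*}}+\sum_{j>i^{*}}d_{j}\geq\Pi(\mu^{*})$. Hence $V((\pi^{\mu^{*}},S^{\mu^{*}})\vert\mu)\geq\Pi(\mu^{*})=V((\pi^{\mu^{*}},S^{\mu^{*}})\vert\mu^{*})$, closing the saddle. To finish I would dispose of the ineligible case: if $\sum_{i}\Delta_{i}\mu^{*}_{i}\geq0$ then $(\pi^{\mu^{*}},S^{\mu^{*}})$ is the pooling signal of value $1$, and applying the incentive criterion at $i=1$ to $d=\mu-\mu^{*}$ gives $\sum_{i}\Delta_{i}\mu_{i}\geq0$ for every feasible $\mu$, so the value is $1$ throughout and both inequalities hold trivially.

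The step I expect to be the main obstacle is not this algebra but the construction and justification of $\mu^{*}$. I must argue that $\mathcal{P}(\alpha)\cap\mathcal{Q}(\beta)$ is a nonempty compact polytope (using the linear-inequality description of $\mathcal{P}(\alpha)$ coming from Proposition \ref{Proposition : Identification} together with $\mathcal{Q}(\beta)=\{x:Ax=\beta\}$), and that the normalization really does define a total, addition-compatible order on $\mathcal{D}$, so that a lex-minimum $\mu^{*}$ exists as the output of a finite sequence of tie-breaking linear programs and every feasible difference $\mu-\mu^{*}$ is a nonnegative multiple of a direction. Care is needed at the third normalization tier, where residual ties are resolved by the highest-index coordinate consistently with how directions are oriented, and I would also verify that the recorded facts $i(\mu)\in L$ for all $\mu$, and $i(\mu)\in\mathrm{supp}(\mu)$ with $z(\mu)<1$ for eligible $\mu$, hold at $\mu^{*}$, so that the binding-constraint value computation $V((\pi^{\mu^{*}},S^{\mu^{*}})\vert\mu^{*})=\Pi(\mu^{*})$ is legitimate.
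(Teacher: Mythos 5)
Your proposal is correct, but it distributes the work differently than the paper does, and the difference is substantive. The paper's proof takes $\mu^{*}$ to be a minimizer of $\Pi(\cdot)$ on $\mathcal{P}(\alpha)\cap\mathcal{Q}(\beta)$ (existence is cheap: compactness plus continuity of $\Pi$), proves the same positive-direction inequalities you do --- orderedness gives $z^{*}d_{i^{*}}+\sum_{j>i^{*}}d_{j}\geq0$ and $z^{*}\Delta_{i^{*}}d_{i^{*}}+\sum_{j>i^{*}}\Delta_{j}d_{j}\geq0$, via a $\min$ bound rather than your convex-combination identities, which are equivalent --- and then spends the bulk of its length on a three-case analysis of movements in the \emph{negative} direction $\nu^{*}\equiv\mu^{*}-d$: either both priors are ineligible and the payoff is $1$ at each, or the displacement is payoff-irrelevant (the case $d_{i^{*}}=0$ with vanishing sums), or $\Pi(\nu^{*})<\Pi(\mu^{*})$ so that $\nu^{*}$ exits $\mathcal{P}(\alpha)$ by minimality. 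Your lex-minimal choice of $\mu^{*}$ makes that entire case analysis unnecessary, since by construction every feasible prior lies in the positive cone relative to $\mu^{*}$; what you pay is exactly the order-theoretic overhead you flag: you must check that the three normalization tiers define a cone closed under addition (tiers one and two are linear functionals; tier three, positivity of the highest-index nonzero coordinate, survives addition because two vectors cannot cancel at their common top index), and that the minimum is attained, which works as a finite sequence of linear programs --- minimize $\sum_{i\in H}\mu_{i}$, then $\sum_{i\in H}\Delta_{i}\mu_{i}$, then $\mu_{n},\mu_{n-1},\dots,\mu_{1}$ on successive argmin faces --- after which any feasible $\mu\neq\mu^{*}$ has $\mu-\mu^{*}$ normalized, by the induction you sketch. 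In exchange, your route makes the role of orderedness more transparent (it is precisely the condition under which the normalization order is aligned with the payoff of the cutoff signal) and shortens the saddle verification to one half-page of algebra, while the paper's route needs no order machinery for existence and identifies the adversarial prior with an economically meaningful object, namely the rationalizing prior with the worst counterfactual value $\Pi$. Your treatment of the ineligible case (applying the incentive criterion at $i=1$) and your use of Proposition \ref{Proposition: cutoff equivalence} for the best-response half of the saddle are both sound, and they match the paper's.
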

The basic idea behind Proposition \ref{Experiments positive} is as
follows. Suppose $A$ is ordered and consider any specification of
the data $(\alpha,\beta)$. Let prior $\mu\in\mathcal{P}(\alpha)\cap\mathcal{Q}(\beta)$
minimize the value function $\Pi(\cdot)$ on the identified set and
let $(\pi^{\mu},S^{\mu})$ be the optimal signal for the counterfactual
in which the prior is known to be $\mu$. 

Consider first movements in the positive direction $\mu+d$. Because
$A$ is ordered, these movements (i) increase the probability the
high signal is sent per the frequency criterion and (ii) relax the
incentive compatibility constraint for that signal per the incentive
compatibility criterion. Accordingly, such movements improve the performance
of the signal $(\pi^{\mu},S^{\mu})$. Consider instead movements in
the negative direction $\mu-d$. As we show, these movements unambiguously
degrade the value $\Pi(\cdot)$ of the prior. Because we began by
assuming that $\mu$ minimizes that value on the identified set, $\mu-d$
is thus excluded from $\mathcal{P}(\alpha)\cap\mathcal{Q}(\beta)$.
Accordingly, $\mu$ is adversarial and $(\pi^{\mu},S^{\mu})$ is robustly
optimal.

\begin{prop}
\label{Proposition: Negative result}Suppose there are two actions
and three or more states. If $A$ is not ordered then there exists
a data set $(\alpha,\beta)$ such that $\min_{\mu\in\mathcal{P}(\alpha)\cap\mathcal{Q}(\beta)}V((\pi,S)\vert\mu)<\alpha$
for all signals $(\pi,S)$.
\end{prop}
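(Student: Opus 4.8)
The plan is to mirror the construction behind Theorem \ref{Theorem: negative result three states}, which handles the special case in which $A$ has rank one and $\mathcal{D}$ is the entire hyperplane $\{d:\sum_i d_i=0\}$. For a general non-ordered $A$, I would exhibit a pair of priors $\mu,\nu\in\mathcal{P}(\alpha)\cap\mathcal{Q}(\beta)$ with (i) $\Pi(\mu)=\Pi(\nu)=\alpha$ and (ii) no signal optimal for both. Granting such a pair, the conclusion is immediate from Proposition \ref{Proposition: cutoff equivalence}: any signal $(\pi,S)$ fails, up to equivalence, to be the optimal signal for at least one of $\mu,\nu$ — say $\mu$ — so that $V((\pi,S)\vert\mu)<\Pi(\mu)=\alpha$, and since $\mu$ lies in the identified set the guarantee $\min_{\mu'}V((\pi,S)\vert\mu')$ is strictly below $\alpha$.

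To build the pair, I would use that non-orderedness makes Definition \ref{defn: ordered} fail: there is a normalized direction $d\in\mathcal{D}$ and an index $i^*\in\{1,\dots,\max L+1\}$ with either $\sum_{j\geq i^*}d_j<0$ (frequency) or $\sum_{j\geq i^*}\Delta_j d_j<0$ (incentive compatibility). In either case I choose a fully supported prior $\mu$ on the threshold boundary $\sum_{j\geq i^*}\Delta_j\mu_j=0$; this forces $\mu$ eligible, $i(\mu)=i^*-1$, $z(\mu)=0$, and value $\Pi(\mu)=\sum_{j\geq i^*}\mu_j=:\alpha$ with the message-$1$ incentive constraint exactly binding for $(\pi^\mu,S^\mu)$. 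Setting $\beta=A\mu$ and $\nu=\mu+\varepsilon d$, membership in $\mathcal{Q}(\beta)$ is automatic because $d\in\mathcal{D}$, and small $\varepsilon$ together with full support keeps $\nu\in\Delta(\Omega)$. I would pick $\varepsilon$ so that the perturbation crosses into a new threshold region while $\Pi$ stays equal to $\alpha$, exactly as in Example \ref{example: not monotone iin FOSD}, where $\Pi$ is locally constant along the violating direction; this places $\nu$ on the value-$\alpha$ level set with $i(\nu)\neq i(\mu)$.

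The core of the argument is disjointness of the solution sets. By Proposition \ref{Proposition: cutoff equivalence}, every $\mu$-optimal signal is equivalent to $(\pi^\mu,S^\mu)$ for $\mu$; writing $g_\mu(s)=\sum_i\Delta_i\pi(s\vert\omega_i)\mu_i$, this forces its action-$1$ messages to be fed only by states $j\geq i^*$ and, since these aggregate to the binding value $\sum_{j\geq i^*}\Delta_j\mu_j=0$ while each is individually nonnegative, to be exactly indifferent, $g_\mu(s)=0$. Passing to $\nu$ changes each score to $g_\nu(s)=\varepsilon\sum_{j\geq i^*}\Delta_j\pi(s\vert\omega_j)d_j$, and these aggregate to $\varepsilon\sum_{j\geq i^*}\Delta_j d_j<0$ in the incentive-compatibility case (the analogous frequency computation instead strictly lowers the probability of the high message). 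The signal therefore cannot reproduce, under $\nu$, the cutoff-$i(\nu)$ structure that Proposition \ref{Proposition: cutoff equivalence} shows uniquely characterizes the $\nu$-optimal signals, so $V((\pi,S)\vert\nu)<\Pi(\nu)=\alpha$.

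The delicate step is precisely this last one, which is why the two-message flip of Example \ref{example: not monotone iin FOSD} cannot simply be quoted: a $\mu$-optimal signal may split the high group into many $\mu$-indifferent messages whose scores $g_\nu(s)$ have either sign, so that only some of them flip to the disfavored action under $\nu$. Genuinely ruling out that enough survive to hold the $\nu$-value at $\alpha$ requires a careful accounting of the indifference constraints $g_\mu(s)=0$ against the strictly negative aggregate under $\nu$, and I expect it may be necessary to work with a richer subset of the value-$\alpha$ boundary of $\mathcal{P}(\alpha)\cap\mathcal{Q}(\beta)$ rather than the single pair $\{\mu,\nu\}$, so that every candidate signal is defeated at some prior in the set. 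Pinning down $\varepsilon$ so that $\Pi(\nu)=\alpha$ exactly, rather than merely $\Pi(\nu)\geq\alpha$, is a secondary point that the threshold-boundary construction of $\mu$ is designed to deliver.
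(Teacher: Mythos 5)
Your high-level architecture matches the paper's (a violating direction $d\in\mathcal{D}$, a fully supported prior whose optimal signal has exactly binding incentive constraints, a perturbation along $d$, and Proposition \ref{Proposition: cutoff equivalence} to reduce to signals equivalent to $(\pi^{\mu},S^{\mu})$), but there is a genuine gap at the step you yourself flag as delicate, and it traces back to your choice of base prior. You place $\mu$ on the threshold boundary, forcing $z(\mu)=0$. That choice destroys the two facts the paper's proof runs on. First, membership of the perturbed prior in the identified set: with $z(\mu)=0$ the map $t\mapsto\Pi(\mu+td)$ has a kink at $t=0$ (the threshold $i(\cdot)$ jumps on one side), so nothing guarantees that either $\mu+\varepsilon d$ or $\mu-\varepsilon d$ satisfies $\Pi\geq\alpha$; your claim that ``$\Pi$ is locally constant along the violating direction'' is an artifact of Example \ref{example: not monotone iin FOSD}, where it holds only because $d_{3}=0$ there, and it is false for generic $d\in\mathcal{D}$. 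The paper avoids this precisely by using Lemma \ref{Lemma: rationalizing P} to build $\mu$ with $z(\mu)\in(0,1)$ \emph{strictly interior}: then $i(\mu\pm\varepsilon d)=i(\mu)$, $\Pi$ is locally linear along $d$, and Lemma \ref{lemma: Rationalizing Q} delivers that at least one of $\mu\pm\varepsilon d$ lies in $\mathcal{P}(\Pi(\mu))$. Second, your pair-of-priors framing requires $\Pi(\nu)=\alpha$ exactly (otherwise non-optimality at $\nu$ only gives $V<\Pi(\nu)$, not $V<\alpha$); you call this ``secondary,'' but with $d$ confined to $\mathcal{D}$ and $\varepsilon$ required to be small, you have no degree of freedom to enforce it. The paper never needs it: it bounds the \emph{value} of the fixed signal at the surviving perturbed prior directly below $\alpha$, regardless of what $\Pi$ is there.

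The second device you are missing is the choice of $z$ to satisfy two strict inequalities simultaneously, e.g.\ $zd_{i}+\sum_{j>i}d_{j}<0<z\Delta_{i}d_{i}+\sum_{j>i}\Delta_{j}d_{j}$ in the case $\sum_{j\geq i}\Delta_{j}d_{j}>0>\sum_{j\geq i}d_{j}$ of Lemma \ref{Lemma: opposite signs}. This makes \emph{both} perturbation directions fatal for any signal equivalent to $(\pi^{\mu},S^{\mu})$: if $\mu+\varepsilon d$ is in the identified set, then $S^{1}(\mu+\varepsilon d)\subset S^{1}(\mu)$ and the total mass on $S^{1}(\mu)$ falls to $\alpha+\varepsilon(zd_{i}+\sum_{j>i}d_{j})<\alpha$ (the frequency effect alone closes the case); if instead $\mu-\varepsilon d$ is in the identified set, the aggregate incentive sum over $S^{1}(\mu)$ becomes strictly negative, and --- because every message in $S^{1}(\mu)$ has \emph{exactly} zero incentive sum under $\mu$ and full support of $\mu$ bounds the flipped message's probability away from zero uniformly in $\varepsilon$ for the fixed signal --- at least one positive-probability message exits $S^{1}$, pushing the value strictly below $\alpha$. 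This is exactly the ``careful accounting'' you say you expect to need but do not carry out: it is resolved by the binding-IC-plus-full-support observation, not by enlarging the set of adversarial priors. Finally, note the paper also splits the argument into $i\in L$ and $i=\max L+1$ (placing the cutoff at $i-1$ in the latter case), a case distinction absent from your sketch.
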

The converse of Proposition \ref{Experiments positive} follows logically
from Proposition \ref{Proposition: Negative result} because the existence
of a signal guaranteeing payoff at least $\alpha$ is a necessary
condition for the experimenter's problem to have a saddle point. We
develop the result in parts. First, we show that if our order criterion
fails then there necessarily exists a direction $d$ under which there
is tension between the frequency criterion and the incentive compatibility
criterion.
\begin{lem}
\label{Lemma: opposite signs}Suppose there are two actions and three
or more states. A is not ordered if and only if there exists a state
$1<i\leq\max L+1$ and a direction $d\in\mathcal{D}$ such that either
$\sum_{j\geq i}d_{j}>0>\sum_{j\geq i}\Delta_{j}d_{j}$ or $\sum_{j\geq i}\Delta_{j}d_{j}>0>\sum_{j\geq i}d_{j}$.
\end{lem}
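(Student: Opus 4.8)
The statement is an equivalence, so I will prove the two implications separately. Throughout I would write $F_i(d)\equiv\sum_{j\ge i}d_j$ for the frequency partial sum and $G_i(d)\equiv\sum_{j\ge i}\Delta_j d_j$ for the incentive-compatibility partial sum, so that, by Definition \ref{defn: ordered}, $A$ is ordered exactly when $F_i(d)\ge0$ and $G_i(d)\ge0$ for every normalized direction $d$ and every $i\in\{1,\dots,\max L+1\}$. The key structural fact I will exploit repeatedly is that every state $j\le\max L$ lies in $L$, so $\Delta_j<0$.

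The ``if'' direction is immediate. Suppose some $d\in\mathcal{D}$ and some $i$ with $1<i\le\max L+1$ make $F_i(d)$ and $G_i(d)$ strictly opposite in sign. Since $\mathcal{D}$ is closed under negation, I would pass to the normalized representative $\hat d\in\{d,-d\}$ and check the four sign patterns: in each case exactly one of $F_i(\hat d),G_i(\hat d)$ is strictly negative, so either the frequency or the incentive-compatibility criterion fails at the admissible index $i$, and hence $A$ is not ordered.

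The substantive direction is the converse, which I would establish by contraposition. Because $F_i(d)$ and $G_i(d)$ are strictly opposite in sign precisely when $F_i(d)G_i(d)<0$, the negation of the existence claim is that $F_i(d)G_i(d)\ge0$ for every $d\in\mathcal{D}$ and every $i\in\{2,\dots,\max L+1\}$. Assuming this, I would fix a normalized direction $\hat d$ and prove $F_i(\hat d)\ge0$ and $G_i(\hat d)\ge0$ for all $i\in\{1,\dots,\max L+1\}$ by \emph{downward} induction on $i$. For the base case $i=\max L+1$, normalization gives $F_{\max L+1}(\hat d)=\sum_{j\in H}\hat d_j\ge0$; then either $F_{\max L+1}>0$, whence the no-opposite-sign hypothesis forces $G_{\max L+1}\ge0$, or $F_{\max L+1}=0$, whence the second normalization clause gives $G_{\max L+1}=\sum_{j\in H}\Delta_j\hat d_j\ge0$. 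For the step from $i$ to $i-1$ with $i-1\ge2$, I would use $F_{i-1}=F_i+\hat d_{i-1}$ and $G_{i-1}=G_i+\Delta_{i-1}\hat d_{i-1}$ with $\Delta_{i-1}<0$: if $F_{i-1}<0$ then $\hat d_{i-1}<0$, so $\Delta_{i-1}\hat d_{i-1}>0$ and $G_{i-1}>G_i\ge0$, contradicting the hypothesis $G_{i-1}\le0$; a symmetric argument using $F_{i-1}\ge0$ then rules out $G_{i-1}<0$. Finally, at $i=1$ the identity $F_1(\hat d)=\sum_j\hat d_j=0$ settles the frequency criterion, while $\hat d_1=-F_2\le0$ together with $\Delta_1<0$ yields $G_1\ge G_2\ge0$.

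The main obstacle is arranging the induction so that the two partial sums are controlled \emph{simultaneously}: dropping a coordinate changes $F$ and $G$ in coupled ways, and the argument only closes because $\Delta_{i-1}<0$ converts the sign of the dropped coordinate $\hat d_{i-1}$ into a favorable sign for the increment $\Delta_{i-1}\hat d_{i-1}$. A secondary point requiring care is the two endpoints, where the frequency sum is pinned down not by the no-opposite-sign hypothesis but by the null-space identity $\sum_j\hat d_j=0$ and by the normalization $\sum_{j\in H}\hat d_j\ge0$, respectively; these are exactly the two indices at which the hypothesis is unavailable (at $i=1$) or needs the normalization to start the induction (at $i=\max L+1$).
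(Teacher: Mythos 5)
Your proposal is correct and is essentially the paper's argument recast in contrapositive/inductive form: your downward induction step (using $\Delta_{i-1}<0$ to flip the sign of the increment), your base case at $\max L+1$ (via the normalization of directions), and your treatment of $i=1$ (via $\sum_j \hat d_j=0$) are exactly the three computations the paper performs when it takes the largest violating low state and derives strictly opposite signs there. No gaps.
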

Second, we show that there exists a prior $\mu$ and corresponding
signal $(\pi^{\mu},S^{\mu})$ for which the tension between the frequency
criterion and incentive compatibility criterion is relevant. 
\begin{lem}
\label{Lemma: rationalizing P}Suppose there are two actions and three
or more states. If $i\in L$ and $z\in(0,1)$ then there exists a
prior $\mu$ with full support and $(i(\mu),z(\mu))=(i,z)$. 
\end{lem}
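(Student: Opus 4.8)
The plan is to first solve the problem with a two-point prior supported on $\{\omega_i,\omega_n\}$ and then perturb that prior to secure full support while holding the two target quantities fixed. Since $i\in L$ we have $\Delta_i<0$, and since $\omega_n$ is the top state the exclusion of redundant actions gives $\Delta_n>0$, so the two states are distinct and available. I would set $p\equiv\Delta_n/(\Delta_n+z\vert\Delta_i\vert)$, which lies in $(0,1)$ because $z\in(0,1)$, and consider $\mu^0\equiv p\,\delta(\omega_i)+(1-p)\,\delta(\omega_n)$. Writing $T(i')\equiv\sum_{j\geq i'}\Delta_j\mu^0_j$, a direct computation gives $T(i')=\Delta_n(1-p)>0$ for every $i'\in\{i+1,\dots,n\}$ (the only mass at or above such $i'$ sits on $\omega_n$), while the choice of $p$ forces $T(i)=\vert\Delta_i\vert p(z-1)<0$. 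Hence $i(\mu^0)=i$ and $z(\mu^0)=\Delta_n(1-p)/(\vert\Delta_i\vert p)=z$; the only defect is that $\mu^0$ lacks full support.

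To repair support, for small $\varepsilon>0$ I would place mass $\varepsilon$ on each of the $n-2\geq 1$ states outside $\{\omega_i,\omega_n\}$ and choose the remaining masses $a\equiv\mu_i$ and $b\equiv\mu_n$ to solve the two linear equations $a+b=1-(n-2)\varepsilon$ and $\sum_{j>i}\Delta_j\mu_j=z\vert\Delta_i\vert a$. The coefficient matrix of this system has determinant $-(z\vert\Delta_i\vert+\Delta_n)\neq 0$, so $(a,b)$ is uniquely determined and depends continuously on $\varepsilon$, converging to $(p,1-p)$ as $\varepsilon\to 0$. In particular $a,b>0$ for all sufficiently small $\varepsilon$, so the resulting $\mu$ has full support, and the second equation is precisely the requirement $z(\mu)=z$ once the index $i(\mu)=i$ has been pinned down.

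The remaining task, which I expect to be the main obstacle, is to verify $i(\mu)=i$ for the perturbed prior, since this is not a single equality but the full system $T(i)<0$ together with $T(i')\geq 0$ for every $i'>i$. The first inequality survives verbatim: substituting the $z$-equation yields $T(i)=\vert\Delta_i\vert a(z-1)<0$. For the tail constraints I would argue by continuity — each $T(i')$ is linear in $\mu$, the perturbed prior converges to $\mu^0$, and the tail sums of $\mu^0$ are \emph{strictly} positive on the finite set $\{i+1,\dots,n\}$ — so all of them remain positive once $\varepsilon$ is below a common threshold. Choosing $\varepsilon$ small enough to enforce this bound and $a,b>0$ simultaneously gives $T(i)<0$ and $T(i')>0$ for all $i'>i$, hence $i(\mu)=i$. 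Eligibility, needed for the formulas for $i(\mu)$ and $z(\mu)$ to apply (and to rule out the degenerate value $z(\mu)=1$), then follows from $T(1)=\sum_{j<i}\Delta_j\mu_j+T(i)\leq T(i)<0$, using that every state below $i$ is low and so contributes nonpositively. With $i(\mu)=i$ established, the construction equation delivers $z(\mu)=z$, completing the proof.
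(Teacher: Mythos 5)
Your proof is correct and follows essentially the same construction as the paper's: both concentrate the prior on $\omega_i$ and $\omega_n$ so as to satisfy $\Delta_i z\mu_i+\sum_{j>i}\Delta_j\mu_j=0$ and then spread a vanishing amount of mass over the remaining states to secure full support (the paper via an explicit closed form in a parameter $\lambda$ close to $1$, you via a $2\times 2$ linear system plus continuity). Your explicit verification that the tail sums $\sum_{j\geq i'}\Delta_j\mu_j$ remain nonnegative for $i'>i$, which the paper leaves implicit, is a welcome addition rather than a deviation.
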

Third, and finally, we confirm that movements in either the positive
direction $+d$ or the negative direction $-d$ are consistent with
the action data.

\begin{lem}
\label{lemma: Rationalizing Q}Suppose there are two actions and three
or more states. If $\mu$ has full support and $z(\mu)\in(0,1)$ then
for each $d\in\mathcal{D}$ at least one of $\mu+\varepsilon d,\mu-\varepsilon d$
belongs to $\mathcal{P}(\Pi(\mu))$.
\end{lem}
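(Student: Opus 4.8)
The plan is to reduce membership in $\mathcal{P}(\Pi(\mu))$ to a comparison of values. I would first record that for an \emph{eligible} prior $\nu$ (one with $\sum_i\Delta_i\nu_i<0$) the action probabilities that $\nu$ rationalizes form exactly the interval $[0,\Pi(\nu)]$. By Lemma \ref{Lemma: straightforward rat}, $\nu$ rationalizes $\alpha$ iff there is a straightforward signal $\pi$ with $\sum_i\pi_i(1)\nu_i=\alpha$ whose messages are incentive compatible, i.e. $\sum_i\Delta_i\pi_i(1)\nu_i\geq0$ and $\sum_i\Delta_i\pi_i(0)\nu_i\leq0$. The feasible recommendation vectors $(\pi_i(1)\nu_i)_i$ form a polytope and $\alpha$ is a linear functional of them, so the attainable $\alpha$ form an interval; its upper endpoint is $\Pi(\nu)$ by the recorded fact $V((\pi^{\nu},S^{\nu})\vert\nu)=\Pi(\nu)$ and Proposition \ref{Proposition: cutoff equivalence}, and its lower endpoint is $0$ because eligibility makes the signal recommending $0$ in every state incentive compatible. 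Since $\Pi(\mu)\geq0$, it follows that for eligible $\nu$ one has $\nu\in\mathcal{P}(\Pi(\mu))\iff\Pi(\nu)\geq\Pi(\mu)$. It therefore suffices to show that $\Pi(\mu+\varepsilon d)\geq\Pi(\mu)$ or $\Pi(\mu-\varepsilon d)\geq\Pi(\mu)$ for small $\varepsilon$.

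Next I would show that $\Pi$ is linear on a neighborhood of $\mu$. Writing $T_i\equiv\sum_{j\geq i}\Delta_j\mu_j$ and $i^{*}\equiv i(\mu)$, the hypothesis $z(\mu)\in(0,1)$ unpacks via $z(\mu)=T_{i^{*}+1}/(|\Delta_{i^{*}}|\mu_{i^{*}})$ into the strict inequalities $T_{i^{*}}<0$ and $T_{i^{*}+1}>0$, and eligibility of $\mu$ is $T_1<0$. Because $\mu$ has full support and $\Delta_j<0$ on $L$, the tail sums are strictly increasing across the low states, so $T_i>0$ for every $i$ with $i^{*}<i\leq\max L+1$. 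Each of these inequalities is strict, hence survives replacing $\mu$ by any $\nu$ in a small ball: such $\nu$ is eligible, satisfies $\sum_{j\geq i^{*}}\Delta_j\nu_j<0$, and satisfies $\sum_{j\geq i}\Delta_j\nu_j>0$ for $i^{*}<i\leq\max L$. The first gives $i(\nu)\geq i^{*}$; the second, together with the recorded fact that $i(\nu)\in L$ (so $i(\nu)\leq\max L$), gives $i(\nu)\leq i^{*}$. Thus $i(\nu)=i^{*}$ throughout the ball, and there $\Pi(\nu)=\sum_{j>i^{*}}\bigl(1+\Delta_j/|\Delta_{i^{*}}|\bigr)\nu_j$ is a fixed linear functional, call it $\ell(\nu)$.

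Finally, since $\mu$ is interior and $\sum_i d_i=0$ for $d\in\mathcal{D}$, both $\mu\pm\varepsilon d$ lie in the simplex and in the ball for small $\varepsilon$, so $\Pi(\mu\pm\varepsilon d)=\ell(\mu)\pm\varepsilon\ell(d)=\Pi(\mu)\pm\varepsilon\ell(d)$. Whichever sign makes $\pm\varepsilon\ell(d)\geq0$ yields a prior with $\Pi\geq\Pi(\mu)$, which by the first step lies in $\mathcal{P}(\Pi(\mu))$; if $\ell(d)=0$ both do. This establishes the claim.

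I expect the main obstacle to be the second step: confirming that the threshold index $i(\cdot)$ is locally constant, so that $\Pi$ is genuinely linear near $\mu$. The delicate point is that the partial sums $\sum_{j\geq i}\Delta_j\nu_j$ are not monotone in $i$ across all states, so one cannot perturb naively; the argument leans on the strictness supplied by $z(\mu)\in(0,1)$, the monotonicity of the tail sums over the low states, and the a priori restriction $i(\nu)\in L$ to pin the index down. The interval characterization of the first step, though routine, also warrants care at its lower endpoint, where eligibility is exactly what is needed.
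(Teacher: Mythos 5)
Your argument is correct and follows essentially the same route as the paper's: local constancy of $i(\cdot)$ near $\mu$ (hence local linearity of $\Pi$) forces at least one of $\mu\pm\varepsilon d$ to weakly increase $\Pi$, and an eligible prior $\nu$ with $\Pi(\nu)\geq\Pi(\mu)$ rationalizes $\Pi(\mu)$. The only differences are cosmetic: the paper witnesses the final step with an explicit cutoff signal at threshold $i(\mu)$ and a reduced $z$ rather than your interval characterization $[0,\Pi(\nu)]$, and it is terser on why $i(\cdot)$ is locally constant --- your tail-sum monotonicity argument over the low states supplies the detail the paper elides.
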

Together, Lemmas \ref{Lemma: opposite signs}--\ref{Lemma: rationalizing P}
provide the substance of our proof of Proposition \ref{Proposition: Negative result}.
Theorem \ref{Theorem: reliable iff ordered} then follows immediately
from the proposition and its predecessor. 
\begin{cor}
\label{corollary reliable just}Suppose there are two actions and
three or more states. If experiment $A$ is ordered then for all data
sets $(\alpha,\beta)$ there exists a signal $(\pi,S)$ such that
$\min_{\mu\in\mathcal{P}(\alpha)\cap\mathcal{Q}(\beta)}V((\pi,S)\vert\mu)\geq\alpha.$
Conversely, if $A$ is not ordered then there exists a data set $(\alpha,\beta)$
such that $\min_{\mu\in\mathcal{P}(\alpha)\cap\mathcal{Q}(\beta)}\;V((\pi,S)\vert\mu)<\alpha$
for all signals $(\pi,S)$.
\end{cor}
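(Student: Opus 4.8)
The plan is to derive both directions from the machinery already in place, with essentially no new computation. For the forward direction, suppose $A$ is ordered. Proposition \ref{Experiments positive} then establishes that $A$ is reliable, so for the given data $(\alpha,\beta)$ the experimenter's problem admits a saddle point $((\pi^{*},S^{*}),\mu^{*})$ with $\mu^{*}\in\mathcal{P}(\alpha)\cap\mathcal{Q}(\beta)$. I would read off two facts from the saddle-point inequalities. The second inequality, $V((\pi^{*},S^{*})\vert\mu)\geq V((\pi^{*},S^{*})\vert\mu^{*})$ for all feasible $\mu$, shows that $(\pi^{*},S^{*})$ attains its worst-case value over $\mathcal{P}(\alpha)\cap\mathcal{Q}(\beta)$ precisely at $\mu^{*}$, so that $\min_{\mu}V((\pi^{*},S^{*})\vert\mu)=V((\pi^{*},S^{*})\vert\mu^{*})$. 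The first inequality, $V((\pi^{*},S^{*})\vert\mu^{*})\geq V((\pi,S)\vert\mu^{*})$ for all signals, shows that $(\pi^{*},S^{*})$ is optimal in the known-prior counterfactual for $\mu^{*}$, i.e.\ $V((\pi^{*},S^{*})\vert\mu^{*})=\Pi(\mu^{*})$.

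The only remaining point is to certify that this value is at least $\alpha$, and here I would invoke the membership $\mu^{*}\in\mathcal{P}(\alpha)$. Since $\mu^{*}$ rationalizes $\alpha$, Lemma \ref{Lemma: straightforward rat} supplies a straightforward signal with $S\subset A=\{0,1\}$ satisfying the incentive compatibility constraint $1\in a^{*}((\mu^{*})^{1})$ and recovering action $1$ with probability exactly $\alpha$. This feasible signal guarantees the sender payoff $\alpha$ in the counterfactual with prior $\mu^{*}$, whence $\Pi(\mu^{*})\geq\alpha$. Chaining the two identities of the previous paragraph then yields $\min_{\mu}V((\pi^{*},S^{*})\vert\mu)=\Pi(\mu^{*})\geq\alpha$, which is exactly the asserted guarantee, with $(\pi^{*},S^{*})$ serving as the required signal.

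The converse direction requires no additional work: it is verbatim the statement of Proposition \ref{Proposition: Negative result}, which produces, for any non-ordered $A$, a data set $(\alpha,\beta)$ under which every signal $(\pi,S)$ satisfies $\min_{\mu}V((\pi,S)\vert\mu)<\alpha$. Invoking that proposition closes the argument.

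I expect no serious obstacle, since the corollary is a repackaging of Theorem \ref{Theorem: reliable iff ordered} together with Propositions \ref{Experiments positive} and \ref{Proposition: Negative result}. The one step carrying genuine content is the identity $V((\pi^{*},S^{*})\vert\mu^{*})=\Pi(\mu^{*})\geq\alpha$: it is what translates the abstract existence of a saddle point into the concrete payoff floor $\alpha$, and it rests on the characterization of $\mathcal{P}(\alpha)$ as exactly those priors whose counterfactual value $\Pi(\cdot)$ is at least $\alpha$. I would therefore state that characterization cleanly — $\mu\in\mathcal{P}(\alpha)$ implies $\Pi(\mu)\geq\alpha$, via the straightforward signal of Lemma \ref{Lemma: straightforward rat} — before invoking it, so that the passage from ``saddle point exists'' to ``$\min_{\mu}V\geq\alpha$'' is transparent.
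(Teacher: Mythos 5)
Your proposal is correct and matches the paper's (implicit) justification: the forward direction is Proposition \ref{Experiments positive} plus the observation that a saddle point's value equals $\Pi(\mu^{*})$ for some $\mu^{*}\in\mathcal{P}(\alpha)$ and hence is at least $\alpha$, and the converse is Proposition \ref{Proposition: Negative result} verbatim. The only nitpick is that $\mathcal{P}(\alpha)$ is not ``exactly'' the set of priors with $\Pi(\mu)\geq\alpha$ (strictly ineligible priors can have $\Pi(\mu)=1\geq\alpha$ without rationalizing $\alpha$), but you only invoke the one implication that is true and needed.
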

Corollary \ref{corollary reliable just} justifies our reliability
criterion on payoff grounds by clarifying that the set of experiments
under which the value of the experimenter's problem with data $(\alpha,\beta)$
is at least $\alpha$ is precisely the set of ordered (and hence reliable)
experiments. While the weak inequality in the first part of the suggests
that we might refine our criterion by ranking reliable experiments
according to the guaranteed value of the information they reveal,
no such refinement is possible because the prior $\mu^{\alpha}$ with
$\mu_{1}^{\alpha}\equiv1-c_{1n}\alpha,\mu_{n}^{\alpha}\equiv c_{1n}\alpha$
has value $\Pi(\mu^{\alpha})=\alpha$. Accordingly, the value of the
experimenter's problem with data $(\alpha,A\mu)$ is at most $\alpha$
for every reliable experiment $A$.

\subsubsection{\label{Section: Simple experiments}Simple experiments}

The interpretation of our characterization of reliability is complicated
by the need to relate the physical description of the sender's experiment
$A$ to the properties of its null space $\mathcal{D}$. Aside from
whatever analytical burden this intermediate step might impose, it
renders our order criteria at least somewhat opaque. Toward a remedy,
one example of a natural and easy-to-interpret class of experiments
are those that pin down the probabilities of some collection of subsets
of the state space $\Omega$. In this section, we use Theorem \ref{Theorem: reliable iff ordered}
to determine whether or not they are reliable.
\begin{defn}
Experiment $A$ is \emph{simple }if there exists a partition $\{\Sigma_{1},...,\Sigma_{k}\}$
of $\Omega$ with the property that for each consistent outcome $\beta$
there exists a probability distribution $\mu^{\beta}\in\Delta(\{1,...,k\})$
such that $\mu\in\mathcal{Q}(\beta)\iff(\mu(\Sigma_{1}),...,\mu(\Sigma_{k}))=(\mu_{1}^{\beta},...,\mu_{k}^{\beta}).$
\end{defn}
If the columns of $A$ are the canonical basis for $\mathbb{R}^{k}$
then $A$ is simple. More broadly, $A$ is simple if and only if its
columns modulo identity are linearly independent.
\begin{prop}
\label{Proposition: simple character}Experiment $A_{m\times n}$
is simple if and only if there exists a partition $\{\Sigma_{1},...,\Sigma_{k}\}$
of $\Omega$ and a matrix $B_{m\times k}$ such that (i) $A_{i}=B_{t}$
for all $i\in\Sigma_{t}$ and (ii) $B$ has rank $k$.
\end{prop}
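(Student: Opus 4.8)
The plan is to reduce the analytic condition defining simplicity to the linear-algebraic statement that the null space $\mathcal{D}=\{x\mid Ax=0\}$ coincides with the null space of the cell-mass map. Given a partition $\{\Sigma_{1},\dots,\Sigma_{k}\}$ of $\Omega$, let $P$ be the $k\times n$ matrix with $P_{ti}=1$ if $i\in\Sigma_{t}$ and $P_{ti}=0$ otherwise, so that $P\mu=(\mu(\Sigma_{1}),\dots,\mu(\Sigma_{k}))$ records the cell masses and $\ker P=\{d\mid\sum_{i\in\Sigma_{t}}d_{i}=0\text{ for all }t\}$. The hinge of the argument is the claim that $A$ is simple with witnessing partition $\{\Sigma_{t}\}$ if and only if $\mathcal{D}=\ker P$. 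First I would unpack the definition of simplicity into the fiber condition $A\mu=A\mu'\iff P\mu=P\mu'$ for all $\mu,\mu'\in\Delta(\Omega)$: the forward implication takes $\beta=A\mu=A\mu'$ and reads off $P\mu=\mu^{\beta}=P\mu'$, while the converse sets $\beta=A\mu$ and uses $P\mu'=P\mu=\mu^{\beta}$ to place $\mu'$ in $\mathcal{Q}(\beta)$.

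I would then upgrade this simplex-level statement to the kernel equality $\mathcal{D}=\ker P$. Fix a full-support $\mu$; for any direction $d$ in the sum-zero hyperplane and small $\varepsilon$, the point $\mu+\varepsilon d$ lies in $\Delta(\Omega)$, and the fiber condition applied to $\mu,\mu+\varepsilon d$ gives $Ad=0\iff Pd=0$. Since both $\mathcal{D}$ and $\ker P$ are contained in $\{d\mid\sum_{i}d_{i}=0\}$ --- for $\mathcal{D}$ this is the computation $\sum_{i}d_{i}=\sum_{t}\sum_{i}A_{ti}d_{i}=0$ recorded earlier in the paper, and for $\ker P$ it follows by summing the defining equations over $t$ --- the equivalence on the hyperplane forces $\mathcal{D}=\ker P$ as subspaces of $\mathbb{R}^{n}$. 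This passage from the bounded simplex to the global kernels, which relies on perturbing a full-support prior so that the fiber condition can be read as a statement about directions, is the step that requires the most care, and I regard it as the main obstacle.

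With $\mathcal{D}=\ker P$ in hand, both directions of the proposition are short linear algebra. For the forward direction, equality of columns within a cell follows by feeding the difference $d=e_{i}-e_{i'}$ of standard basis vectors (for $i,i'\in\Sigma_{t}$) into $\ker P=\mathcal{D}$: since $d\in\ker P$ we get $Ad=0$, i.e.\ $A_{i}=A_{i'}$, so $B_{t}\equiv A_{i}$ is well defined and condition (i) holds. For the rank, $P$ is surjective because every cell is nonempty, so $\dim\ker P=n-k$, whence $\dim\mathcal{D}=n-k$ and rank-nullity gives $\operatorname{rank}(A)=k$; because every $B_{t}$ appears as some column of $A$, the column space of $A$ equals $\operatorname{span}\{B_{1},\dots,B_{k}\}$, so these $k$ vectors are independent and $\operatorname{rank}(B)=k$, which is condition (ii).

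For the reverse direction, the relations $A_{i}=B_{t}$ for $i\in\Sigma_{t}$ yield $A\mu=\sum_{t}B_{t}\,\mu(\Sigma_{t})=B(P\mu)$; since $\operatorname{rank}(B)=k$ makes $B$ injective on $\mathbb{R}^{k}$, I can read off $A\mu=A\mu'\iff P\mu=P\mu'$. In particular $\mu^{\beta}\equiv P\mu_{0}$ is well defined for any $\mu_{0}$ with $A\mu_{0}=\beta$, and it exhibits $\{\Sigma_{t}\}$ as a witness to simplicity, completing the equivalence. I would close by noting that this argument also recovers the remark preceding the proposition: $A$ is simple exactly when its \emph{distinct} columns are linearly independent, since $\operatorname{rank}(A)=k$ equals the number of cells precisely when the $B_{t}$ are independent.
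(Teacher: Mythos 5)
Your proof is correct and follows essentially the same route as the paper's: both arguments perturb a full-support prior to translate the simplex-level definition of simplicity into a statement about the null space (your $\mathcal{D}=\ker P$), use the directions $e_{i}-e_{i'}$ to equate columns within a cell, and use the factorization $A\mu=B(P\mu)$ for the converse. The only cosmetic difference is that you obtain $\operatorname{rank}(B)=k$ via rank--nullity applied to $\dim\ker P=n-k$, whereas the paper derives the same fact by contradiction from a nonzero $x$ with $Bx=0$.
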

In order to state our main result for this section, it will be useful
to introduce some terminology. If simple experiment $A$ induces partition
$\{\Sigma_{1},...,\Sigma_{k}\}$ then we say (i) $A$ \emph{separates
}state $i$ if there exists an index $t$ with $\Sigma_{t}=\{i\}$
and (ii) $A$ \emph{pools }states $E\subset\Omega$ if there exists
an index $t$ with $\Sigma_{t}=E$. 
\begin{thm}
\label{Theorem: which simple experiments are ordered}Suppose there
are two actions and three or more states. Simple experiment $A$ is
reliable if and only if (i) $A$ separates every low state $i$ or
(ii) $A$ pools one low state $i$ with one high state $j$ and separates
all other states.
\end{thm}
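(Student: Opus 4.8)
The plan is to invoke Theorem \ref{Theorem: reliable iff ordered}, which reduces reliability to the order criterion of Definition \ref{defn: ordered}, and then translate that criterion into the explicit combinatorics of a simple experiment. By Proposition \ref{Proposition: simple character} the distinct columns of a simple $A$ are independent, so a simple experiment with partition $\{\Sigma_1,\dots,\Sigma_k\}$ has null space
\[
\mathcal{D}=\Big\{\,d:\ \sum_{i\in\Sigma_t}d_i=0\ \text{for every }t\,\Big\}.
\]
The engine of the whole argument is the observation that the $p=\max L+1$ instances of the two order conditions are exactly $\sum_{i\in H}d_i\ge 0$ and $\sum_{i\in H}\Delta_i d_i\ge 0$, and that the sign normalization on directions enforces the first of these outright but enforces the second only when the first holds with equality.

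For sufficiency I would treat (i) and (ii) separately. Under (i) every low state is its own cell, so the defining constraints force $d_i=0$ for all $i\in L$; hence $\sum_{i\in H}d_i=0$ for every direction, every tail sum $\sum_{j\ge i}d_j$ and $\sum_{j\ge i}\Delta_j d_j$ collapses to its high part, and the second normalization then supplies $\sum_{i\in H}\Delta_i d_i\ge 0$. Both order conditions hold at every cutoff $i\le\max L+1$. Under (ii) the null space is one–dimensional, spanned by the direction supported on the pooled pair $\{i,j\}$; normalizing its sign and evaluating the tail sums directly, using $\Delta_i<0\le\Delta_j$, verifies both conditions.

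For necessity I would argue the contrapositive. First, if some cell contains two low states $i_1<i_2$, the direction with $d_{i_1}=-1,\ d_{i_2}=1$ and zero elsewhere is already normalized and violates the incentive condition at any cutoff $i_1<p\le i_2$, since there the tail sum equals $\Delta_{i_2}<0$. So I may assume each cell holds at most one low state; then $\neg$(i) forces a mixed cell and, combined with $\neg$(ii), leaves exactly three configurations: a mixed cell containing two high states $j<j'$; two low–high pairs $\{i_1,j_1\},\{i_2,j_2\}$ labeled so that $\Delta_{j_1}<\Delta_{j_2}$; or one low–high pair $\{i,j\}$ together with a high–high pool containing $j'<j''$. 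In each case I exhibit a single $d\in\mathcal{D}$ whose low coordinate is negative, so that $\sum_{i\in H}d_i=1>0$ fixes the normalized sign and switches off the second normalization, while a free parameter drives $\sum_{i\in H}\Delta_i d_i$ to $-\infty$. Concretely I take, respectively, $(d_i,d_j,d_{j'})=(-1,\,1+M,\,-M)$; $(d_{i_1},d_{j_1},d_{i_2},d_{j_2})=(-N,\,N,\,N-1,\,-(N-1))$; and $(d_i,d_j,d_{j'},d_{j''})=(-1,\,1,\,M,\,-M)$, each lying in $\mathcal{D}$ and making $\sum_{i\in H}\Delta_i d_i<0$ for large $M,N$, violating the incentive condition at $p=\max L+1$.

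The main obstacle is the necessity half, and within it the two–pair configuration, where there is no explicit high–high pool yet orderedness must still fail. The resolution is that combining two low–high pairs synthesizes the missing high–high freedom, and here injectivity of $\Delta$ is essential: because $\Delta_{j_1}\neq\Delta_{j_2}$ the coefficient of the growing parameter in $\sum_{i\in H}\Delta_i d_i$ is nonzero, and the labeling $\Delta_{j_1}<\Delta_{j_2}$ makes it negative, so the sweep diverges. The remaining work is bookkeeping: checking that each constructed $d$ meets the cell–sum constraints defining $\mathcal{D}$, and that $\sum_{i\in H}d_i>0$ selects the non–flipping sign, so that the displayed violation genuinely counts against the order criterion and hence, by Theorem \ref{Theorem: reliable iff ordered}, against reliability.
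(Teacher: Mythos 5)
Your proof is correct and takes essentially the same route as the paper's: both reduce reliability to the order criterion via Theorem \ref{Theorem: reliable iff ordered}, identify the null space of a simple experiment with the cell-sum-zero directions, verify orderedness directly in cases (i) and (ii), and in the contrapositive exhibit explicit violating directions in the same four pooling configurations (two low states pooled; one low state with two high states; two low--high pairs; a low--high pair plus a high--high pool). The only differences are cosmetic, e.g.\ your large-parameter sweeps in $M$ and $N$ play the role of the paper's small-$\varepsilon$ perturbations.
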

Theorem \ref{Theorem: which simple experiments are ordered} follows
jointly from Theorem \ref{Theorem: reliable iff ordered} and a straightforward
analysis of the structure of the null space $\mathcal{D}$ associated
with simple experiments. By way of interpretation, experiments $A$
that separate each of the negative states are reliable because the
prior $\mu$ that minimizes the sum $\sum_{i\in H}\Delta_{i}\mu_{i}$
over the identified set $\mathcal{P}(\alpha)\cap\mathcal{Q}(\beta)$
is apparently adversarial. Otherwise, if $A$ pools one low state
$i$ with one high state $j$ and separates all other states, then
the prior that maximizes the probability of $i$ over $\mathcal{P}(\alpha)\cap\mathcal{Q}(\beta)$
is itself adversarial.

\subsubsection{\label{Section: sequential experiments}Sequential experiments}

In formulating the experimenter's problem, we have implicitly assumed
that the sender must decide on an experiment $A$ before she observes
the action data $\alpha$. While this timing assumption strengthens
the sufficiency result in Proposition \ref{Experiments positive},
it also weakens the necessity result in Proposition \ref{Proposition: Negative result}.
In this section, we show that our results persist if the sender instead
observes the action data before she settles on an experiment.
\begin{defn}
Given action distribution $\alpha$, experiment $A$ is \emph{sequentially
reliable }if for all outcomes $\beta$ there exists a signal $(\pi^{*},S^{*})$
and a prior $\mu^{*}\in\mathcal{P}(\alpha)\cap\mathcal{Q}(\beta)$
such that for all signals $(\pi,S)$ and all priors $\mu\in\mathcal{P}(\alpha)\cap\mathcal{Q}(\beta)$
\begin{align*}
V((\pi^{*},S^{*})\vert\mu^{*}) & \geq V((\pi,S)\vert\mu^{*}), & V((\pi^{*},S^{*})\vert\mu) & \geq V((\pi^{*},S^{*})\vert\mu^{*}).
\end{align*}
\end{defn}
Our order criteria characterize sequential reliability under two additional
assumptions. First, we assume for expositional convenience that the
action distribution is interior.
\begin{assumption}
\label{assumption: alpha 0,1}The action distribution $\alpha$ satisfies
$\alpha\in(0,1).$
\end{assumption}
Second, and substantively, we assume that the state space satisfies
a straightforward richness condition.
\begin{assumption}
\label{Assumption: 0 state}There exists a state $i$ with $\Delta_{i}=0$.
\end{assumption}
The existence of a state in which the receiver is indifferent between
his two actions broadens the set of rationalizations of the data considerably.
We imagine a decision problem in which the state variable $\omega$
is ordered with many potential values and the indifference state $\omega_{i}$
is a minimum acceptable threshold for a positive outcome. Examples
of such thresholds abound in binary decision problems, wherein we
see hard cutoffs in ``pass/fail'' assessments, 95\% thresholds for
``statistical significance'', minimum test scores, grades, or class
ranks in admissions decisions and hiring decisions, and so on.

\begin{lem}
\label{Lemma: full support (i,z) sequential}Suppose there are two
actions and three or more states and let Assumptions \ref{assumption: alpha 0,1}--\ref{Assumption: 0 state}
hold. If $i\in L$ and $z\in(0,1)$ then there exists a prior $\mu$
with full support, $(i(\mu),z(\mu))=(i,z)$, and $\Pi(\mu)=\alpha$. 
\end{lem}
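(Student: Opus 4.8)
The plan is to start from the prior supplied by Lemma~\ref{Lemma: rationalizing P} and then perturb it along the indifference state guaranteed by Assumption~\ref{Assumption: 0 state} so as to pin the value $\Pi$ to $\alpha$ without disturbing the pair $(i(\mu),z(\mu))$. The key simplification is that for any eligible $\mu$ with $i(\mu)=i$ the definitions of $z(\mu)$ and $\Pi(\mu)$ combine to give $\Pi(\mu)=z\mu_{i}+\sum_{j>i}\mu_{j}$, so the value is simply the probability with which the cutoff signal $(\pi^{\mu},S^{\mu})$ recommends the sender's preferred action. Fix the indifference state $i_{0}$ with $\Delta_{i_{0}}=0$; since $i(\mu)\in L$ we have $i<i_{0}$, so $i_{0}$ enters $z(\mu)$ and $\Pi(\mu)$ only through $\sum_{j>i}\Delta_{j}\mu_{j}$ and $\sum_{j>i}\mu_{j}$, and because $\Delta_{i_{0}}=0$ its mass contributes to the latter but not the former. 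This is exactly the decoupling I will exploit: shifting mass onto $i_{0}$ raises $\Pi$ while leaving the incentive sum $\sum_{j>i}\Delta_{j}\mu_{j}=z\lvert\Delta_{i}\rvert\mu_{i}$ --- and hence $z(\mu)$ and the threshold $i(\mu)$ --- untouched.

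First I would invoke Lemma~\ref{Lemma: rationalizing P} to obtain a full-support prior $\nu$ with $(i(\nu),z(\nu))=(i,z)$. Next I would build a one-parameter family $\mu(t)\equiv\nu+t\,d$, where $d$ is a zero-sum direction supported on $i_{0}$ and on states in $\{i\}\cup\{j<i\}$ chosen so that (i) the quantity $\sum_{j>i}\Delta_{j}\mu_{j}-z\lvert\Delta_{i}\rvert\mu_{i}$ is identically zero along the family and (ii) the map $t\mapsto\Pi(\mu(t))$ is strictly monotone. Condition (i) forces $\sum_{j\geq i}\Delta_{j}\mu_{j}=(z-1)\lvert\Delta_{i}\rvert\mu_{i}<0$ and $\sum_{j\geq i+1}\Delta_{j}\mu_{j}=z\lvert\Delta_{i}\rvert\mu_{i}\geq 0$; together with the unimodality of the partial sums $\sum_{j\geq i'}\Delta_{j}\mu_{j}$ in $i'$ (increasing across the low states, then nonincreasing and nonnegative across the high states), this guarantees $i(\mu(t))=i$ and $z(\mu(t))=z$ for every $t$ keeping $\mu(t)$ eligible. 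Condition (ii) holds because $d$ moves mass across the threshold $i$.

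With the family in hand, I would let $t$ range over the open interval on which $\mu(t)$ stays in the interior of the simplex and observe that $\Pi(\mu(t))$ sweeps an open interval of attainable values whose endpoints correspond to the simplex boundary. Assumption~\ref{assumption: alpha 0,1} places the target $\alpha$ strictly inside $(0,1)$, and I would verify that this attainable interval contains $\alpha$; the intermediate value theorem then yields a parameter $t^{*}$ with $\Pi(\mu(t^{*}))=\alpha$, and $\mu\equiv\mu(t^{*})$ is the desired full-support prior, with $\Pi(\mu)=V((\pi^{\mu},S^{\mu})\vert\mu)=\alpha$ by the recorded facts about cutoff signals.

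The main obstacle is the construction of the direction $d$: it must simultaneously hold the incentive sum $\sum_{j>i}\Delta_{j}\mu_{j}$ fixed (to preserve $z$ and $i$), keep every component of $\mu(t)$ strictly positive over a nondegenerate range of $t$, and move $\Pi$ far enough to bracket the prescribed $\alpha$. The indifference state is what makes this feasible, since it supplies a perturbation that changes $\Pi$ while lying in the kernel of the incentive sum. Verifying that the resulting attainable interval of $\Pi$-values actually contains $\alpha$ --- rather than merely some value --- is the delicate point, and it is precisely here that the interiority of $\alpha$ in Assumption~\ref{assumption: alpha 0,1} and the slack afforded by $z\in(0,1)$ are used.
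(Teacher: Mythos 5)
Your high-level strategy --- start from Lemma \ref{Lemma: rationalizing P}, exploit the indifference state $i_{0}$ with $\Delta_{i_{0}}=0$ to move $\Pi$ while holding $(i(\mu),z(\mu))=(i,z)$ fixed, then hit $\alpha$ by an intermediate-value argument --- is the same as the paper's. But your implementation has a genuine gap, which you flag as ``the delicate point'' but do not (and cannot) close: the attainable interval of $\Pi$-values along your family need not contain $\alpha$. Your condition (i) forces $d_{i}=0$: along $\mu(t)=\nu+td$ the quantity $\sum_{j>i}\Delta_{j}\mu_{j}-z\vert\Delta_{i}\vert\mu_{i}$ changes at rate $\Delta_{i_{0}}d_{i_{0}}-z\vert\Delta_{i}\vert d_{i}=-z\vert\Delta_{i}\vert d_{i}$, so preserving it means $d$ is supported on $\{i_{0}\}\cup\{j<i\}$ only. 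Then $\Pi(\mu(t))=\Pi(\nu)+td_{i_{0}}$, and since mass can only be traded between $i_{0}$ and the states strictly below $i$, the reachable values form the interval $\bigl(\Pi(\nu)-\nu_{i_{0}},\,1-(1-z)\nu_{i}\bigr)$, whose upper endpoint is bounded away from $1$ whenever $\nu_{i}$ is bounded away from $0$. Concretely, take $\Delta=(-2,-1,0,1)$, $i=2$, $z=1/10$, $\alpha=1/2$: the prior from Lemma \ref{Lemma: rationalizing P} has $\nu_{2}=\kappa/(3z+\kappa)$ with $\kappa=1+2\lambda\geq1$, so $1-(1-z)\nu_{2}\leq 4z/(3z+1)\approx0.31<\alpha$ for every admissible $\lambda$; no direction in your class reaches $\alpha$, even though the lemma's conclusion does hold there. (For $i=1$ your construction degenerates entirely: $\{j<i\}$ is empty and $d_{i}=0$ is forced, so $d=0$ and $\Pi$ cannot be moved at all.)

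The paper closes exactly this gap by using a different kind of perturbation: rather than shifting mass coordinate-by-coordinate while freezing $\mu_{i}$ and the high states, it mixes the entire prior toward a point mass. Because the identity $z\vert\Delta_{i}\vert\mu_{i}=\sum_{j>i}\Delta_{j}\mu_{j}$ is linear and holds both at $\nu$ and at $\delta(\omega_{i^{0}})$ (the latter since $\Delta_{i^{0}}=0$), it holds along the whole segment, and
\[
\Pi\bigl((1-t)\nu+t\,\delta(\omega_{i^{0}})\bigr)=(1-t)\Pi(\nu)+t
\]
sweeps all of $[\Pi(\nu),1)$; symmetrically, mixing toward $\delta(\omega_{1})$ scales $\Pi$ down through $(0,\Pi(\nu)]$ when $i>1$. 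Choosing the mixing weight to solve for $\alpha$ exactly yields the full-support prior in one step, with no reachability issue. If you want to salvage your perturbation language, you must enlarge your class of directions to include $\delta(\omega_{i^{0}})-\nu$ and $\delta(\omega_{1})-\nu$, i.e., directions that rescale $\mu_{i}$ and the high states rather than leave them fixed.
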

Lemma \ref{Lemma: rationalizing P} is an analogue to our earlier
Lemma \ref{Lemma: rationalizing P}, here for fixed action distributions.
Our characterization of optimal experiments then obtains under the
same line of argument as before.
\begin{thm}
\label{Theorem: sequential esxperimentation}Suppose there are two
actions and three or more states and let Assumptions \ref{assumption: alpha 0,1}--\ref{Assumption: 0 state}
hold. Experiment $A$ is sequentially reliable if and only if $A$
is ordered.
\end{thm}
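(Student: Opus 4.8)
The plan is to prove the two directions separately, isolating the only genuinely new ingredient in the necessity half. The key observation is that sequential reliability is logically weaker than the reliability of Definition preceding Theorem \ref{Theorem: reliable iff ordered}: the two notions impose exactly the same saddle-point inequalities, but sequential reliability fixes the action distribution $\alpha$ and quantifies only over outcomes $\beta$, whereas reliability quantifies over all pairs $(\alpha,\beta)$.

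For sufficiency I would argue as follows. Suppose $A$ is ordered. Then $A$ is reliable by Theorem \ref{Theorem: reliable iff ordered}, so for every pair $(\alpha,\beta)$ the experimenter's problem admits a saddle point; restricting to the given $\alpha$ and letting $\beta$ range over all consistent outcomes yields precisely the defining property of sequential reliability. Note that Assumptions \ref{assumption: alpha 0,1}--\ref{Assumption: 0 state} are not needed here; the content of those assumptions lies entirely in the converse.

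For necessity I would argue the contrapositive, reproducing the proof of Proposition \ref{Proposition: Negative result} with the pre-specified $\alpha$ held fixed. Assume $A$ is not ordered. By Lemma \ref{Lemma: opposite signs} there is an index $i$ and a direction $d\in\mathcal{D}$ exhibiting the sign tension between the frequency and incentive compatibility criteria, and, exactly as in the proof of Proposition \ref{Proposition: Negative result}, this may be realized as a low-state cutoff with some $z\in(0,1)$. The crucial substitution is to invoke Lemma \ref{Lemma: full support (i,z) sequential} in place of Lemma \ref{Lemma: rationalizing P}: this produces a full-support prior $\mu$ with $(i(\mu),z(\mu))=(i,z)$ and, additionally, $\Pi(\mu)=\alpha$ for the given interior $\alpha$. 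Setting $\beta\equiv A\mu$, the prior $\mu$ lies in $\mathcal{P}(\alpha)\cap\mathcal{Q}(\beta)$; by Lemma \ref{lemma: Rationalizing Q} at least one of $\mu\pm\varepsilon d$ lies in $\mathcal{P}(\Pi(\mu))=\mathcal{P}(\alpha)$, and since $d\in\mathcal{D}$ preserves the experimental outcome, that admissible perturbation belongs to the full identified set $\mathcal{P}(\alpha)\cap\mathcal{Q}(\beta)$ as well. From here the saddle point is broken exactly as in Proposition \ref{Proposition: Negative result}: because $z(\mu)\in(0,1)$ the incentive constraint of $(\pi^{\mu},S^{\mu})$ binds at $\mu$, so the sign tension forces the fixed optimal-at-$\mu$ signal to lose value at the admissible perturbation, while Proposition \ref{Proposition: cutoff equivalence} forces every signal not equivalent to $(\pi^{\mu},S^{\mu})$ to fall strictly below $\alpha$ already at $\mu$. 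Hence no signal guarantees payoff $\alpha$ over $\mathcal{P}(\alpha)\cap\mathcal{Q}(\beta)$, and since every rationalizing prior has counterfactual value at least $\alpha$, the existence of such a signal is necessary for a saddle point; the experimenter's problem with data $(\alpha,\beta)$ therefore has none, so $A$ is not sequentially reliable.

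The step I expect to carry the real weight is Lemma \ref{Lemma: full support (i,z) sequential}, and this is exactly where Assumption \ref{Assumption: 0 state} becomes indispensable. In the non-sequential argument one is free to \emph{define} $\alpha\equiv\Pi(\mu)$ after constructing $\mu$, so matching the action data is automatic; here $\alpha$ is fixed in advance, and the indifference state with $\Delta_i=0$ supplies the extra degree of freedom needed to tune $\Pi(\mu)$ to the target $\alpha$ without perturbing the cutoff structure $(i,z)$ that drives the tension. Once that lemma is in hand, the remainder of the necessity argument is a verbatim transcription of Proposition \ref{Proposition: Negative result}, and the only bookkeeping is confirming, as in that earlier proof, that the tension index can be taken in $L$ so that Lemma \ref{Lemma: full support (i,z) sequential} applies.
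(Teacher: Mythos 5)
Your proposal is correct and takes essentially the same approach as the paper: sufficiency follows from Theorem \ref{Theorem: reliable iff ordered} plus the observation that reliability implies sequential reliability, and necessity reruns the proof of Proposition \ref{Proposition: Negative result} (via Lemmas \ref{Lemma: opposite signs} and \ref{lemma: Rationalizing Q}) with Lemma \ref{Lemma: full support (i,z) sequential} substituted for Lemma \ref{Lemma: rationalizing P} so that $\Pi(\mu)=\alpha$ for the pre-specified interior $\alpha$. Your explanation of why Assumption \ref{Assumption: 0 state} is needed---because $\alpha$ can no longer be defined ex post as $\Pi(\mu)$---is exactly the role it plays in the paper.
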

In the Appendix, we briefly outline how Lemma \ref{Lemma: rationalizing P}
facilitates an adaptation of the proof of Theorem \ref{Theorem: reliable iff ordered}
to the sequential timing convention considered in this section.
\begin{cor}
\label{corollary: sequent reli}Suppose there are two actions and
three or more states and let Assumptions \ref{assumption: alpha 0,1}--\ref{Assumption: 0 state}
hold. If experiment $A$ is ordered then for all experimental outcomes
$\beta$ there exists a signal $(\pi,S)$ such that $\min_{\mu\in\mathcal{P}(\alpha)\cap\mathcal{Q}(\beta)}\;V((\pi,S)\vert\mu)\geq\alpha.$Conversely,
if experiment $A$ is not ordered then there exists an experimental
outcome $\beta$ such that $\min_{\mu\in\mathcal{P}(\alpha)\cap\mathcal{Q}(\beta)}\;V((\pi,S)\vert\mu)<\alpha$
for all signals $(\pi,S)$. 
\end{cor}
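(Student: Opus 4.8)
The plan is to read Corollary \ref{corollary: sequent reli} as the payoff-theoretic restatement of Theorem \ref{Theorem: sequential esxperimentation}, exactly as Corollary \ref{corollary reliable just} restates Theorem \ref{Theorem: reliable iff ordered} under the static timing convention. I would prove the two directions separately: the forward implication falls out of sequential reliability together with the observation that an adversarial prior rationalizes $\alpha$, while the converse adapts the negative construction behind Proposition \ref{Proposition: Negative result} to a \emph{fixed} action distribution.

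For the forward direction, suppose $A$ is ordered. By Theorem \ref{Theorem: sequential esxperimentation}, $A$ is sequentially reliable, so for every outcome $\beta$ there are a signal $(\pi^{*},S^{*})$ and an adversarial prior $\mu^{*}\in\mathcal{P}(\alpha)\cap\mathcal{Q}(\beta)$ satisfying both saddle inequalities. The first inequality makes $(\pi^{*},S^{*})$ optimal against $\mu^{*}$, so its value there equals $\Pi(\mu^{*})$; because $\mu^{*}\in\mathcal{P}(\alpha)$, Lemma \ref{Lemma: straightforward rat} produces a straightforward signal recovering $\alpha$ under $\mu^{*}$, whence $\Pi(\mu^{*})\geq\alpha$. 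The second inequality makes $\mu^{*}$ a minimizer of $V((\pi^{*},S^{*})\vert\cdot)$ over the identified set, so $\min_{\mu\in\mathcal{P}(\alpha)\cap\mathcal{Q}(\beta)}V((\pi^{*},S^{*})\vert\mu)=V((\pi^{*},S^{*})\vert\mu^{*})=\Pi(\mu^{*})\geq\alpha$, which is exactly the asserted guarantee.

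For the converse, suppose $A$ is not ordered. Lemma \ref{Lemma: opposite signs} supplies a state $1<i\leq\max L+1$ and a direction $d\in\mathcal{D}$ exhibiting a sign clash between the frequency sum $\sum_{j\geq i}d_{j}$ and the incentive-compatibility sum $\sum_{j\geq i}\Delta_{j}d_{j}$. Invoking Lemma \ref{Lemma: full support (i,z) sequential}, I obtain for the fixed $\alpha$ a full-support prior $\mu$ with $(i(\mu),z(\mu))=(i,z)$ for some $z\in(0,1)$ and $\Pi(\mu)=\alpha$; set $\beta\equiv A\mu$. Since $\pm d\in\mathcal{D}$, both $\mu\pm\varepsilon d$ lie in $\mathcal{Q}(\beta)$, and Lemma \ref{lemma: Rationalizing Q} (applied with $\Pi(\mu)=\alpha$) places at least one of them in $\mathcal{P}(\alpha)$; call it $\mu'$. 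Now fix any signal $(\pi,S)$ with $\min_{\mu\in\mathcal{P}(\alpha)\cap\mathcal{Q}(\beta)}V((\pi,S)\vert\mu)\geq\alpha$. Evaluated at $\mu$ this forces $V((\pi,S)\vert\mu)=\alpha=\Pi(\mu)$, so $(\pi,S)$ is optimal for $\mu$ and, by Proposition \ref{Proposition: cutoff equivalence}, equivalent to $(\pi^{\mu},S^{\mu})$ on $\text{supp}(\mu)=\Omega$; hence the two share a value at every prior, in particular at $\mu'$. But the sign clash makes $V((\pi^{\mu},S^{\mu})\vert\mu')<\alpha$: along the selected perturbation either the frequency of the top message strictly falls, or the binding incentive constraint for that message is strictly violated so the receiver switches to action $0$. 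This contradicts the guarantee, so no such signal exists and $\beta$ witnesses the claim.

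The main obstacle is the converse construction rather than the surrounding bookkeeping. Everything hinges on Lemma \ref{Lemma: full support (i,z) sequential}: in the sequential convention $\alpha$ is no longer a free parameter, so one must realize the prescribed cutoff pair $(i,z)$ \emph{and} the exact value $\Pi(\mu)=\alpha$ with a single full-support prior, and it is precisely the indifference state supplied by Assumption \ref{Assumption: 0 state} (together with the interiority in Assumption \ref{assumption: alpha 0,1}) that provides the degree of freedom needed to tune $\Pi(\mu)$ to $\alpha$ without disturbing $(i(\mu),z(\mu))$. The second delicate point is verifying that the sign clash forces a \emph{strict} shortfall below $\alpha$ no matter which of $\mu\pm\varepsilon d$ Lemma \ref{lemma: Rationalizing Q} returns; this relies on the incentive constraint for the top message binding at $z\in(0,1)$, so that a tightening movement strictly flips the receiver's best response while a loosening movement is necessarily paired with a strict loss in the frequency of that message.
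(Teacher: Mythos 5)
Your overall architecture is the paper's: the forward direction extracts the guarantee $\Pi(\mu^{*})\geq\alpha$ from the saddle point delivered by Theorem \ref{Theorem: sequential esxperimentation}, and the converse reruns the construction behind Proposition \ref{Proposition: Negative result} with Lemma \ref{Lemma: full support (i,z) sequential} substituted for Lemma \ref{Lemma: rationalizing P} so that $\Pi(\mu)=\alpha$ can be arranged for the \emph{given} $\alpha$. The forward direction is correct as written.

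There is one genuine gap in the converse: the step ``equivalent to $(\pi^{\mu},S^{\mu})$ \dots hence the two share a value at every prior, in particular at $\mu'$.'' Equivalence for prior $\mu$ only pins down the aggregated probabilities $\sum_{s\in S^{a}(\mu)}\pi_{i}(s)$, where the aggregation is taken over the message cells induced by $\mu$; at a different prior those cells can split, and two signals equivalent at $\mu$ can have different values elsewhere --- this is exactly the phenomenon exhibited by Example \ref{Example: rev}, where an indirect signal agrees with a direct one at particular priors yet strictly outperforms it in the worst case. The repair is the one the paper uses: because every $s\in S^{0}(\mu)$ satisfies the strict inequality $\sum_{i}\Delta_{i}\mu_{i}^{s}<0$, one has $S^{1}(\mu\pm\varepsilon d)\subset S^{1}(\mu)$, so $V((\pi,S)\vert\mu\pm\varepsilon d)\leq\sum_{j}\sum_{s\in S^{1}(\mu)}\pi_{j}(s)(\mu_{j}\pm\varepsilon d_{j})$, and the right-hand side is computable from the equivalence alone. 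In the frequency-clash case this upper bound is already $\alpha+\varepsilon(zd_{i}+\sum_{j>i}d_{j})<\alpha$; in the incentive-compatibility-clash case the bound is not strict by itself, and one must additionally exhibit a message $s^{*}\in S^{1}(\mu)\setminus S^{1}(\mu-\varepsilon d)$ carrying strictly positive probability (which the paper extracts from $\sum_{j}\sum_{s\in S^{1}(\mu)}\Delta_{j}\pi_{j}(s)(\mu_{j}-\varepsilon d_{j})<0$) to obtain the strict drop below $\alpha$. Your closing paragraph gestures at exactly this dichotomy, so the fix is local, but as written the value-transfer claim is false and the argument does not go through without it.
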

In Corollary \ref{corollary: sequent reli}, we provide an interpretation
of sequential reliability as a dynamic optimality criterion for the
version of our model in which the action data are revealed before
the sender chooses an experiment. The justification for its statements
are the same as the justifications for their analogues in Corollary
\ref{corollary reliable just}.

\section{\label{Section: conclusions}Conclusions}

This paper proposes a data-driven model of robust information provision
to a single receiver. We first characterize the set of priors that
are consistent with observed behavior and second identify robustly
optimal signals for a variety of specifications of the sender's problem.
In light of our restriction to models with either two actions or two
states, we leave open for future work the analysis of the many-state
many-action problem. While our expectation is that the results for
those models are more closely related to our analysis of the two-action
problem than the two-state problem, it is unclear whether or not our
characterization of optimal experiments might generalize in some useful
way to that setting.

\bibliographystyle{chicago}
\bibliography{ddp}

\appendix

\section{Appendix}

\subsection{Section \ref{SECTION: identification}}
\begin{proof}[Proof of Lemma \ref{Lemma: straightforward rat} ]
First, we show (ii) implies (iii). Second, we use the first part
to simplify our argument that (i) implies (ii). Because (iii) vacuously
implies (i), those two demonstrations are sufficient to establish
the claimed equivalence.

Proceeding with the first claim, suppose signal prior $\mu$, signal
$(\pi,S)$, and incentive compatible choice rule $C:S\to\Delta(A)$
are such that (i) $\sum_{\omega}\sum_{s}C(a\vert s)\pi(s\vert\omega)\mu(\omega)=\alpha(a)$
for all $a$ and (ii) $a\in a^{*}(\mu^{s})$ for all $a\in\text{supp}(C(s))$
for all $s$. Consider the signal$(\pi',S')$ with $S'\equiv\text{supp}(\alpha)$
and $\pi'(a\vert\omega)\equiv\sum_{s}C(a\vert s)\pi(s\vert\omega)$.
For each $a\in S'$ define $S^{a}\equiv\{s\in S\vert a\in\text{supp}(C(s))\}$
and for each $s\in S^{a}$ define
\[
\lambda(s\vert a)\equiv\frac{C(a\vert s)\sum_{\omega}\pi(s\vert\omega)\mu(\omega)}{\sum_{\omega}\sum_{s\in S^{a}}C(a\vert s)\pi(s\vert\omega)\mu(\omega)}.
\]
First, for each $a\in S'$ we have $\lambda(\cdot\vert a)\in\Delta(S^{a})$
and $\sum_{s\in S^{a}}\lambda(s\vert a)\mu^{s}=\mu^{a}$, where we
write $\mu^{s}$ for the posteriors induced by $(\pi,S)$ and $\mu^{a}$
for the posteriors induced by $(\pi',S')$. Second, for each $a\in S'$
the set $\{\mu\vert a\in a^{*}(\mu)\}$ is (i) convex and (ii) contains
posterior $\mu^{s}$ for each $s\in S^{a}$ by hypothesis. Together,
these two facts imply $\mu^{a}\in\{\mu\vert a\in a^{*}(\mu)\}$ and
thus $a\in a^{*}(\mu^{a})$, as required. Third, and finally, we have
as claimed $\sum_{\omega}\pi'(a\vert\omega)\mu(\omega)=\sum_{\omega}\sum_{s\in S}\varphi(a\vert s)\pi(s\vert\omega)\mu(\omega)=\alpha(a).$ 

Proceeding with the second claim, suppose $\mu$ rationalizes $\alpha$.
By definition, there exists an index $g=1,...,G$, distribution $\gamma\in\Delta(\{1,...,G\})$,
and action distributions $\alpha^{1},...,\alpha^{G}$ such that $\mu$
homogeneously rationalizes $\alpha^{g}$ for each $g$ and $\alpha=\sum_{g}\gamma^{g}\alpha^{g}$.
Because we have already shown (ii) implies (iii), for each $g$ there
exists a signal $(\pi^{g},S^{g})$ such that $S^{g}\subset A$, $a\in a^{*}((\mu^{g})^{a})$
for all $a\in S^{g}$, and $\sum_{\omega}\pi^{g}(a\vert\omega)\mu(\omega)=\alpha^{g}(a)$
for all $a$ in the support of $\alpha^{g}$, where we write $(\mu^{g})^{a}$
for the posterior induced by signal $(\pi^{g},S^{g})$ and message
$a\in S^{g}$. Define signal $(\pi,S)$ by $S\equiv\text{supp}(\alpha)$
and $\pi(a\vert\omega)\equiv\sum_{g}\gamma^{g}\pi^{g}(a\vert\omega)$.
For each $a\in S$ consider the convex combination $\lambda(\cdot\vert a)\in\Delta(\{1,...,G\})$
defined by 
\[
\lambda(g\vert a)\equiv\frac{\gamma^{g}\alpha^{g}(a)}{\alpha(a)}.
\]
The family of posteriors
\[
(\mu^{g})^{a}(\omega)\equiv\frac{\pi^{g}(s\vert\omega)\mu(\omega)}{\sum_{\omega'}\pi^{g}(s\vert\omega')\mu(\omega')}
\]
induced by signal $(\pi^{g},S^{g})$ and message $a$ and the posterior
\[
\mu^{a}(\omega)\equiv\frac{\pi(a\vert\omega)\mu(\omega)}{\sum_{\omega'}\pi^ {}(a\vert\omega')\mu(\omega')}
\]
induced by signal $(\pi,S)$ and message $a$ jointly satisfy $\mu^{a}=\sum_{g}\lambda(g\vert a)(\mu^{g})^{s}$
for each $a$. Because $a\in a^{*}((\mu^{g})^{a})$ by hypothesis
and the set $\{\mu\vert a\in a^{*}(\mu)\}$ is convex, we conclude
$a\in a^{*}(\mu^{a})$. In turn, because $\sum_{\omega}\pi(a\vert\omega)\mu(\omega)=\alpha(a)$,
$\mu$ homogeneously rationalizes $\alpha$ via signal $(\pi,S)$.
\end{proof}
\begin{proof}[Proof of Proposition \ref{Proposition : Identification}]
Suppose there exists a selection $n$ from $\mathcal{N}$ such that
$\mu=\sum_{a}n(\cdot\vert a)\alpha(a)$. Consider signal $(\pi,S)$
defined by $S\equiv\text{supp}(A)$ and 
\[
\pi(a\vert\omega)\equiv\frac{\alpha(a)n(\omega\vert a)}{\mu(\omega)},
\]
with $\pi(\cdot\vert\omega)$ assigned arbitrarily for states $\omega$
not in $\text{supp}(\mu)$. Direct calculation yields $\mu^{a}(\omega)=n(\omega\vert a),\sum_{\omega}\pi(a\vert\omega)\mu(\omega)=\alpha(a)$
and thus Lemma \ref{Lemma: straightforward rat} implies $\mu$ rationalizes
$\alpha$. 

Conversely, suppose $\mu$ rationalizes $\alpha$. Again per Lemma
\ref{Lemma: straightforward rat}, there exists a signal $(\pi,S)$
such that (i) $S\subset A$; (ii) $a\in a^{*}(\mu^{a})$ for all $a\in S$;
and (iii) $\sum_{\omega}\pi(a\vert\omega)\mu(\omega)=\alpha(a)$.
First, because $a\in a^{*}(\mu^{a})$ and $\mathcal{N}(a)=\{\mu\vert a\in a^{*}(\mu)\}$
we have $\mu^{a}\in\mathcal{N}(a)$. Accordingly, $a\mapsto\mu^{a}$
is a selection from $\mathcal{N}$. Second, Bayes rule implies
\[
\sum_{a}\mu^{a}(\omega)\alpha(a)=\sum_{a\in A}\Big(\frac{\pi(a\vert\omega)\mu(\omega)}{\sum_{\omega'}\pi(a\vert\omega')\mu(\omega')}\Big)\alpha(a)=\sum_{a\in A}\pi(a\vert\omega)\mu(\omega)=\mu(\omega)
\]
for each state $\omega$ and hence that $\mu=\sum_{a}\mu^{a}\alpha(a)$.
\end{proof}
\begin{proof}[Proof of Lemma \ref{Lemma: rationalizability existence}]
First, if action $a$ is dominated then payoff vector $\phi(a)$
belongs to the interior of $\Phi$ and thus $\mathcal{N}(a)$ is empty.
Conversely, if $a$ is not dominated then $\phi(a)$ belongs to the
boundary of $\Phi$ and the supporting hyperplane theorem implies
$\mathcal{N}(a)$ is nonempty.

Second, action $a$ is neither dominated nor redundant if and only
if payoff vector $\phi(a)$ is an extreme point of $\Phi$. We claim
more strongly that $\phi(a)$ exposed.\footnote{Point $\phi(a)$ is an \emph{extreme point }of $\Phi$ if $\phi(a)$
is not a convex combination of other points in $\Phi$. More specifically,
$\phi(a)$ is an \emph{exposed point }of $\Phi$ if there exists a
hyperplane that (i) supports $\Phi$ at $\phi(a)$ and (ii) intersects
$\Phi$ only at $\phi(a)$.} This fact follows readily from two well known results. First, because
$\Phi$ is a finitely generated convex set in the sense of \citet{rockafellar1970convex},
Theorem 19.1 of that text implies that $\Phi$ is the intersection
of finitely many closed half spaces. Second, because $\Phi$ is therefore
a polyhedron in the sense of \citet*{bertsimas1997introduction},
Theorem 2.3 of that text implies that $\phi(a)$ is indeed exposed.
By definition, then, exists a vector $\eta\in\mathcal{N^{>}}(a)$
such that $\eta\cdot\phi(a)>\eta\cdot x$ for all $x\in\Phi$.
\end{proof}

\subsection{Proof of Theorem \ref{Theorem: positive 2 states}}

First, it is apparent that $V((\pi^{*},S^{*})\vert\mu^{*})\geq V((\pi,S)\vert\mu^{*})$
for all signals $(\pi,S)$.\footnote{We direct the interested reader to Proposition \ref{Proposition: cutoff equivalence}
for a self-contained proof.} Second, Proposition \ref{Proposition : Identification} implies $\mu(h)\geq\mu^{*}(h)$
for all $\mu\in\mathcal{P}(\alpha)$. Third, and finally, we have
\begin{align*}
\mu(h)\geq\mu^{*}(h)\implies & \sum_{\omega}\pi^{*}(1\vert\omega)(\mu(\omega)-\mu^{*}(\omega)),\sum_{\omega}\Delta(\omega)(\mu^{1}(\omega)-(\mu^{*})^{1}(\omega))\geq0\\
\implies & V((\pi^{*},S^{*})\vert\mu)\geq V((\pi^{*},S^{*})\vert\mu^{*}).
\end{align*}
Accordingly, $V((\pi^{*},S^{*})\vert\mu)\geq V((\pi^{*},S^{*})\vert\mu^{*})=\alpha(1)$
for all rationalizing priors $\mu$.

\subsection{Proof of Theorem \ref{Theorem: 2 states 3 actions}}

Where convenient, we identify priors and posteriors with the probability
they assign to the high state $h$. Proceeding, per the discussion
in the body we have $V((\pi^{*},S^{*})\vert\mu^{*})\geq V((\pi,S)\vert\mu^{*})$
for all signals $(\pi,S)$. Let $\mu$ be any rationalizing prior.
We claim $V((\pi^{*},S^{*})\vert\mu)\geq V((\pi^{*},S^{*})\vert\mu^{*})$. 

First, (\ref{Display: two states many action identified set}) implies
$\mu\geq\mu^{*}$ and thus $\mu^{s}\geq(\mu^{*})^{s}$ for all messages
$s\in S^{*}$. In turn, because $\widehat{v}$ is nondecreasing, we
have $\widehat{v}(\mu^{s})\geq\widehat{v}((\mu^{*})^{s})$ for all
$s\in S^{*}$. Suppose without loss of generality $p\geq q$. Again
because $\widehat{v}$ is nondecreasing, we have $\widehat{v}(p)\geq\widehat{v}(q)$.
Furthermore, because $p\geq q\implies p\geq\mu^{*}\geq q$, direct
calculation yields $\pi^{*}(\widehat{a}(p)\vert h)\geq\pi^{*}(\widehat{a}(p)\vert l)$.
Accordingly, $\sum_{\omega}\pi^{*}(\widehat{a}(p)\vert\omega)\mu(\omega)\geq\sum_{\omega}\pi^{*}(\widehat{a}(p)\vert\omega)\mu^{*}(\omega).$
Altogether we have
\begin{align*}
V((\pi^{*},S^{*})\vert\mu) & =\sum_{\omega}\pi^{*}(\widehat{a}(p)\vert\omega)\mu(\omega)\widehat{v}(\mu^{\hat{a}(p)}))+\sum_{\omega}\pi^{*}(\widehat{a}(q)\vert\omega)\mu(\omega)\widehat{v}(\mu^{\hat{a}(q)}))\\
 & \geq\sum_{\omega}\pi^{*}(\widehat{a}(p)\vert\omega)\mu(\omega)\widehat{v}(p)+\sum_{\omega}\pi^{*}(\widehat{a}(q)\vert\omega)\mu(\omega)\widehat{v}(q)\\
 & =\sum_{\omega}\pi^{*}(\widehat{a}(p)\vert\omega)\mu(\omega)(\widehat{v}(p)-\widehat{v}(q))+\widehat{v}(q)\\
 & \geq\sum_{\omega}\pi^{*}(\widehat{a}(p)\vert\omega)\mu^{*}(\omega)(\widehat{v}(p)-\widehat{v}(q))+\widehat{v}(q)=V((\pi^{*},S^{*})\vert\mu^{*}),
\end{align*}
where the first inequality follows jointly from (i) $\widehat{v}(\mu^{s})\geq\widehat{v}((\mu^{*})^{s})$
for all $s\in S^{*}$, (ii) $(\mu^{*})^{\hat{a}(p)}=p,$ and (iii)
$(\mu^{*})^{\hat{a}(q)}=q$; and the second follows jointly from $\widehat{v}(p)\geq\widehat{v}(q)$
and $\sum_{\omega}\pi^{*}(\widehat{a}(p)\vert\omega)\mu(\omega)\geq\sum_{\omega}\pi^{*}(\widehat{a}(p)\vert\omega)\mu^{*}(\omega).$
The case with $p<q$ is symmetric.

\subsection{Proof of Proposition \ref{Proposition: cutoff equivalence}}

Let $\mu$ be any prior and $(\pi,S)$ be any signal. First, if $\mu$
is ineligible then apparently $(\pi,S)$ is optimal if and only if
$(\pi,S)$ is equivalent to $(i(\mu),z(\mu))=(1,1)$. Second, if $\mu$
is eligible and $\text{supp}(\mu)\cap H$ is empty, then every signal
is both vacuously optimal and equivalent to $(\pi^{\mu},S^{\mu})$
with $(i(\mu),z(\mu))=(\max\text{supp}(\mu),0)$. Third, if $\mu$
is eligible and $\text{supp}(\mu)\cap H=\min H$ with $\Delta_{\min H}=0$,
then $(\pi,S)$ is optimal if and only if $(\pi,S)$ is equivalent
to $(\pi^{\mu},S^{\mu})$ with $(i(\mu),z(\mu))=(\max(\text{supp}(\mu)\cap L),0)$.
Suppose to the contrary $\sum_{i}\Delta_{i}\mu_{i}<0<\sum_{i\in H}\Delta_{i}\mu_{i}$
and define
\begin{align*}
\nu^{0} & \equiv\frac{1}{1-V((\pi,S)\vert\mu)}\sum_{s\in S^{0}(\mu)}\sum_{i}\pi_{i}(s)\mu_{i}\mu^{s}, & \nu^{1} & \equiv\frac{1}{V((\pi,S)\vert\mu)}\sum_{s\in S^{1}(\mu)}\sum_{i}\pi_{i}(s)\mu_{i}\mu^{s}.
\end{align*}
There are two cases to consider. If $\max\text{supp}(\nu^{0})>\min\text{supp}(\nu^{1})$
then $(\pi,S)$ is apparently not equivalent to $(\pi^{\mu},S^{\mu})$.
We claim further that $(\pi,S)$ is not optimal for $\mu$. Define
\begin{align*}
i & \equiv\max\{j\vert\sum_{l\geq j}\mu_{l}\geq V((\pi,S)\vert\mu)\}, & z & \equiv\frac{V((\pi,S)\vert\mu)-\sum_{j>i}\mu_{j}}{\mu_{i}},
\end{align*}
consider the cutoff signal $(\pi^{i,z},S^{i,z})$, and write $\tilde{\nu}^{1}$
for the posterior induced by its signal $1$. By construction, $\tilde{\nu}^{1}$
is both distinct from $\nu^{1}$ and first order stochastically dominates
$\nu^{1}$. Because $\Delta_{i}$ is strictly increasing in $i$,
we therefore have $\sum_{i}\Delta_{i}\tilde{\nu}_{i}^{1}>\sum_{i}\Delta_{i}\nu_{i}^{1}\geq0$.
If $z<1$, then $V((\pi^{i,z+\varepsilon},S^{i,z+\varepsilon})\vert\mu)>V((\pi,S)\vert\mu)$.
Otherwise, if $z=1$, then $V((\pi^{j,\varepsilon},S^{j,\varepsilon})\vert\mu)>V((\pi,S)\vert\mu)$
for $j\equiv\max\{k\vert k<i,k\in\text{supp}(\mu)\}$. In either case,
$(\pi,S)$ is not an optimal signal for prior $\mu$.

Alternatively, if $\min\text{supp}(\nu^{1})\geq\max\text{supp}(\nu^{0})$
then there exists a cutoff signal $(\pi^{i,z},S^{i,z})$ that induces
posteriors $\nu^{0},\nu^{1}$ and hence satisfies $V((\pi,S)\vert\mu)=V((\pi^{i,z},S^{i,z})\vert\mu)$.
First, if $V((\pi^{i,z},S^{i,z})\vert\mu)>V((\pi^{\mu},S^{\mu})\vert\mu)$
then $\sum_{i}\Delta_{i}\nu_{i}^{1}<0$, in contradiction to the definition
of $S^{1}(\mu)$. Accordingly, $(\pi^{\mu},S^{\mu})$ is optimal for
prior $\mu$. Second, if $V((\pi^{i,z},S^{i,z})\vert\mu)=V((\pi^{\mu},S^{\mu})\vert\mu)$
then $(\pi^{\mu},S^{\mu})$ necessarily induces posteriors $\nu^{0},\nu^{1}$.
Consequently,
\begin{align*}
\nu_{j}^{1} & =\frac{1}{V((\pi,S)\vert\mu)}\sum_{s\in S^{1}(\mu)}\pi_{j}(s)\mu_{j}=\frac{1}{V((\pi^{\mu},S^{\mu})\vert\mu)}\pi_{j}^{\mu}(1)\mu_{j}
\end{align*}
for all $j$. Because $V((\pi,S)\vert\mu)=V((\pi^{i,z},S^{i,z})\vert\mu)=V((\pi^{\mu},S^{\mu})\vert\mu)$,
we have for all states $j$ $\sum_{s\in S^{1}(\mu)}\pi_{j}(s)\mu_{j}=\pi_{j}^{\mu}(1)\mu_{j}$.
We conclude that if $(\pi,S)$ is optimal for prior $\mu$ then $(\pi,S)$
is equivalent to $(\pi^{\mu},S^{\mu})$.

\subsection{Proof of Theorem \ref{Theorem: negative result three states}}

With regards to the two trivial cases, if $\alpha=0$ then $\mu\equiv\delta(1)\in\mathcal{P}(\alpha)$.
Apparently, $(i(\mu),z(\mu),\Pi(\mu))=(1,0,0)$ and $((\pi^{\mu},S^{\mu}),\mu)$
are a saddle point of the sender's problem. Otherwise, if $\alpha=1$
then no prior $\mu$ in the identified set $\mathcal{P}(\alpha)$
is eligible. In turn, we have $(i(\mu),z(\mu),\Pi(\mu))=(1,1,1)$
by definition and $((\pi^{\mu},S^{\mu}),\mu)$ are a saddle point
for every $\mu\in\mathcal{P}(\alpha)$. Alternatively, suppose $\alpha\in(0,1)$.
Our approach is to show that there does not exist a signal with guarantee
at least $\alpha$. There are two cases to consider. First, suppose
$\vert L\vert=1$ and consider prior $\mu^{\lambda}$ with $\mu_{1}^{\lambda}\equiv1-\lambda c_{12}\alpha-(1-\lambda)c_{13}\alpha,\mu_{2}^{\lambda}\equiv\lambda c_{12}\alpha,\mu_{3}^{\lambda}\equiv(1-\lambda)c_{13}\alpha$.
For all $\lambda$, direct calculation yields
\begin{align*}
i(\mu^{\lambda}) & =1, & z(\mu^{\lambda}) & =\frac{\alpha-\lambda c_{12}\alpha-(1-\lambda)c_{13}\alpha}{1-\lambda c_{12}\alpha-(1-\lambda)c_{13}\alpha}, & \Pi(\mu^{\lambda}) & =\alpha.
\end{align*}
In turn, Proposition \ref{Proposition: cutoff equivalence} implies
for all $\lambda\in(0,1)$
\[
V((\pi,S)\vert\mu^{\lambda})=\alpha\implies\sum_{s\in S^{1}(\mu^{\lambda})}\pi_{1}(s)=z(\mu^{\lambda}),\sum_{s\in S^{1}(\mu^{\lambda})}\pi_{2}(s)=\sum_{s\in S^{1}(\mu^{\lambda})}\pi_{3}(s)=1.
\]
The necessary conditions for $\pi_{2}$ and $\pi_{3}$ imply the set
of incentive compatible messages $S^{1}(\mu^{\lambda})=\text{supp}(\pi_{2})\cup\text{supp}(\pi_{3})$
for prior $\mu^{\lambda}$ does not vary with $\lambda$. On the other
hand, because the map $\lambda\mapsto z(\mu^{\lambda})$ is injective,
the necessary condition for $\pi_{1}$ implies the set $S^{1}(\mu^{\lambda})$
does vary with $\lambda$. Accordingly, any particular signal $(\pi,S)$
satisfies all three necessary conditions for at most one $\lambda$
in $(0,1)$.

Alternatively, suppose $\vert L\vert>1$ and consider prior $\mu$
with $\mu_{1}\equiv1-c_{1n}\alpha,\mu_{n}\equiv c_{1n}\alpha$ and
prior $\nu$ with $\nu_{1}\equiv1-\alpha,\nu_{2}\equiv\alpha-c_{2n}\alpha,\nu_{n}\equiv c_{2n}\alpha$.
It can quickly be verified
\begin{align*}
(i(\mu),z(\mu),\Pi(\mu)) & =(1,\frac{\alpha-c_{1n}\alpha}{1-c_{1n}\alpha},\alpha), & (i(\nu),z(\nu),\Pi(\nu)) & =(1,0,\alpha).
\end{align*}
In turn, Proposition \ref{Proposition: cutoff equivalence} yields
the apparently inconsistent conditions
\begin{align*}
V((\pi,S)\vert\mu)\geq\alpha\implies & \sum_{s\in S^{1}(\mu)}\pi_{1}(s)>0,\sum_{s\in S^{1}(\mu)}\pi_{n}(s)=1\implies\text{supp}(\pi_{1})\cap\text{supp}(\pi_{n})\neq\emptyset.\\
V((\pi,S)\vert\nu)\geq\alpha\implies & \sum_{s\in S^{1}(\nu)}\pi_{1}(s)=0,\sum_{s\in S^{1}(\nu)}\pi_{n}(s)=1\implies\text{supp}(\pi_{1})\cap\text{supp}(\pi_{n})=\emptyset.
\end{align*}

\subsection{Section \ref{Section: reliable experiments}}
\begin{proof}[Proof of Proposition \ref{Experiments positive}]
Suppose $A$ is ordered. Let $\alpha\in[0,1]$ and let $\beta$ be
such that $\mathcal{Q}(\beta)$ is nonempty. Let $\mu^{*}$ be a minimizer
for $\Pi(\mu)$ on $\mathcal{P}(\alpha)\cap\mathcal{Q}(\beta)$, noting
(i) $\mathcal{P}(\alpha)\cap\mathcal{Q}(\beta)$ is the intersection
of compact sets and therefore compact and (ii) $\Pi$ is continuous
by the maximum theorem. Write $(i^{*},z^{*})\equiv(i(\mu^{*}),z(\mu^{*})),(\pi^{*},S^{*})\equiv(\pi^{\mu^{*}},S^{\mu^{*}})$
and choose $d\in\mathcal{D}$.

Consider movements in the positive direction $\mu^{*}+d$. Because
$A$ is ordered, we have
\begin{align*}
z^{*}d_{i^{*}}+\sum_{j>i^{*}}d_{j} & \geq\min(\sum_{j\geq i^{*}}d_{j},\sum_{j>i^{*}}d_{j})\geq0,\\
\Delta_{i^{*}}z^{*}d_{i^{*}}+\sum_{j>i^{*}}\Delta_{j}d_{j} & \geq\min(\sum_{j\geq i^{*}}\Delta_{j}d_{j},\sum_{j>i^{*}}\Delta_{j}d_{j})\geq0.
\end{align*}
Accordingly, $V((\pi^{*},S^{*})\vert\mu^{*}+d)\geq V((\pi^{*},S^{*})\vert\mu^{*})$. 

Consider instead movements in the negative direction $\nu^{*}\equiv\mu^{*}-d$.
There are three cases to consider. First, if $\nu^{*}$ is ineligible
then
\[
\sum_{i}\Delta_{i}\mu^{*}=\sum_{i}\Delta_{i}\nu_{i}^{*}+\sum_{i}\Delta_{i}d_{i}\geq\sum_{i}\Delta_{i}d_{i}\geq0
\]
and thus $\mu^{*}$ is also ineligible. In turn, we have $(i(\mu^{*}),z(\mu^{*}),\Pi(\mu^{*}))=(i(\nu^{*}),z(\nu^{*}),\Pi(\nu^{*}))=(1,1,1)$
and hence $V((\pi^{*},S^{*})\vert\nu^{*})=V((\pi^{*},S^{*})\vert\mu^{*})=1$. 

Second, if $\nu^{*}$ is eligible and $\mu^{*}$ is not then $\Pi(\mu^{*})=1$
and $\Pi(\nu^{*})<1$. Because $\mu^{*}$ is a minimizer, we conclude
$\nu^{*}\notin\mathcal{P}(\alpha)$. 

Third, if $\nu^{*}$ and $\mu^{*}$ are both eligible then $\sum_{j\geq i^{*}}\Delta_{j}\nu^{*}\leq\sum_{j\geq i^{*}}\Delta_{j}\mu_{j}^{*}<0,$
where the first inequality follows from the fact $A$ is ordered and
the second from the definition of $i^{*}=i(\mu^{*})$. In turn, it
follows from the definition of $i(\cdot)$ that $i(\nu^{*})\geq i^{*}$. 

If $i(\nu^{*})=i^{*}$ then by the definition of $\Pi(\cdot)$
\[
\Pi(\mu^{*})-\Pi(\nu^{*})=\sum_{j>i^{*}}d_{j}+\frac{1}{\vert\Delta_{i^{*}}\vert}\sum_{j>i^{*}}\Delta_{j}d_{j}.
\]
Accordingly, if $\max(\sum_{j>i^{*}}d_{j},\sum_{j>i^{*}}\Delta_{j}d_{j})>0$
then $\Pi(\nu^{*})<\Pi(\mu^{*})$ and hence $\nu^{*}\notin\mathcal{P}(\alpha)$.
Alternatively, if $\sum_{j>i^{*}}d_{j}=\sum_{j>i^{*}}\Delta_{j}d_{j}=0$
then $d_{i^{*}}=0$ because $\Delta_{i^{*}}<0$ and $\sum_{j\geq i^{*}}d_{j},\sum_{j\geq i^{*}}\Delta_{j}d_{j}\geq0$.
In that case, we apparently have $V((\pi^{*},S^{*})\vert\nu^{*})=V((\pi^{*},S^{*})\vert\mu^{*})$. 

Otherwise, if $j^{*}\equiv i(\nu^{*})>i^{*}$ then
\begin{align*}
\Pi(\nu^{*}) & =\nu_{j^{*}}^{*}z(\nu^{*})+\sum_{j>j^{*}}\nu^{*}<\sum_{j\geq j^{*}}\nu_{j}^{*}\leq\sum_{j\geq j^{*}}\mu_{j}^{*}\leq\mu_{i^{*}}^{*}z^{*}+\sum_{j>i^{*}}\mu_{j}^{*}=\Pi(\mu^{*}),
\end{align*}
where the strict inequality follows from $\nu_{j^{*}}^{*}z(\nu^{*})<v_{j^{*}}^{*}$
via $j^{*}\in\text{supp}(\nu^{*})$ and $z(\nu^{*})<1$; the first
weak inequality follows from our hypothesis that $A$ is ordered;
and the second weak inequality follows from our hypothesis that $j^{*}>i^{*}$.
We again conclude that $\nu^{*}\notin\mathcal{P}(\alpha)$.
\end{proof}
\begin{proof}[Proof of Lemma \ref{Lemma: opposite signs}]
Suppose $A$ is not ordered. By definition, there exists a state
$1\leq i\leq L+1$ and a direction $d\in\mathcal{D}$ such that at
least one of the quantities $\sum_{j\geq i}d_{j},$ $\sum_{j\geq i}d_{j}\Delta_{j}$
is strictly negative. If $i=L+1$ then our normalizations $\mathcal{D}$
imply directly that $\sum_{j\in H}d_{j}>0>\sum_{j\in H}d_{j}\Delta_{j}$.
Alternatively, suppose $\sum_{j\in H}d_{j},\sum_{j\in H}d_{j}\Delta_{j}\geq0$
and let $i$ be the largest state in $L$ with the property that either
$\sum_{j\geq i}d_{j}<0$ or $\sum_{j\geq i}d_{j}\Delta_{j}<0$. Because
$\sum_{j>i}d_{j},\sum_{j>i}d_{j}\Delta_{j}\geq0$, we have
\begin{align*}
 & \sum_{j\geq i}d_{j}<0\leq\sum_{j>i}d_{j}\implies d_{i}<0\implies d_{i}\Delta_{i}>0\implies\sum_{j\geq i}d_{j}\Delta_{j}>\sum_{j>i}d_{j}\Delta_{j}\geq0,\\
 & \sum_{j\geq i}d_{j}\Delta_{j}<0\leq\sum_{j\geq i}d_{j}\Delta_{j}\implies d_{i}\Delta_{i}<0\implies d_{i}>0\implies\sum_{j\geq i}d_{j}>\sum_{j>i}d_{j}\geq0.
\end{align*}
If $i>1$, there is nothing more to show. Otherwise, if $i=1$ then
$\sum_{j\geq i}d_{j}=\sum_{j\geq1}d_{j}=0$ and hence $\sum_{j\geq i}d_{j}\Delta_{j}<0$
by hypothesis. Because
\[
\sum_{j\geq1}d_{j}\Delta_{j}<0\leq\sum_{j>1}d_{j}\Delta_{j}\implies d_{1}\Delta_{1}<0\implies d_{1}>0\implies\sum_{j\geq2}d_{j}<0,
\]
this contradicts our choice of $i$ to satisfy $\sum_{j>i}d_{j}\geq0$
and we conclude $i>1$. Because we have already established the claim
for that case, this completes the proof.
\end{proof}
\begin{proof}[Proof of Lemma \ref{Lemma: rationalizing P}]
Let $i\in L$ and $z\in(0,1)$. Define $\Delta^ {}\equiv\sum_{j>i}\Delta_{j}$
and let $\lambda\in[0,1)$ satisfy $(1-\lambda)\Delta^ {}+\lambda(n-1)\Delta_{n}>0$.
Further, define constants
\begin{align*}
\kappa & \equiv\frac{1}{\vert\Delta_{i}\vert}((1-\lambda)\Delta^ {}+\lambda(n-1)\Delta_{n}), & x & \equiv\frac{z}{(n-1)z+\kappa}
\end{align*}
and consider prior $\mu$ with $\mu_{i}\equiv1-(n-1)x,\mu_{n}\equiv(1-\lambda)x+\lambda(n-1)x,$
and $\mu_{j}\equiv(1-\lambda)x$ for all $j\neq i,n$. First, $\mu$
has full support because $\lambda<1$. Second, direct substitution
yields $\Delta_{i}z\mu_{i}+\sum_{j>i}\Delta_{j}\mu_{j}=\Delta_{i}\kappa x+\vert\Delta_{i}\vert\kappa x=0$
and thus $(i(\mu),z(\mu))=(i,z)$.
\end{proof}
\begin{proof}[Proof of Lemma \ref{lemma: Rationalizing Q}]
Suppose prior $\mu$ has full support. If $z(\mu)\in(0,1)$ then
by definition $\sum_{i\geq i(\mu)}\Delta_{i}\mu_{i}<0<\sum_{i>i(\mu)}\Delta_{i}\mu_{i}$
and thus $i(\mu+\varepsilon d)=i(\mu-\varepsilon d)=i(\mu)$. In turn,
\begin{align*}
\Pi(\mu+\varepsilon d) & =\sum_{i>i(\mu)}(\mu_{i}+\varepsilon d_{i})(1+\frac{\Delta_{i}}{\vert\Delta_{i(\mu)}\vert}), & \Pi(\mu-\varepsilon d) & =\sum_{i>i(\mu)}(\mu_{i}-\varepsilon d_{i})(1+\frac{\Delta_{i}}{\vert\Delta_{i(\mu)}\vert})
\end{align*}
and thus $\text{sign}(\Pi(\mu+\varepsilon d)-\Pi(\mu))=-\text{sign}(\Pi(\mu-\varepsilon d))$.
If $\Pi(\mu+\varepsilon d)=\Pi(\mu-\varepsilon d)=\Pi(\mu)$ then
the proof is complete. Alternatively, if $\Pi(\mu+\varepsilon d)>\Pi(\mu)$
then consider cutoff signal $(\pi^{i,z},S^{i,z})$ with $i\equiv i(\mu)$,
$z\equiv(\mu_{i}+\varepsilon d_{i})^{-1}(\Pi(\mu)-\sum_{j>i}(\mu_{j}+\varepsilon d_{j}))$
and note
\begin{align*}
\sum_{j}\pi_{j}^{i,z}(1)(\mu_{j}+\varepsilon d_{j}) & =\Pi(\mu), & \sum_{j}\Delta_{j}\pi_{j}^{i,z}(1)(\mu_{j}+\varepsilon d_{j}) & =\Delta_{j}(\Pi(\mu)-\Pi(\mu+\varepsilon d))>0,
\end{align*}
where the strict inequality follows from $i\in L$ and $\Pi(\mu+\varepsilon d)>\Pi(\mu)$.
Accordingly, $V((\pi^{i,z},S^{i,z})\vert\mu+\varepsilon d)=\Pi(\mu)$
and $\mu+\varepsilon d\in\mathcal{P}(\Pi(\mu))$. The case with $\Pi(\mu-\varepsilon d)>\Pi(\mu)$
is substantively identical.
\end{proof}
\begin{proof}[Proof of Proposition \ref{Proposition: Negative result}]
Suppose $A$ is not ordered. Per Lemma \ref{Lemma: opposite signs},
there exists a direction $d\in\mathcal{D}$ and a state $i$ with
$1<i<\max L+1$ such that one of
\begin{align}
\sum_{j\geq i}\Delta_{j}d_{j} & >0>\sum_{j\geq i}d_{j},\label{one direction}\\
\sum_{j\geq i}d_{j} & >0>\sum_{j\geq i}\Delta_{j}d_{j}\label{the other direction}
\end{align}
obtains. There are two substantive cases. First, suppose $i\in L$
and consider subcase (\ref{one direction}). Choose $z\in(0,1)$ to
satisfy $zd_{i}+\sum_{j>i}d_{j}<0<z\Delta_{i}d_{i}+\sum_{j>i}\Delta_{j}d_{i}$
and recall Lemma \ref{Lemma: rationalizing P} provides a prior $\mu$
with full support and $(i(\mu),z(\mu))=(i,z)$. In turn, Proposition
\ref{Proposition: cutoff equivalence} implies $(\pi,S)$ satisfies
$V((\pi,S)\vert\mu)\geq V((\pi',S')\vert\mu)$ for all signals $(\pi',S')$
if and only if $(\pi,S)$ is equivalent to $(\pi^{\mu},S^{\mu})$.
Consider the experimenter's problem with data $(\alpha,\beta)\equiv(\Pi(\mu),A\mu)$
and let $(\pi,S)$ be any such signal. Per Lemma \ref{lemma: Rationalizing Q},
either $\mu+\varepsilon d\in\mathcal{P}(\alpha)$ or $\mu-\varepsilon d\in\mathcal{P}(\alpha)$.
If $\mu+\varepsilon d\in\mathcal{P}(\alpha)$ then $S^{0}(\mu)\subset S^{0}(\mu+\varepsilon d)$
by $\sum_{i}\Delta_{i}(\mu+\varepsilon d)_{i}^{s}\approx\sum_{i}\Delta_{i}\mu_{i}^{s}<0$
for all messages $s\in S^{0}(\mu)$ and thus
\[
V((\pi,S)\vert\mu+\varepsilon d)=\sum_{j}\sum_{s\in S^{1}(\mu+\varepsilon d)}\pi_{j}(s)(\mu_{j}+\varepsilon d_{j})\leq\sum_{j}\sum_{s\in S^{1}(\mu)}\pi_{j}(s)(\mu_{j}+\varepsilon d_{j}).
\]
Because our choice of $z$ implies
\begin{align*}
\sum_{j}\sum_{s\in S^{1}(\mu)}\pi_{j}(s)(\mu_{j}+\varepsilon d_{j}) & =\alpha+\varepsilon(zd_{i}+\sum_{j>i}d_{j})<\alpha,
\end{align*}
we have $V((\pi,S)\vert\mu+\varepsilon d)<\alpha$. If instead $\mu-\varepsilon d\in\mathcal{P}(\alpha)$,
then we have $S^{0}(\mu)\subset S^{0}(\mu-\varepsilon d)$ by the
same argument as above and thus
\[
V((\pi,S)\vert\mu-\varepsilon d)=\sum_{j}\sum_{s\in S^{1}(\mu-\varepsilon d)}\pi_{j}(s)(\mu_{j}-\varepsilon d_{j})\leq\sum_{j}\sum_{s\in S^{1}(\mu)}\pi_{j}(s)(\mu_{j}-\varepsilon d_{j}).
\]
Because our choice of $z$ implies
\begin{align*}
\sum_{j}\sum_{s\in S^{1}(\mu)}\Delta_{j}\pi_{j}(s)(\mu_{j}-\varepsilon d_{j}) & =-\varepsilon(z\Delta_{i}d_{i}+\sum_{j>i}\Delta_{j}d_{j})<0,
\end{align*}
there exists a message $s^{*}\in S^{0}(\mu-\varepsilon d)\cap S^{1}(\mu)$
with $\sum_{j}\pi_{j}(s^{*})(\mu_{j}-\varepsilon d_{j})>0$. Together,
we obtain
\[
V((\pi,S)\vert\mu-\varepsilon d)\leq\sum_{j}\sum_{s\in S^{1}(\mu)\setminus\{s^{*}\}}\pi_{j}(s)(\mu_{j}-\varepsilon d_{j})<\sum_{j}\sum_{s\in S^{1}(\mu)}\pi_{j}(s)(\mu_{j}-\varepsilon d_{j})\approx\alpha.
\]
This establishes the argument for case $i\in L$ in subcase (\ref{one direction}).
Subcase (\ref{the other direction}) is symmetric. 

Second, suppose instead $i=\max L+1$ and consider subcase (\ref{one direction}).
Choose $z\in(0,1)$ to satisfy $zd_{i-1}+\sum_{j\geq i}d_{j}<0<zd_{i-1}\Delta_{i-1}+\sum_{j\geq i}d_{j}\Delta_{j}$
and recall again that Lemma \ref{Lemma: rationalizing P} provides
a cutoff signal $(i-1,z)$ and prior with full support $\mu$ such
that $(i(\mu),z(\mu))=(i-1,z)$. At this point, The argument is the
same as in the $i\in L$ case, with subcase (\ref{the other direction})
again symmetric to its counterpart.
\end{proof}

\subsection{Section \ref{Section: Simple experiments}}
\begin{proof}[Proof of Proposition \ref{Proposition: simple character}]
We write $d^{ij}$ for the vector in $\mathbb{R}^{n}$ satisfying
$d_{i}^{ij}\equiv1$ and $d_{j}^{ij}\equiv-1$ with $0$ in all other
entries. Suppose $A$ is simple. Let $\{\Sigma_{1},...,\Sigma_{k}\}$
be the associated partition of $\Omega$, let $\tau\equiv(\tau_{1},...,\tau_{k})$
be an arbitrary selection from the correspondence $t\rightrightarrows\Sigma_{t}$,
define experiment $B\equiv(B_{1},...,B_{k})$ by $B_{t}\equiv A_{\tau_{t}}$,
and let $\mu$ be any prior with full support. First, we claim $A_{i}=B_{t}$
for all $i\in\Sigma_{t}$. To see why, let $i,j\in\Sigma_{t}$ be
distinct and define $\nu\equiv\mu+\varepsilon d^{ij}$. Because $A$
is simple and $\nu(\Sigma_{t})=\mu(\Sigma_{t})$ for all $t$, we
have $\nu\in\mathcal{Q}(A\mu)$ and hence $(\nu-\mu)\in\mathcal{D}$.
Accordingly, $Ad^{ij}=A_{i}-A_{j}=0$, as claimed. Second, we claim
$B$ has full rank. To see why, suppose to the contrary that $Bx=0$
for some $(x_{1},...,x_{k})\neq(0,...,0)$ and let $(y_{1},...,y_{n})$
be any vector satisfying $\sum_{i\in\Sigma_{j}}y_{i}=x_{j}$ for all
indices $j$. Because $Ay=Bx=0$, we have $\nu\equiv(\mu+\varepsilon y)\in\mathcal{Q}(A\mu)$.
Further, because there exists index $t$ with $\sum_{i\in\Sigma_{t}}y_{i}=x_{t}\neq0$,
we have $\nu(\Sigma_{t})\neq\mu(\Sigma_{t})$. This contradicts our
hypothesis that $A$ is simple, and we conclude that $B$ indeed has
full rank.

Conversely, suppose there exists a partition $\{\Sigma_{1},...,\Sigma_{k}\}$
of $\Omega$ and an experiment $B\equiv(B_{1},...,B_{k})$ such that
(i) $A_{i}=B_{t}$ for all $i\in\Sigma_{t}$ and (ii) $B$ has full
rank. Let $\mu,\nu\in\Delta(\Omega)$, define $(x_{1},...,x_{k})\equiv(\mu(\Sigma_{1})-\nu(\Sigma_{1}),...,\mu(\Sigma_{k})-\nu(\Sigma_{k}))$,
and note that $\nu\in\mathcal{Q}(A\mu)\iff A(\mu-\nu)=Bx=0\iff x=0$.
Accordingly, $A$ is simple. 
\end{proof}
\begin{proof}[Proof of Theorem \ref{Theorem: which simple experiments are ordered}]
We again write $d^{ij}$ for the vector in $\mathbb{R}^{n}$ satisfying
$d_{i}^{ij}\equiv1$ and $d_{j}^{ij}\equiv-1$ with $0$ in all other
entries. Our approach is to show that $A$ is ordered if and only
if it satisfies one of the two criteria in the statement of the result.
Theorem \ref{Theorem: reliable iff ordered} then implies the result.

Suppose $A$ separates every low state and choose $d\in\mathcal{D}$.
Because $\sum_{i\in L}d_{i}=\sum_{i\in L}0=0$ by hypothesis and $\sum_{i}d_{i}=0$
by definition, we have $\sum_{i\in H}d_{i}=0$. By normalization,
then, we also have $\sum_{i\in H}\Delta_{i}d_{i}\geq0$. It follows
quickly from $d_{i}=0$ for all $i\in L$ that $\sum_{j\geq i}d_{i},\sum_{j\geq i}\Delta_{i}d_{i}\geq0$
for all $j=1,...,\max L+1$, and we conclude that $A$ is ordered.

Alternatively, suppose that $A$ pools exactly one pair of states
$(i,j)\in H\times L$ and separates all other states. It is immediate
that (i) $\mathcal{D}=\{\kappa d^{ij}\vert\kappa\geq0\}$ and (ii)
$\kappa d^{ij}$ satisfies both the frequency and the incentive compatibility
criteria for all $\kappa\geq0$. Accordingly, $A$ is ordered.

Conversely, suppose $A$ satisfies neither property. There are four
subcases. First, if there exist states $j<i\leq\max L$ with $i,j\in\Sigma_{t}$,
then $d^{ij}\in\mathcal{D}$. Because $\sum_{i'\geq i}\Delta_{i'}d_{i'}<0$,
$A$ is not ordered. Second, if there exist distinct states $i\leq\max L<j<k$
then $\mathcal{D}$ contains the direction $d\equiv d^{jk}+\varepsilon d^{ji}$.
Because $\sum_{\tau\geq j}\Delta_{\tau}d_{\tau}=(1+\varepsilon)\Delta_{j}-\Delta_{k}<0$,
$A$ is not ordered. Third, if there exist states $i<j\leq\max L<k<l$
and indices $t,t'$ such that $\Sigma_{t}=\{i,k\},\Sigma_{t'}=\{j,l\}$,
then $\mathcal{D}$ contains the direction $d\equiv d^{jl}+(1+\varepsilon)d^{ki}$.
Because $\sum_{\tau\in H}\Delta_{\tau}d_{\tau}=(1+\varepsilon)\Delta_{k}-\Delta_{l}<0$,
$A$ is not ordered. Fourth, and finally, if there exist states $i,j,k,l$
with $i\leq\max L<j,$ $\max L<k<l$ and indices $t,t'$ with $\Sigma_{t}=\{i,j\}$
and $k,l\in\Sigma_{t'}$ then $\mathcal{D}$ contains the direction
$d\equiv d^{sk}+\varepsilon d^{ji}$. Because $\sum_{\tau\in H}\Delta_{\tau}d_{\tau}=\varepsilon\Delta_{j}+(\Delta_{k}-\Delta_{l})<0$,
$A$ is not ordered.
\end{proof}

\subsection{Section \ref{Section: sequential experiments}}
\begin{proof}[Proof of Lemma \ref{Lemma: full support (i,z) sequential}]
Write $i^{0}$ for the state $i$ with $\Delta_{i^{0}}=0$ and recall
that Lemma \ref{Lemma: rationalizing P} provides a fully supported
prior $\mu$ with $(i(\mu),z(\mu))=(i,z)$, $\Delta_{i}\mu_{i}z+\sum_{j>i}\Delta_{j}\mu_{j}=0$.
Because $\mu$ has full support and $z>0$, we also have $\Pi(\mu)>0$.
If $\Pi(\mu)\leq\alpha$, define 
\[
\nu\equiv\frac{1-\alpha}{1-\Pi(\mu)}\mu+\frac{\alpha-\Pi(\mu)}{1-\Pi(\mu)}\delta(\omega_{i^{0}}).
\]
Alternatively, if $\Pi(\mu)>\alpha$ define 
\[
\nu\equiv\frac{\alpha}{\Pi(\mu)}\delta(\omega_{1})+\frac{\Pi(\mu)-\alpha}{\Pi(\mu)}\mu.
\]
In both cases, we have $\nu_{i}z+\sum_{j>i}\nu=\alpha$ and $\Delta_{i}\nu_{i}z+\sum_{j>i}\Delta_{j}\nu_{j}=0.$
Because $\nu$ inherits full support from $\mu$, it meets all the
criteria of the claim.
\end{proof}
\begin{proof}[Proof of Theorem \ref{Theorem: sequential esxperimentation}]
Fix action distribution $\alpha$. First, if $A$ is ordered then
Theorem \ref{Theorem: reliable iff ordered} implies $A$ is reliable.
Because reliability implies sequential reliability, $A$ is sequentially
reliable. Conversely, if $A$ is not ordered then the non-reliability
of $A$ follows from Lemma \ref{Lemma: opposite signs}, Lemma \ref{lemma: Rationalizing Q},
and Lemma \ref{Lemma: full support (i,z) sequential} from the same
argument given in the proof of Proposition \ref{Proposition: Negative result},
with Lemma \ref{Lemma: full support (i,z) sequential} as a surrogate
for Lemma \ref{Lemma: rationalizing P}.
\end{proof}

\end{document}